\newcommand{\AlgorithmI}[1]{{\textcolor[named]{RedViolet}{\texttt{\bf{#1}}}}}
\newcommand{\Algorithm}[1]{{\AlgorithmI{#1}\index{algorithm!#1@{\AlgorithmI{#1}}}}}
\definecolor{darkblue}{rgb}{0,0,0.35}
\providecommand{\ComplexityClass}[1]{{{\textcolor[named]{OliveGreen}{%
      \textsc{#1}}}}}
\providecommand{\NPHard}{{\ComplexityClass{NP-Hard}}%
   \index{NP!hard}\xspace}
\newtheorem{theorem}{Theorem}[section]
\newtheorem{lemma}[theorem]{Lemma}
\newtheorem{datastructure}[theorem]{Data-Structure}
\newtheorem{observation}[theorem]{Observation}
\newtheorem{remark}[theorem]{Remark}%
\newtheorem{defn}[theorem]{Definition}
\theoremstyle{nonumberplain}%
\newtheorem{proof}{{P}roof:}%
\newcommand{\atgen}{\symbol{'100}}
\newcommand{\SarielThanks}[1]{\thanks{Department of Computer
      Science; 
      University of Illinois; 
      201 N. Goodwin Avenue;
      Urbana, IL, 61801, USA;
      {\tt sariel\atgen{}uiuc.edu}; {\tt
         {{http://www.uiuc.edu/\string~sariel/}}.} #1}}
\providecommand{\si}[1]{#1}
\newcommand{\seclab}[1]{\label{sec:#1}}
\newcommand{\secref}[1]{Section~\ref{sec:#1}}
\newcommand{\secrefpage}[1]{Section~\ref{sec:#1}%
           $_\text{p\pageref{sec:#1}}$}
\newcommand{\thmlab}[1]{{\label{theo:#1}}}
\newcommand{\thmref}[1]{Theorem~\ref{theo:#1}}
\newcommand{\thmrefpage}[1]{Theorem~\ref{theo:#1}
           $_\text{p\pageref{theo:#1}}$}
\newcommand{\remlab}[1]{\label{rem:#1}}
\newcommand{\remrefpage}[1]{Remark~\ref{rem:#1}%
   $_\text{p\pageref{rem:#1}}$}
\newcommand{\lemlab}[1]{\label{lemma:#1}}
\newcommand{\lemref}[1]{Lemma~\ref{lemma:#1}}
\newcommand{\lemrefpage}[1]{Lemma~\ref{lemma:#1}%
   $_\text{p\pageref{lemma:#1}}$}
\newcommand{\figlab}[1]{\label{fig:#1}}
\newcommand{\figref}[1]{Figure~\ref{fig:#1}}
\newcommand{\deflab}[1]{\label{def:#1}}
\newcommand{\defref}[1]{Definition~\ref{def:#1}}
\newcommand{\eqlab}[1]{\label{equation:#1}}
\newcommand{\Eqref}[1]{Eq.~(\ref{equation:#1})}
\newcommand{\Eqrefpage}[1]{Eq.~(\ref{equation:#1})%
   $_\text{p\pageref{equation:#1}}$}
\newcommand{\dsref}[1]{Data-Structure~\ref{ds:#1}}
\newcommand{\dslab}[1]{\label{ds:#1}}
\newcommand{\obslab}[1]{\label{observation:#1}}
\newcommand{\obsref}[1]{Observation~\ref{observation:#1}}
\newcommand{\itemlab}[1]{\label{item:#1}}
\newcommand{\itemref}[1]{(\ref{item:#1})}
\newcommand{\cardin}[1]{\left| {#1} \right|}
\newcommand{\aftermathA}{\par\vspace{-\baselineskip}}
\newcommand{\pbrc}[2][\!\!]{#1\left[ {#2} \MakeBig \right]}
\newcommand{\ts}{\hspace{0.6pt}}
\newcommand{\IncGraphPage}[4][]{ \ifpdf
   \IncludeGraphics[page=#4,#1]{\File{#2/#3}}
   \else
   \IncludeGraphics[#1]{\File{#2/#3/#3_#4}}
   \fi } \newcommand{\IncludeGraphics}[2][]{%
   \includegraphics[#1]{#2}
}
\newcommand{\File}[1]{#1}
\newcommand{\etal}{\textit{et~al.}\xspace}
\definecolor{blue25}{rgb}{0,0,0.55}
\newcommand{\emphic}[2]{%
     \textcolor{blue25}{%
         \textbf{\emph{#1}}}%
         \index{#2}}
\newcommand{\emphi}[1]{\emphic{#1}{#1}}
\newcommand{\Frechet}{Fr\'{e}c{h}e{}t\xspace}%
\newcommand{\widthX}[2]{{\mathrm{width}}_{#1}\pth{#2}}
\newcommand{\distFr}[2]{\mathsf{d}_{\EuScript{F}}\pth{#1, #2}}
\newcommand{\distSFr}[3]{\mathsf{d}_{\EuScript{S}}\pth{#1,#2, #3}}
\newcommand{\distoSFr}[2]{\mathsf{d}_{\EuScript{S}}\pth{#1,#2}}
\newcommand{\distCmd}[1]{\left\| {#1} \right\|}
\newcommand{\distX}[2]{\distCmd{#1 - #2}}
\newcommand{\Reached}{R}
\newcommand{\Queue}{Q}
\newcommand{\pnt}{\mathsf{p}}
\newcommand{\pntA}{\mathsf{q}}
\newcommand{\pntB}{\mathsf{r}}
\newcommand{\pntC}{\mathsf{u}}
\newcommand{\pntD}{\mathsf{v}}
\newcommand{\pntF}{\mathsf{f}}
\newcommand{\pntG}{\mathsf{g}}
\newcommand{\x}{x}
\newcommand{\y}{y}
\newcommand{\ACoord}[1]{\x_{#1}}
\newcommand{\BCoord}[1]{\y_{#1}}
\newcommand{\xPnt}{\ACoord{\pnt}}
\newcommand{\yPnt}{\BCoord{\pnt}}
\newcommand{\xPntA}{\ACoord{\pntA}}
\newcommand{\yPntA}{\BCoord{\pntA}}
\newcommand{\xPntB}{\ACoord{\pntB}}
\newcommand{\yPntB}{\BCoord{\pntB}}
\newcommand{\opt}{\mathrm{opt}}
\newcommand{\IEdge}[2]{\mathcal{I}_{\mathrm{edge}}\pth{#1, #2}}
\newcommand{\SimplifyX}[1]{#1}
\newcommand{\SubCurvifyY}[1]{{#1}}
\newcommand{\cQ}{\mathsf{Q}}
\newcommand{\cS}{\mathsf{S}}
\newcommand{\cZ}{{Z}}
\newcommand{\cXBase}{X} 
\newcommand{\cYBase}{Y} 
\newcommand{\cX}{\SimplifyX{\cXBase}}
\newcommand{\cXOrig}{\pmb{\cXBase}}
\newcommand{\subcX}[1]{\SubCurvifyY{\cX}_{#1}}
\newcommand{\cY}{\SimplifyX{\cYBase}}
\newcommand{\cYOrig}{\pmb{\cYBase}}
\newcommand{\subsegY}[1]{\overline{\cY}_{#1}}
\newcommand{\linelab}[1]{\label{line:#1}}
\newcommand{\lineref}[1]{Line~\ref{line:#1}}
\newcommand{\seg}{\mathsf{h}}
\newcommand{\segA}{\mathsf{u}}
\newcommand{\FDleqI}[2]{\ensuremath{\EuScript{R}^{#1}_{\leq #2}}}
\newcommand{\ShowComments}[1]{}
\newcommand{\FullFDleqC}{\EuScript{D}}
\newcommand{\FullFDleq}[1]{\FullFDleqC_{\leq #1}}
\newcommand{\NleqC}{\EuScript{N}}
\newcommand{\Nleq}[1]{\NleqC_{\leq #1}}
\newcommand{\Graph}{\mathsf{G}}
\providecommand{\pth}[2][\!]{#1\left({#2}\right)}
\providecommand{\brc}[1]{\left\{ {#1} \right\}}
\providecommand{\sep}[1]{\,\left|\, {#1} \MakeBig\right.}
\providecommand{\MakeBig}{\rule[-.2cm]{0cm}{0.4cm}}
\newcommand{\constA}{c_1}
\newcommand{\constB}{c_2}
\newcommand{\constC}{c_3}
\newcommand{\PntSet}{{\mathsf{P}}}
\newcommand{\sRadius}{\mu}
\newcommand{\CellXY}[2]{C_{#1,#2}}
\providecommand{\MakeBig}{\rule[-.2cm]{0cm}{0.4cm}}
\providecommand{\MakeSBig}{\rule[0.0cm]{0.0cm}{0.35cm}} 
\newcommand{\IVCellXY}[2]{I^v_{#1,#2}}
\newcommand{\IHCellXY}[2]{I^h_{#1,#2}}
\newcommand{\RVCellXY}[2]{R^v_{#1,#2}}
\newcommand{\RHCellXY}[2]{R^h_{#1,#2}}
\providecommand{\eps}{{\varepsilon}}%
\renewcommand{\Re}{{\rm I\!\hspace{-0.025em} R}}
\newcommand{\NULL}{\ensuremath{\mathsf{null}}}
\providecommand{\ceil}[1]{\left\lceil {#1} \right\rceil}
\providecommand{\floor}[1]{\left\lfloor {#1} \right\rfloor}
\newcommand{\density}{\phi}
\renewcommand{\th}{t{}h\xspace}
\newcommand{\concatOp}{\oplus}
\newcommand{\pbrcZ}[1]{\left[ \MakeSBig {#1} \right]}
\newcommand{\SC}[3]{{#1}\!\!\ts\ts%
   \left\langle  #2, #3 \right\rangle}
\newcommand{\SubCrv}[3]{{#1}\!\!\ts\ts\pbrcZ{#2, #3}}
\newcommand{\ScutCrv}[3]{\overline{#1}\!\pbrcZ{#2, #3}}
\newcommand{\SpCrv}[2]{{#1}_{#2}}
\newcommand{\Tree}{T}
\newcommand{\node}{\nu}
\newcommand{\VertexSet}[1]{{V}\pth{#1}}
\newcommand{\VtxSet}{{V}}
\newcommand{\cNode}[1]{\mathrm{c{}r}\pth{#1}}
\newcommand{\kNode}[2]{\mathrm{s{}i{}m{}p{}l_{#1}}\pth{#2}}
\newcommand{\segNode}[1]{\mathrm{s{e}g}\pth{#1}}
\newcommand{\frNode}[1]{\mathrm{d}_{#1}}
\newcommand{\lNode}[1]{\mathrm{l}\pth{#1}}
\newcommand{\dEdge}[2]{{#1 \rightarrow #2}}
\newcommand{\Na}{{\rm I\!\hspace{-0.025em} N}}
\newcommand{\simpX}[1]{\mathrm{s{i}m{p}l}\pth{#1}}
\newcommand{\spineX}[1]{\mathrm{s{}p{}i{}n{}e}\pth{#1}}
\newcommand{\tunnelTest}{\Algorithm{tunnel}\xspace}
\newcommand{\deciderFr}{\Algorithm{decider}\xspace}
\newcommand{\DeciderFr}{\Algorithm{Decider}\xspace}
\newcommand{\edge}{\mathsf{e}}
\newcommand{\frVal}[1]{\mathsf{d}\pth{#1}}
\newcommand{\constSC}{3}
\newcommand{\constDec}{4}
\newcommand{\tunnel}{tunnel\xspace}
\newcommand{\tunnels}{tunnels\xspace}
\newcommand{\Tunnels}{Tunnels\xspace}
\newcommand{\tunnelLtr}{\mathsf{\tau}}
\newcommand{\xtunnel}[2]{\tunnelLtr\pth{ #1,  #2}}
\newcommand{\xTunnels}[4]{\mathcal{T}\pth{#1, #2, #3, #4}}
\newcommand{\xTunnelsLeq}[5]{\mathcal{T}_{\leq #5}\pth{#1, #2, #3, #4}}
\newcommand{\scPrice}[2]{{\mathrm{p{r}c}}\pth{\xtunnel{#1}{#2}}}
\newcommand{\rCreate}[4]{\mathrm{r}_{\mathrm{crt}}\pth{#1, #2, #3,#4}}
\newcommand{\rMinCreate}[4]{\mathrm{r}_{\min}\pth{#1, #2, #3,#4}}
\newcommand{\bCanonical}[4]{\mathrm{\tau}_{\min}\pth{#1, #2, #3,#4}}
\newcommand{\priceX}[1]{\mathrm{p{r}c}\pth{#1}}
\newcommand{\Interval}{\mathcal{I}}
\newcommand{\distSet}[2]{\mathsf{d}\pth{#1, #2}}
\newcommand{\AlphaBetaCovered}{$(\alpha,\beta)$-covered\xspace}
\newcommand{\wu}{\widehat{u}}
\newcommand{\wv}{\widehat{v}}
\newcommand{\permut}[1]{\left\langle {#1} \right\rangle}
\newcommand{\vtxA}{u}
\newcommand{\vtxB}{v}
\newcommand{\vtxC}{w}
\newcommand{\FigWidth}{0.9\textwidth}
\newcommand{\RTBoundedK}{O\pth{ c^2 k n \log^3 n }}%
\newcommand{\RTUnbounded}{O\pth{ c^2 n \log^2 n \pth{\log n
                                 +\eps^{-2d}\log(1/\eps)} }}%
\newcommand{\const}{\ensuremath{\frac{11}{10}}}
\newcommand{\constN}{\ensuremath{({11}/{10})}}
\newcommand{\constR}{\ensuremath{\frac{10}{11}}}
\newcommand{\constEps}{\ensuremath{1/10}}
\newcommand{\weightIdx}[2]{\ensuremath{\phi_{#1}\pth{#2}}}
\newcommand{\weight}[1]{\ensuremath{\phi\pth{\vtx_{#1}}}}
\newcommand{\weightCX}[1]{\ensuremath{\phi}_{#1}}
\newcommand{\vtx}{\vtxB}
\newcommand*\circled[1]{\footnotesize\tikz[baseline=(char.base)]{
    \node[shape=circle,draw,inner sep=0.2pt] (char) {#1};}}
\newcommand{\segX}[2]{{#1}{#2}}
\newcommand{\DSX}[1]{\mathcal{D}_{#1}}
\newcommand{\prevX}[1]{{#1}^{-}}
\newcommand{\nextX}[1]{{#1}^{+}}
\newcommand{\heap}{\mathcal{H}}
\newcommand{\SNumVertices}[1]{2#1 - 1}
\newcommand{\cYk}{\cY_k^*}
\numberwithin{figure}{section}
\numberwithin{equation}{section}
\newcommand{\postdecessor}{successor\xspace}
\newcommand{\GridCompl}{\ensuremath{\chi}}
\newcommand{\NgridCompl}{\ensuremath{\eps^{-d}\log(1/\eps)}}
\newcommand{\vrestricted}{vertex-restricted}
\newcommand{\scunrestricted}{unrestricted}
\newcommand{\unrestricted}{continuous}
\newcommand{\asymmetric}{directed}
\newcommand{\CodeComment}[1]{\textcolor{blue}{\texttt{#1}}}
\newcommand{\tunnelPrice}{\Algorithm{price}\xspace} 
\newcommand{\matching}{matching\xspace}
\newcommand{\cubcZ}{\widehat{\cZ}}
\newcommand{\Eqlab}[1]{\label{equation:#1}}
\newcommand{\gridMin}{\ensuremath{\eps L/4}}
\begin{document}

\title{Jaywalking your Dog --\\
   Computing the \Frechet Distance with Shortcuts%
   \thanks{%
      A preliminary version of this paper appeared in {\em \si{Proc.}
         23rd ACM-SIAM \si{Sympos.} Discrete Algorithms}, pages
      318--337, 2012.  The latest full version of this paper is
      available online \cite{dh-jydfd-11}. %
   }}

\author{Anne Driemel%
   \thanks{Department of Information and Computing Sciences; Utrecht
      University; The Netherlands;
      \texttt{an{}n{}e}\hspace{0cm}\texttt{\atgen{}\si{cs.uu.nl}}.
      This work has been supported by the Netherlands
      \si{Organisation} for Scientific Research (\si{NWO}) under
      \si{RIMGA} (Realistic Input Models for Geographic
      Applications).} %
   \and%
   Sariel Har-Peled%
   \SarielThanks{Work on this paper was partially supported by NSF AF
      award CCF-0915984.}  }

\date{\today}

\maketitle

\setfnsymbol{stars}

\begin{abstract}
    The similarity of two polygonal curves can be measured using the
    \Frechet distance. We introduce the notion of a more robust
    \Frechet distance, where one is allowed to shortcut between
    vertices of one of the curves. This is a natural approach for
    handling noise, in particular batched outliers. We compute a
    $(3+\eps)$-approximation to the minimum \Frechet distance over all
    possible such shortcuts, in near linear time, if the curve is
    $c$-packed and the number of shortcuts is either small or
    unbounded.
    
    To facilitate the new algorithm we develop several new tools:
    \begin{compactenum}[\qquad(A)] \item A data structure for
        preprocessing a curve (not necessarily $c$-packed) that
        supports $(1+\eps)$-approximate \Frechet distance queries
        between a subcurve (of the original curve) and a line segment.
        
	\item A near linear time algorithm that computes a permutation
        of the vertices of a curve, such that any prefix of
        $\SNumVertices{k}$ vertices of this permutation, form an
        optimal approximation (up to a constant factor) to the
        original curve compared to any polygonal curve with $k$
        vertices, for any $k > 0$.
        
	\item A data structure for preprocessing a curve that supports
        approximate \Frechet distance queries between a subcurve and
        query polygonal curve. The query time depends quadratically on
        the complexity of the query curve, and only (roughly)
        logarithmically on the complexity of the original curve.
    \end{compactenum}
    To our knowledge, these are the first data structures to support
    these kind of queries efficiently.
\end{abstract}

\section{Introduction}
\seclab{introduction}

Comparing the shapes of polygonal curves -- or sequenced data in
general -- is a challenging task that arises in many different
contexts. The \Frechet{} distance and its variants (e.g., dynamic
time-warping \cite{kp-sudtw-99}) have been used as similarity measures
in various applications such as matching of time series in databases
\cite{kks-osmut-05}, comparing melodies in music information retrieval
\cite{sgh-csi-08}, matching coastlines over time
\cite{mdbh-cmpdfd-06}, as well as in map-matching of vehicle tracking
data \cite{bpsw-mmvtd-05, wsp-anmms-06}, and moving objects analysis
\cite{bbg-dsfm-08, bbgll-dcpcs-08}.  Informally, the \Frechet distance
between two curves is defined as the maximum distance a point on the
first curve has to travel as this curve is being continuously deformed
into the second curve.  Another common description uses the following
``leash'' metaphor: Imagine traversing the two curves simultaneously
and at each point in time the two positions are connected by a leash
of a fixed length. During the traversal you can vary the speeds on
both curves independently, but not walk backwards. The \Frechet
distance corresponds to the minimum length of a leash that permits
such a traversal.

\parpic[r]{\IncGraphPage[width=0.3\linewidth]{figs}{\si{hiking}}{1}}

The \Frechet distance captures similarity under small non-affine
distortions and for some of its variants also spatio-temporal
similarity \cite{mssz-fdsl-2011}.  However, it is very sensitive to
local noise, which is frequent in real data.  Unlike similarity
measures such as the root-mean-square deviation (RMSD), which averages
over a set of similarity values, and dynamic time warping, which
minimizes the sum of distances along the curves, the \Frechet distance
is a so-called \emph{bottleneck measure} and can therefore be affected
to an extent which is generally unrelated to the relative amount of
noise across the curves.  In practice, curves might be generated by
physical tracking devices, such as GPS, which is known to be
inaccurate when the connection to the satellites is temporarily
disturbed due to atmospheric conditions or reflections of the
positioning signal on high buildings. Such inaccurate data points are
commonly referred to as ``outliers''.  Note that outliers come in
batches if they are due to such a temporary external condition.
Similarly, in computer vision applications, the silhouette of an
object could be partially occluded, and in sound recordings, outliers
may be introduced due to background sounds or breathing.  Detecting
outliers in time series has been studied extensively in the literature
\cite{mmy-rs-06}.  One may also be interested in outliers as a
deviation from a certain expected behavior or because they carry some
meaning.  It could be, for instance, that trajectories of two hikers
deviate locally, because one hiker chose to take a detour to a
panoramic view point, see the example in the figure above. Outlier
detection is inherently non-trivial if not much is known about the
underlying probability distributions and the data is sparse
\cite{ay-eeaod-05}.  We circumvent this problem in the computation of
the \Frechet distance by minimizing over all possibilities for
outlier-removal.  In a sense, our approach is similar to computing a
certain notion of partial similarity.  Unlike other partial distance
measures, the distance measure we propose is parameter-free. For
comparison, in the \emph{partial \Frechet distance}, as it was studied
by Buchin \etal \cite{bbw-eapcm-09}, one is interested in maximizing
the portions of the curves which can be matched within a certain
\Frechet distance (the parameter). In this case, the dissimilar
portions of the curves are ignored. In our case, they are replaced by
\emph{shortcuts}, which have to be matched under the \Frechet
distance.

\paragraph{The task at hand.} We are given two polygonal curves $\cX$
and $\cY$ in $\Re^d$, which we perceive as a sequence of linearly
interpolated measurement points. We believe that $\cY$ is similar to
$\cX$ but it might contain considerable noise that is occluding this
similarity. That is, it might contain erroneous measurement points
(outliers), which need to be ignored when assessing the similarity.
We would like to apply a few edit operations to $\cY$ so that it
becomes as similar to $\cX$ as possible. In the process hopefully
removing the noise in $\cY$ and judging how similar it really is to
$\cX$.  To this end, we -- conceptually -- remove subsequences of
measurement points, which we suspect to be outliers, and minimize over
all possibilities for such a removal. This is formalized in the
shortcut \Frechet distance.

\paragraph{Shortcut \Frechet distance.}
A shortcut replaces a subcurve between two vertices by a straight
segment that connects these vertices.  The part being shortcut is not
ignored, but rather the new curve with the shortcuts has to be matched
entirely to the other curve under the \Frechet distance.  As a
concrete example, consider the figure below.  The \Frechet distance
between $\cX$ and $\cY$ is quite large, but after we shortcut the
outlier ``bump'' in $\cY$, the resulting new curve $\cZ$ has a
considerably smaller \Frechet distance to $\cX$.  We are interested in
computing the minimum such distance allowing an unbounded number of
shortcuts.

Naturally, there are many other possibilities to tackle the task at
hand, for example:
\begin{compactenum}[(i)]
    \item bounding the number of shortcuts by a parameter $k$,
    \item allowing shortcuts on both curves,
    \item allowing only shortcuts between vertices that are close-by
    along the curve,
    \item ignoring the part being shortcut and maximizing the length
    of the remaining portions,
    \item allowing shortcuts to start and end anywhere along the
    curve,
    \item allowing curved shortcuts, etc.
\end{compactenum}

%

\parpic[r]{%
   \begin{minipage}{0.25\linewidth}%
       \includegraphics[page=2,width=.9\linewidth]{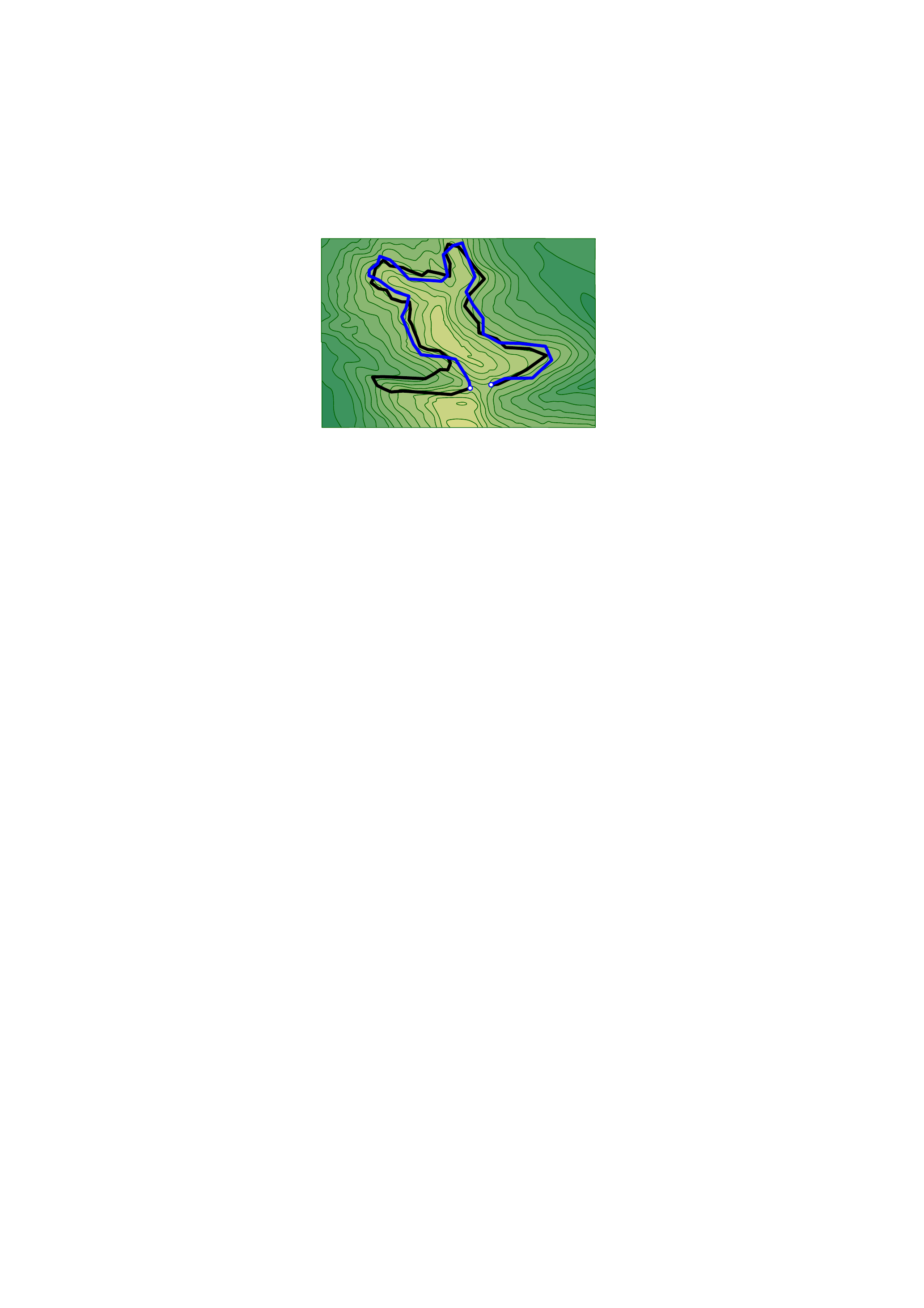}
       \figlab{shortcut:example}%
   \end{minipage}}

If one is interested in (iii) then the problem turns into a
map-matching problem, where the start and end points are fixed and the
graph is formed by the curve and its eligible shortcuts. For this
problem, results can be found in the literature \cite{cdgnw-amm-11,
   aerw-mpm-03}. A recent result by Har-Peled and Raichel
\cite{hr-fdre-11} is applicable to the variant where one allows such
shortcuts on both curves, i.e. (ii)+(iii).  The version in (iv) has
been studied under the name of \emph{partial \Frechet distance}
\cite{bbw-eapcm-09}.

In this paper, we concentrate on the \asymmetric{} \vrestricted{}
shortcut \Frechet distance (see \secref{shortcut:frechet} for the
exact definition) because computing it efficiently seems like a first
step in understanding how to solve some of the more difficult
variants, e.g.,~(v).  Surprisingly, computing this simpler version of
the shortcut \Frechet distance is already quite challenging,
especially if one is interested in an efficient algorithm.  A more
recent result by Buchin \etal~\cite{bds-jnp-13, d-raapg-13} shows that
computing the shortcut \Frechet distance exactly is weakly \NPHard for
variant (v), where we allow shortcuts to start and end anywhere along
the curve.  Furthermore, our algorithms can be extended to
variant~(i), i.e., where at most $k$ shortcuts are allowed, see
\remrefpage{k:shortcut}.

Note that allowing shortcuts on both curves does not always yield a
meaningful measure, {especially if shortcuts on both curves may be
   matched to each other.  In particular, if one of the two curves is
   more accurately sampled and can act as a model curve, allowing
   shortcuts on only one of the two curves seems reasonable. }

\paragraph{Input model.} 
A curve $\cY$ is \emphi{$c$-packed} if the total length of $\cY$
inside any ball is bounded by $c$ times the radius of the ball.
Intuitively, $c$-packed curves behave reasonably in any resolution.
The boundary of convex polygons, algebraic curves of bounded maximum
degree, the boundary of \AlphaBetaCovered shapes \cite{e-cuabc-05},
and the boundary of $\gamma$-fat shapes \cite{d-ibucf-08} are all
$c$-packed (under the standard assumption that they have bounded
complexity).  Interestingly, the class of $c$-packed curves is closed
under simplification, see \cite{dhw-afdrc-12}. This makes them
attractive for efficient algorithmic manipulation.

Another input model which is commonly used is called low density
\cite{bksv-rimga-02}.  We call a set of line segments
$\phi$-\emphi{dense}, if for any ball the number of line segments that
intersect this ball and which are longer than the radius of the ball
is bounded by $\phi$. It is easy to see by a simple packing argument
that $c$-packed curves are $O(c)$-dense.

\paragraph{Informal restatement of the problem.} 

In the parametric space of the two input curves, we are given a
terrain defined over a grid partitioning $[0,1]^2$, where the height
at each point is defined as the distance between the two associated
points on the two curves. The grid is induced by the vertices of the
two curves. As in the regular \Frechet distance, we are interested in
finding a path between $(0,0)$ and $(1,1)$ on the terrain, such that
the maximum height on the path does not exceed some $\delta$ (the
minimum such $\delta$ is the desired distance). This might not be
possible as there might be ``mountain chains'' blocking the way.  To
overcome this, we are allowed to introduce \tunnels that go through
such obstacles. Each of these \tunnels connect two points that lie on
the horizontal lines of the grid, as these correspond to the vertices
of one curve. Naturally, we require that the starting and ending
points of such a \tunnel have height at most $\delta$ (the current
distance threshold being considered), and furthermore, the price of
such a \tunnel (i.e., the \Frechet distance between the corresponding
shortcut and subcurve) is smaller than~$\delta$. Once we introduce
these \tunnels, we need to compute a \emph{monotone} path from $(0,0)$
to $(1,1)$ in the grid which uses \tunnels.  Finally, we need to
search for the minimum $\delta$ for which there is a feasible
solution.

\paragraph{Challenge and ideas.}
Let $n$ be the total number of vertices of the input curves. A priori
there are potentially $O(n^2)$ horizontal edges of the grid that might
contain endpoints of a \tunnel, and as such, there are potentially
$O(n^4)$ different families of \tunnels that the algorithm might have
to consider.  A careful analysis of the structure of these families
shows that, in general, it is sufficient to consider one (canonical)
\tunnel per family.  Using $c$-packedness and simplification, we can
reduce the number of relevant grid edges to near linear.  This in turn
reduces the number of potential \tunnels that need to be inspected to
$O(n^2)$.  This is still insufficient to get a near linear time
algorithm. Surprisingly, we prove that if we are interested only in a
constant factor approximation, for every horizontal edge of the grid
we need to inspect only a constant number of \tunnels.  Thus, we
reduce the number of \tunnels that the algorithm needs to inspect to
near linear. And yet we are not done, as naively computing the price
of a \tunnel requires time near linear in the size of the associated
subcurve.  To overcome this, we develop a new data structure, so that
after preprocessing we can compute the price of a \tunnel in
polylogarithmic time per \tunnel. Now, carefully putting all these
insights together, we get a near linear time algorithm for the
approximate decision version of the problem.

However, to compute the minimum $\delta$, for which the decision
version returns true -- which is the shortcut \Frechet distance -- we
need to search over the critical values of $\delta$. To this end, we
investigate and characterize the critical values introduced by the
shortcut version of the problem.  Using the decision procedure, we
perform a binary search of several stages over these values, in the
spirit of \cite{dhw-afdrc-12}, to get the required approximation.

\subsection*{Our results}

\begin{compactenum}[(A)]
    \item \textbf{Computing the shortcut \Frechet distance.} %
    For a prescribed parameter $\eps > 0$, we present an algorithm for
    computing a $(3+\eps)$-approximation to the \asymmetric{}
    \vrestricted{} shortcut \Frechet distance between two given
    $c$-packed polygonal curves of total complexity $n$, see
    \defref{shortcut:f:r} for the formal definition of the distance
    being approximated.
    
    If we allow an unbounded number of shortcuts the running time of
    the new algorithm is $\RTUnbounded$, see \thmref{main} for the
    exact result. A variant of this algorithm can also handle the case
    where we allow only $k$ shortcuts, with running time
    $\RTBoundedK$, see \remrefpage{k:shortcut}.  In the analysis of
    these problems we use techniques developed by Driemel \etal{} in
    \cite{dhw-afdrc-12} and follow the general approach used in the
    parametric search technique of devising a decision procedure which
    is used to search over the critical events for the \Frechet
    distance.  The shortcuts introduce a new type of critical event,
    which we analyze in \secref{shortcut:price}.  The presented
    approximation algorithms can be easily modified to yield
    polynomial-time exact algorithms for the same problems (and for
    general polygonal curves). As such, the main challenge in devising
    the new algorithm was to achieve near linear time performance.
    Furthermore, the algorithm uses a new data structure (described
    next) that is interesting on its own merit.
    
    \item \textbf{\Frechet-distance queries between a segment and a
       subcurve.} %
    We present a data structure that preprocesses a given polygonal
    curve $\cZ$, such that given a query segment $\seg$, and two
    points $\pnt, \pnt'$ on $\cZ$ (and the edges containing them), it
    $(1+\eps)$-approximates the \Frechet distance between $\seg$ and
    the subcurve of $\cZ$ between $\pnt$ and $\pnt'$. Surprisingly,
    the data structure works for any polygonal curve (not necessarily
    packed or dense), requires near linear preprocessing time and
    space, and can answer such queries in polylogarithmic time
    (ignoring the dependency on $\eps$). See
    \thmref{segment:queries:f:r} for the exact result.
    
    \item \textbf{Universal vertex permutation for curve
       simplification.} %
    We show how to preprocess a polygonal curve in near-linear time
    and space, such that, given a number $k \in \Na$, one can compute
    a simplification in $O(k)$ time which has $K= \SNumVertices{k}$
    vertices (of the original curve) and is optimal up to a constant
    factor with respect to the \Frechet distance to the original
    curve, compared to any curve which uses $k$
    vertices. Surprisingly, this can be done by computing a
    permutation of the vertices of the input curve, such that this
    simplification is the subcurve defined by the first $K$ vertices
    in this permutation. Namely, we compute an ordering of the
    vertices of the curves by their \Frechet ``significance''.  See
    \thmref{f:r:simpl:result} for the exact result.
    
    \item \textbf{\Frechet-distance queries between a curve with $k$
       vertices and a subcurve.}  We use the above universal vertex
    permutation, to extend the data structure in (B) to support
    queries with polygonal curves of multiple segments (as opposed to
    single segments) and obtain a constant factor approximation with
    polylogarithmic query time, see \thmref{k:seg:query:result}. The
    query time is quadratic in the query curve complexity and
    logarithmic in the input curve complexity.
    
\end{compactenum}

\paragraph{Related work.}%
Assume we are given two polygonal curves of total complexity $n$ and
we are interested in computing the \Frechet distance between these
curves. The problem has been studied in many variations.  We only
discuss the results which we deem most relevant and refer the reader
to \cite{bbmm-fswd-12} for additional references.

Driemel \etal presented a near linear time $(1+\eps)$-approximation
algorithm for the \Frechet distance assuming the curves are well
behaved \cite{dhw-afdrc-12}; that is, $c$-packed.  In general,
computing the \Frechet distance exactly takes roughly quadratic time.
After publication in the seminal paper by Alt and Godau
\cite{ag-cfdbt-95}, their $O(n^2 \log n)$-time algorithm remained the
state of the art for more than a decade.  This lead Alt to conjecture
that the problem of deciding whether the \Frechet distance between two
curves is smaller or equal a given value is 3SUM-hard.  However,
recently, there has been some progress in improving upon the quadratic
running time of the decision algorithm.  First, Agarwal \etal
presented a subquadratic time algorithm for a specific variant of the
\Frechet distance \cite{aaks-dfst-13}.  Buchin \etal build upon their
work and give an algorithm for the original \Frechet distance
\cite{bbmm-fswd-12}.  Their algorithm is randomized and takes $o\pth{
   n^2 \log n}$ expected time overall to compute the \Frechet
distance. The decision algorithm they present is deterministic and
takes subquadratic time.  The only lower bound known for the decision
problem is $\Omega(n \log n)$ and was given by Buchin \etal
\cite{bbkrw-wtd-07}.  A randomized algorithm simpler than the one by
Alt and Godau, which has the same running time, but avoids parametric
search, was recently presented by Har-Peled and Raichel
\cite{hr-fdre-13}.

Buchin \etal \cite{bbw-eapcm-09} showed how to compute the
\emph{partial \Frechet distance} under the $L_1$ and $L_\infty$
metric.  Here, one fixes a threshold $\delta$, and computes the
maximal length of subcurves of the input curves that match under
\Frechet distance $\delta$.  The running time of their algorithm is
roughly $O\pth{n^3 \log n}$.  For the problem of counting the number
of subcurves that are within a certain \Frechet distance, a recent
result by \si{de Berg} \etal provides a data structure to answer such
queries up to a constant approximation factor \cite{bcg-ffq-11}.  To
the best of our knowledge the problem of computing the \Frechet
distance when one is allowed to introduce shortcuts has not been
studied before.

\paragraph{Previous work on curve simplification.} There is a large
body of literature on curve simplification. Since this is not the main
subject of the paper, we only discuss a selection of results which we
consider most relevant, since they use the \Frechet distance as a
quality measure.  Agarwal \etal \cite{ahmw-nltaa-05} give a
near-linear time approximation algorithm to compute a simplification
which is in \Frechet distance $\eps$ to the original curve and of
which the size is at most the size of the optimal simplification with
error $\eps/2$.  Abam \etal \cite{abh-sals-10} study the problem in
the streaming setting, where one wishes to maintain a simplification
of the prefix seen so far.  Their algorithm achieves an $O(1)$
competitive ratio using $O(k^2)$ additional storage and maintains a
curve with $2k$ vertices which has a smaller \Frechet distance to the
prefix than the optimal \Frechet simplification with $k$ vertices.
Bereg \etal \cite{bjwyz-spcdfd-08} give an exact $O(n\log n)$
algorithm that minimizes the number of vertices in the simplification,
but using the discrete \Frechet distance, where only distances between
the vertices of the curves are considered. Simplification under the
\Frechet distance has also been studied by Guibas \etal
\cite{ghms-apsml-93}.

\paragraph{Organization.}

In \secref{prelims} we describe some basic definitions and results. In
particular, the formal problem statement and the definition of the
\emph{\asymmetric{} \vrestricted{} shortcut \Frechet distance} between
two curves is given in \secref{shortcut:frechet}.  We also discuss
some basic tools needed for the algorithms.
In \secref{algo:unbounded:general}, we describe the approximation
algorithm for the shortcut \Frechet distance.  Here, we devise an
approximate decision procedure in \secref{algo:decider} that is used
in the main algorithm, described in \secref{algo:main}, to search over
an approximate set of candidate values.  The analysis of this
algorithm is given in \secref{analysis:unbounded}.  Since the
shortcuts introduce a new set of candidate values, we provide an
elaborate study of these new events in \secref{shortcut:price}.  The
main result for approximating the shortcut \Frechet distance is stated
in \thmref{main}.  In the remaining sections we describe the new data
structures.  In \secref{f:r:query} we describe a data structure for a
fixed curve, that answers queries for the \Frechet distance between a
subcurve and a given segment.  In \secref{f:r:simpl}, we use this data
structure to compute the universal vertex permutation. The extension
to query curves with more than two vertices are described in
\secref{k:seg:query}.  We conclude with discussion and some open
problems in \secref{conclusions}.

\section{Preliminaries}%
\seclab{prelims}

\paragraph{Notation.}
%
A \emphi{curve} $\cX$ is a continuous mapping from $[0,1]$ to $\Re^d$,
where $\cX(t)$ denotes the point on the curve parameterized by $t \in
[0,1]$.  Given two curves $\cX$ and $\cY$ that share an endpoint, let
$\cX \concatOp \cY$ denote the \emphi{concatenated} curve.  We denote
with $\SubCrv{\cX}{\x}{\x'}$ the subcurve of $\cX$ from $\cX(\x)$ to
$\cX(\x')$ and with $\SC{\cX}{\pnt}{\pnt'}$ the subcurve of $\cX$
between the two points $\pnt,\pnt' \in \cX$.  Similarly,
$\ScutCrv{\cY}{\y}{\y'}$ denotes the line segment between the points
$\cY(\y)$ to $\cY(\y')$, we call this a \emphi{shortcut} of
$\cY$. 
For a set of numbers $U$, an \emphi{atomic interval} is a maximum
interval of $\Re$ that does not contain any point of $U$ in its
interior.

\subsection{Background and standard definitions}
\seclab{background}

Some of the material covered in this section is standard, and follows
the presentation in Driemel \etal \cite{dhw-afdrc-12}.  A
\emphi{reparameterization} is a one-to-one and continuous function
$f:[0,1]\rightarrow [0,1]$. It is \emphi{orientation-preserving} if it
maps $f(0)=0$ and $f(1)=1$.  The \Frechet distance is defined only for
oriented curves, as we need to match the start and end points of the
curves. The orientation of the curves we use would be understood from
the context.

\begin{defn}%
    \deflab{width:f}%

    Let $\cX:[0,1] \to \Re^d$ and $\cY:[0,1] \to \Re^d$ be two
    polygonal curves. We define the \emphi{width} of an
    orientation-preserving reparametrization $f:[0,1] \to [0,1]$, with
    respect to $\cX$ and $\cY$, as
    \begin{align*}
        \widthX{f}{\cX,\cY} = \max_{\alpha \in [0,1]}
        \distX{\cX(f(\alpha))}{\cY(\alpha)}.
    \end{align*}
    The \emphi{\Frechet distance} between the two curves is
    \begin{align*}
        \distFr{\cX}{\cY} = \inf_{%
           \substack{f:[0,1] \rightarrow [0,1] }}~
        \widthX{f}{\cX,\cY}.
    \end{align*}
\end{defn}


\begin{defn}%
    \deflab{elevation}%
    Let $\cX:[0,1] \to \Re^d$ and $\cY:[0,1] \to \Re^d$ be two
    polygonal curves.  The square $[0,1]^2$ represents their
    \emphi{parametric space}.  For a point $\pnt = (\xPnt, \yPnt) \in
    [0,1]^2$, we define its \emphi{elevation} to be $ \frVal{\pnt} =
    \distX{\cX(\xPnt)}{\cY(\yPnt)}.$ Let $\delta>0$ be a parameter,
    the $\delta$-\emphi{free space} of $\cX$ and $\cY$ is defined as
    \begin{align*}
        \FullFDleq{\delta}(\cX, \cY) = \brc{ \pnt \in [0,1]^2 \sep{
              \frVal{\pnt} \leq \delta}}.
    \end{align*}
\end{defn}

\paragraph{Free space diagram.} We are interested only in polygonal
curves, which we assume to have uniform parameterizations.  The
parametric space can be broken into a (not necessarily uniform) grid
called the \emphi{free space diagram}, where a vertical line
corresponds to a vertex of $\cX$ and a horizontal line corresponds to
a vertex of $\cY$.

\parpic[r]{\includegraphics[scale=0.7]{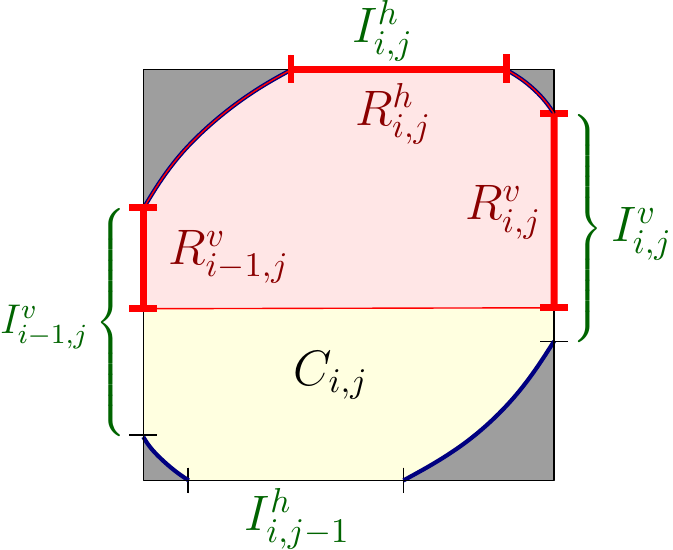}}

Every two segments of $\cX$ and $\cY$ define a \emphi{free space cell}
in this grid.  In particular, let $\CellXY{i}{j} = \CellXY{i}{j}\pth{
   \cX, \cY}$ denote the free space cell that corresponds to the
$i$\th edge of $\cX$ and the $j$\th edge of $\cY$. The cell
$\CellXY{i}{j}$ is located in the $i$\th column and $j$\th row of this
grid.

It is known that the free space, for a fixed $\delta$, inside such a
cell $\CellXY{i}{j}$ (i.e., $\FullFDleq{\delta}(\cX, \cY) \cap
\CellXY{i}{j}$) is the clipping of an affine transformation of a disk
to the cell \cite{ag-cfdbt-95}, see the figure on the right; as such,
it is convex and of constant complexity.  Let $\IHCellXY{i}{j}$ denote
the horizontal \emphi{free space interval} at the top boundary of
$\CellXY{i}{j}$, and $\IVCellXY{i}{j}$ denote the vertical free space
interval at the right boundary.

We define the \emphi{complexity} of the relevant free space, for
distance $\delta$, denoted by $\Nleq{\delta}(\cX, \cY)$, as the total
number of grid cells that have a non-empty intersection with
$\FullFDleq{\delta}(\cX,\cY)$.

\begin{observation}%
    \obslab{f:r:segments}%
    Given two segments $\pnt \pntA$ and $u v$, it holds $\distFr{\pnt
       \pntA}{u v} = \max ( \distX{u}{\pnt}$, $\distX{v}{\pntA})$.  To
    see this, consider the uniform parameterization $\pnt(t) = t \pnt
    + (1-t) \pntA$ and $u(t) = t u + (1-t) v$, for $t\in [0,1]$. It is
    easy to verify that $f(t) = \distX{\pnt(t)}{u(t)}$ is convex, and
    as such $f(t) \leq \max( f(0), f(1))$, for any $t \in [0,1]$.
\end{observation}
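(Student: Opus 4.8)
The plan is to establish the two matching inequalities separately. For the lower bound, I would invoke \defref{width:f}: an orientation-preserving reparameterization $f$ satisfies $f(0)=0$ and $f(1)=1$, so it matches the startpoint of the first segment with the startpoint of the second, and likewise their endpoints. Hence every such $f$ has width at least both $\distX{\pnt}{u}$ and $\distX{\pntA}{v}$, which gives $\distFr{\pnt \pntA}{uv} \ge \max\pth{\distX{\pnt}{u}, \distX{\pntA}{v}}$. (Here I am using the orientation in which $\pnt\pntA$ runs from $\pnt$ to $\pntA$; the parameterizations written in the statement run the other way, but reversing both curves simultaneously leaves the \Frechet distance unchanged, so the two formulations are equivalent.)

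For the upper bound, it suffices to produce a single reparameterization of width at most $\max\pth{\distX{\pnt}{u}, \distX{\pntA}{v}}$. I would take the identity reparameterization $f(\alpha)=\alpha$ together with the uniform parameterizations $\pnt(t) = t\pnt + (1-t)\pntA$ and $u(t) = tu + (1-t)v$, so that its width equals $\max_{t \in [0,1]} \distX{\pnt(t)}{u(t)}$. The crucial point is that the vector-valued map $t \mapsto \pnt(t) - u(t) = (\pntA - v) + t\bigl((\pnt - u) - (\pntA - v)\bigr)$ is affine in $t$; therefore $t \mapsto \distX{\pnt(t)}{u(t)}$ is the composition of the (convex) Euclidean norm with an affine map, hence convex on $[0,1]$. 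A convex function on an interval attains its maximum at an endpoint, so the width is $\max\pth{\distX{\pntA}{v}, \distX{\pnt}{u}}$, matching the lower bound and yielding the stated equality.

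I do not expect any real obstacle here: the only nontrivial ingredient is the convexity of $t \mapsto \distX{\pnt(t)}{u(t)}$, which follows immediately from the fact that a norm composed with an affine map is convex, and the endpoint-matching property is built into the definition of an orientation-preserving reparameterization. Conceptually, the observation just records that the \Frechet distance is a bottleneck measure: once the matched endpoints are pinned, linearly interpolating both segments never produces an interior pair farther apart than the worse of the two endpoint pairs.
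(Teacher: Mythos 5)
Your proposal is correct and follows essentially the same route as the paper: the upper bound via convexity of $t \mapsto \distX{\pnt(t)}{u(t)}$ (a norm composed with an affine map) under the uniform parameterizations, and the lower bound from the fact that an orientation-preserving reparameterization must match endpoints to endpoints. You simply spell out both inequalities explicitly where the paper's sketch leaves the lower bound implicit.
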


\paragraph{Free space events.}

To compute the \Frechet distance consider increasing $\delta$ from $0$
to $\infty$. As $\delta$ increases, structural changes happen to the
free space. We are interested in the radii (i.e., the value of
$\delta$) of these events.

\parpic[l]{\includegraphics{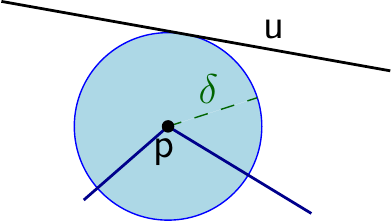}}

Consider a segment $\segA\in\cX$ and a vertex $\pnt\in\cY$, a
\emphi{vertex-edge event} corresponds to the minimum radius of a ball
centered at $\pnt$, such that $\segA$ is tangent to the ball, see the
figure on the left.  In the free space diagram, this corresponds to
the event that a free space interval consists of one point only.  The
line supporting this boundary edge corresponds to the vertex, and the
other dimension corresponds to the edge.  Naturally, the event could
happen at a vertex of $\segA$.  The second type of event, a
\emphi{monotonicity event}, corresponds to a value $\delta$ for which
a monotone subpath inside the $\delta$-free space becomes feasible.
Geometrically, this corresponds to the common distance of two vertices
on one curve to the intersection point of their bisector with a
segment on the other curve.

\subsection{The $k$-shortcut \Frechet distance}
\seclab{shortcut:frechet}

\begin{defn}%
    \deflab{shortcut:curve}%
    For a polygonal curve $\cY$, we refer to any order-preserving
    concatenation of $k+1$ non-overlapping (possibly empty) subcurves
    of $\cY$ with $k$ shortcuts connecting the endpoints of the
    subcurves in the order along the curve, as a $k$-\emphi{shortcut
       curve} of $\cY$.  Formally, for values $0 \leq \y_1 \leq \y_2
    \leq \cdots \leq \y_{2k} \leq 1$, the shortcut curve is defined as
    $\SubCrv{\cY}{0}{\y_1} + \ScutCrv{\cY}{\y_1}{\y_2} +
    \SubCrv{\cY}{\y_2}{\y_3} + \cdots +
    \ScutCrv{\cY}{\y_{2k-1}}{\y_{2k}} + \SubCrv{\cY}{\y_{2k}}{1 }$. If
    each $\cY(\y_i)$ is a vertex of $\cY,$ we refer to the shortcut
    curve as being \emphi{\vrestricted{}}, otherwise we say it is
    \emphi{\scunrestricted}.
\end{defn}

\begin{defn} %
    \deflab{shortcut:f:r}%
    Given two polygonal curves $\cX$ and $\cY$, we define their
    \emphi{continuous} $k$-shortcut \Frechet distance as the minimal
    \Frechet distance between the curve $\cX$ and any unrestricted
    $k$-shortcut curve of $\cY$.  We denote it with
    $\distSFr{k}{\cX}{\cY}$.  If we do not want to bound the number of
    shortcuts, we omit the parameter $k$ and denote it with
    $\distoSFr{\cX}{\cY}$.  The \emphi{\vrestricted{} $k$-shortcut
       \Frechet distance} is defined as above using only
    \vrestricted{} shortcut curves of $\cY$.  Furthermore, note that
    in all cases we allow only one of the input curves to be shortcut,
    namely $\cY$, thus we call the distance measure
    \emphi{\asymmetric{}}.
\end{defn}

In this paper, we study the \asymmetric{} \vrestricted{} $k$-shortcut
\Frechet distance for the case of bounded and unbounded $k$.  In the
following, we will omit the predicates \asymmetric{} and
\vrestricted{} when it is clear from the context.

\paragraph{Free space.}%

The \emphi{$k$-reachable free space} $\FDleqI{k}{\delta}\pth{\cX,\cY}$
is
\begin{align*}
    \FDleqI{k}{\delta}\pth{\cX,\cY} =%
    \brc{ \pnt = (\xPnt, \yPnt) \in [0,1]^2 \sep{
          \distSFr{k}{\SubCrv{\cX}{0}{\xPnt}}{\SubCrv{\cY}{0}{\yPnt }}
          \leq \delta}}.
\end{align*}
This is the set of points that have an $(x,y)$-monotone path from
$(0,0)$ that stays inside the free space and otherwise uses at most
$k$ \tunnels, which are defined in the next subsection.

\begin{figure*}[\si{bt}]\center
    \includegraphics{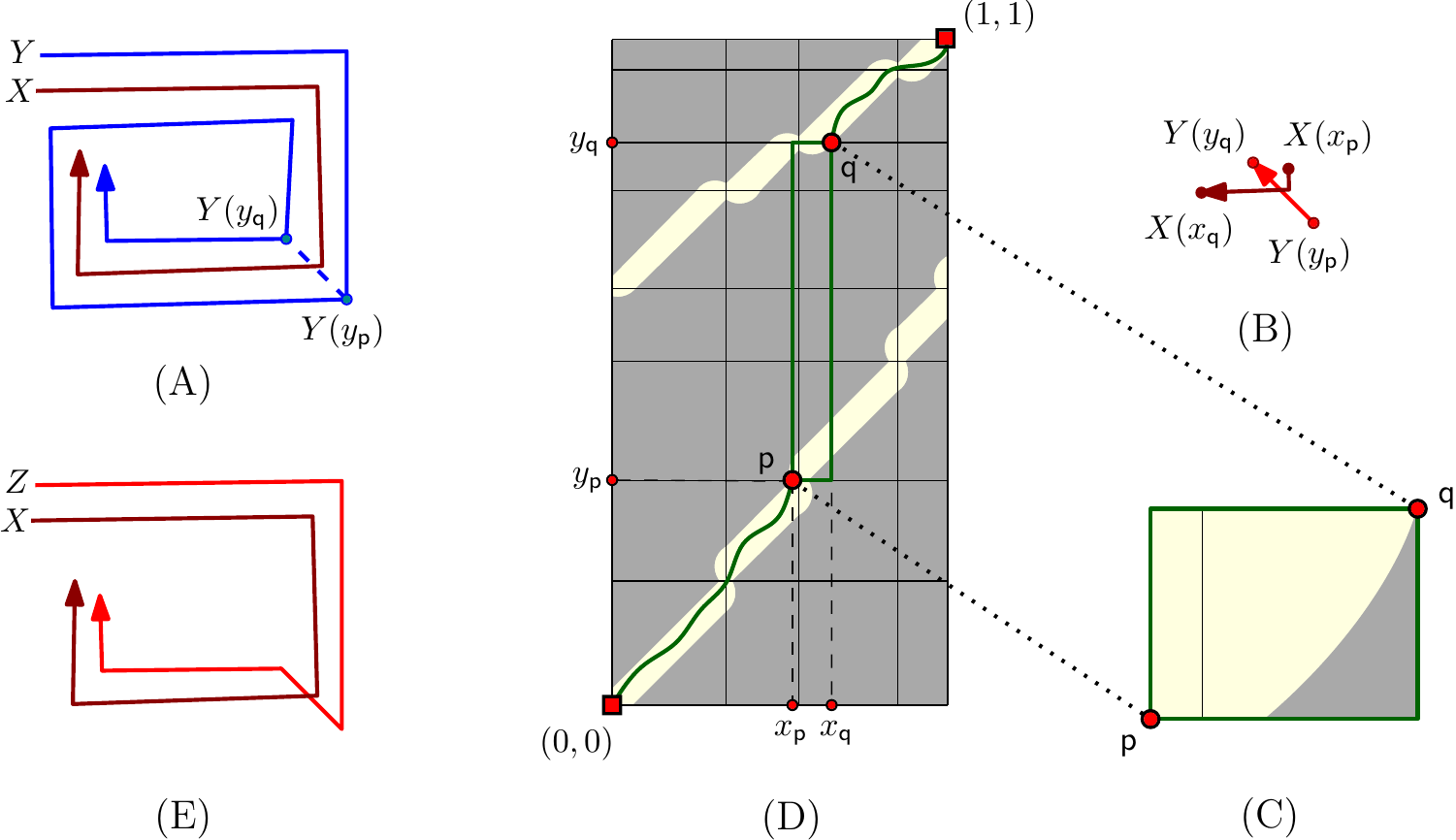} \caption{(A) Example of
       two dissimilar curves that can be made similar by shortcutting
       one of them.  (B) A \tunnel $\xtunnel{\pnt}{\pntA}$ corresponds
       to a shortcut and a subcurve matched to each other and (C)
       their free space diagram.  (D) The \tunnel connects previously
       disconnected components of the free space.  (E) The curve $\cZ$
       resulting from shortcutting $Y$. Its (regular) \Frechet
       distance from $\cX$ is dramatically reduced.  }
    \figlab{simple:example}
\end{figure*}

\subsection{\Tunnels and gates -- definitions}%
\seclab{def:tunnels}

\subsubsection{\Tunnels}%
\seclab{tunnels:real:def}

In the parametric space, a shortcut $\ScutCrv{\cY}{\yPnt}{\yPntA}$ and
the subcurve $\SubCrv{\cX}{\xPnt}{\xPntA}$, that it is being matched
to, correspond to a the rectangle with corners $\pnt$ and $\pntA$,
where $\pnt = \pth{\xPnt, \yPnt}$ and $\pntA = \pth{\xPntA, \yPntA}$.
By shortcutting the curve on the vertical axis, we are collapsing this
rectangle to a single row, see \figref{simple:example}~(C).  More
precisely, this is the free space diagram of the shortcut and the
subcurve.  We call this row a \emphi{\tunnel{}} and denote it by
$\xtunnel{\pnt}{\pntA}$.  We require $\xPnt \leq \xPntA$ and $\yPnt
\leq \yPntA$ for monotonicity.  \figref{simple:example} shows the full
example of a \tunnel. We call the \Frechet distance of the shortcut
segment to the subcurve the \emphi{price} of this \tunnel and denote
it with $\scPrice{\pnt}{\pntA} =
\distFr{\SubCrv{\cX}{\xPnt}{\xPntA}}{\ScutCrv{\cY}{\yPnt}{\yPntA}}$.
A \tunnel $\xtunnel{\pnt}{\pntA}$ is \emphi{feasible} for $\delta$ if
it holds that $\frVal{\pnt} \leq \delta$ and $\frVal{\pntA} \leq
\delta$, i.e., if $\pnt,\pntA \in \FullFDleq{\delta}({\cX},{\cY})$.
(Note that in turn the feasibility of a monotone path in the free
space of the tunnel is determined by the price of the tunnel.)  Now,
let $u=\cY(\yPnt)$ and $v=\cY(\yPntA)$ and let $\edge$ be the edge of
$\cX$ that contains $\cX(\xPnt)$ (resp., $\edge'$ the edge that
contains $\cX(\xPntA)$) for the \tunnel $\xtunnel{\pnt}{\pntA}$.  We
denote with $\xTunnels{\edge}{\edge'}{u}{v}$ the \emphi{family of
   \tunnels{}} that $\xtunnel{\pnt}{\pntA}$ belongs to.  Furthermore,
let $\xTunnelsLeq{\edge}{\edge'}{u}{v}{\delta}$ denote the subset of
these \tunnels that are feasible for $\delta$.

\begin{defn}%
    \deflab{min:radius}%
    \deflab{canonical:price}%
    The \emphi{canonical \tunnel{}} of the \tunnel family
    $\xTunnels{\edge}{\edge'}{u}{v}$, denoted by $\bCanonical{\edge}
    {\edge'}{u}{v}$, is the \tunnel that matches the shortcut $uv$ to
    the subcurve $\cX[s,t]$, such that $s$ and $t$ are the values
    realizing
    \begin{equation}
        \rMinCreate{\edge}{\edge'}{u}{v}%
        = %
        \min_{\cX(s) \in \edge,\cX(t) \in \edge', \atop s \leq
           t} \max\pth{
           \begin{array}{c}
               \distX{\cX(s)}{u},\\
               \distX{\cX(t)}{v}
           \end{array}
        }.
        \eqlab{c:tunnel:endpoints}
    \end{equation}    
    We refer to $\rMinCreate{\edge}{\edge'}{u}{v}$ as the
    \emphi{minimum radius} of this family. The canonical tunnel may
    not be uniquely defined if only one of the two values $s$ or $t$
    determines the minimum radius.  In this case, we define $s$ and
    $t$ as the values minimizing $\distX{\cX(s)}{u}$ and
    $\distX{\cX(t)}{v}$ for $\cX(s) \in \edge$ and $\cX(t)\in \edge'$,
    individually.  We call the price of the canonical \tunnel the
    \emphi{canonical price} of this \tunnel family.
\end{defn}

Clearly, one can compute the canonical \tunnel
$\xTunnels{\edge}{\edge'}{u}{v}$ in constant time. In particular, the
price of this canonical \tunnel is
\begin{equation}
    \priceX{ \bCanonical{\edge}{\edge'}{u}{v}} = 
    \distFr{\SubCrv{\cX}{s}{t}}{uv}.
    \eqlab{c:tunnel:price}
\end{equation}

We emphasize that a shortcut is always a segment connecting two
vertices of the curve $\cY$, and a \tunnel always lies in the
parametric space; that is, they exist in two completely different
domains.

\begin{observation}%
    \obslab{min:radius:eq}%
    The minimum radius of a tunnel family
    $\rMinCreate{\edge}{\edge'}{u}{v}$ corresponds to either
    \begin{inparaenum}[(i)]
        \item the distance of $u$ to its closest point on $\edge$,
        \item the distance of $v$ to its closest point on $\edge'$, or
        \item the common distance of $u$ and $v$ to the intersection
        of their bisector with the edge $\edge$ (i.e., a monotonicity
        event). Note that the event in case (iii) can only happen if
        $\edge=\edge'$.
    \end{inparaenum}
\end{observation}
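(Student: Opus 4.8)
Recall from \Eqref{c:tunnel:endpoints} that $\rMinCreate{\edge}{\edge'}{u}{v}$ is the minimum of $g(s,t)=\max\pth{\distX{\cX(s)}{u},\ \distX{\cX(t)}{v}}$ over all $s\le t$ with $\cX(s)\in\edge$ and $\cX(t)\in\edge'$. The plan is to analyze this constrained min--max problem directly, splitting on whether $\edge=\edge'$. The single fact I will use repeatedly is that, along one edge, $\cX(\cdot)$ is an affine function of the parameter, so $f_u(s):=\distX{\cX(s)}{u}$ and $f_v(t):=\distX{\cX(t)}{v}$ are norms of affine maps, hence convex and continuous; in particular $f_u$ attains its minimum over $\edge$ at the parameter $s^\ast$ of the point of $\edge$ nearest to $u$, with value $\rho_u:=\distX{\cX(s^\ast)}{u}$, and likewise for $f_v$, $t^\ast$, $\rho_v$. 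Since $g(s,t)\ge\max(\rho_u,\rho_v)$ everywhere, we always have $\rMinCreate{\edge}{\edge'}{u}{v}\ge\max(\rho_u,\rho_v)$.

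First I would dispose of the case $\edge\ne\edge'$. A tunnel requires $\xPnt\le\xPntA$, so $\edge$ cannot lie strictly after $\edge'$ along $\cX$; hence every admissible pair automatically has $s\le t$, the constraint is vacuous, and the minimization over $s$ and $t$ decouples. Thus $\rMinCreate{\edge}{\edge'}{u}{v}=\max(\rho_u,\rho_v)$, which is case (i) if $\rho_u\ge\rho_v$ and case (ii) otherwise; case (iii) does not occur, which is the last assertion of the observation.

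The substantive case is $\edge=\edge'$, where $s,t$ range over the common parameter interval $[a,b]$ of $\edge$ and the constraint $s\le t$ may bind. Here I would parametrize by the threshold $\delta\ge\max(\rho_u,\rho_v)$: by convexity the sublevel sets $S_\delta=\{\,s\in[a,b] : f_u(s)\le\delta\,\}$ and $T_\delta=\{\,t\in[a,b] : f_v(t)\le\delta\,\}$ are nonempty closed subintervals, and some admissible pair achieves value $\le\delta$ iff $\min S_\delta\le\max T_\delta$. As $\delta$ increases, $\min S_\delta$ is non-increasing, $\max T_\delta$ is non-decreasing, and both vary continuously, so the smallest feasible $\delta$ is either $\max(\rho_u,\rho_v)$ itself --- which is the case exactly when $\min S_{\max(\rho_u,\rho_v)}\le\max T_{\max(\rho_u,\rho_v)}$, e.g.\ always when $s^\ast\le t^\ast$, and again gives case (i) or (ii) --- or it is the unique $\delta^\ast>\max(\rho_u,\rho_v)$ at which $\min S_{\delta^\ast}=\max T_{\delta^\ast}=:\bar s$. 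In this last case $\delta^\ast>\rho_u$ and $\delta^\ast>\rho_v$ force $\bar s$ into the interior of $[a,b]$, so $\bar s$ is the left endpoint of $S_{\delta^\ast}$ and the right endpoint of $T_{\delta^\ast}$, whence $f_u(\bar s)=\delta^\ast=f_v(\bar s)$; that is, $\distX{\cX(\bar s)}{u}=\distX{\cX(\bar s)}{v}=\delta^\ast$ with $\cX(\bar s)\in\edge$, so $\cX(\bar s)$ is the intersection of the bisector of $u$ and $v$ with $\edge$ and $\delta^\ast$ is their common distance to it --- exactly case (iii).

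The main obstacle is this last case: one has to see that forbidding $s>t$ does not merely relocate an optimal vertex--edge contact, but instead pushes the two nearest-point ``windows'' $S_\delta$ and $T_\delta$ until they touch, and that the touching point is a bisector point rather than an edge endpoint. The threshold parametrization is designed to make exactly this transparent. A few degenerate situations would need a sentence of care --- e.g.\ $u$ or $v$ lying on the line through $\edge$, or $f_u$ and $f_v$ sharing a linear piece, in which cases $s^\ast=t^\ast$ and we are back to case (i)/(ii) --- but in all of them $\rMinCreate{\edge}{\edge'}{u}{v}$ is still one of (i)--(iii), so the statement is unaffected.
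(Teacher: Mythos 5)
Your proposal is correct. Note that the paper offers no proof of this statement at all --- it is presented as an \obsref{min:radius:eq} whose content is taken to be the standard classification of free-space events (vertex--edge and monotonicity events) from the Alt--Godau analysis; the closest the paper comes to an argument is inside the proof of \lemref{r:create}, where the same case split appears (the closest point $u'$ on $\edge$ to $u$ precedes the closest point $v'$ on $\edge'$ to $v$, versus the opposite order, which forces $\edge=\edge'$) in the service of a different claim. Your write-up fills this gap completely and correctly: the decoupling observation for $\edge\ne\edge'$ is right (the family is non-empty only if $\edge$ precedes $\edge'$, so $s\le t$ is automatic), and for $\edge=\edge'$ your sublevel-set/threshold formulation is a clean way to see that when the constraint $s\le t$ binds, the optimum is forced to a single point $\bar s$ with $f_u(\bar s)=f_v(\bar s)=\delta^\ast$, i.e.\ a bisector intersection; your endpoint analysis ($\bar s\in(a,b)$ because $\delta^\ast$ strictly exceeds both $\rho_u$ and $\rho_v$) is the one step that genuinely needs checking and you handle it correctly. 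The paper's implicit reasoning instead argues pointwise that any subsegment $\wu\wv$ of $\edge$ has $\distFr{\wu\wv}{uv}\ge\distFr{\pntB}{uv}$ for the min--max point $\pntB$ (see the proof of \lemref{r:create}); the two routes are equivalent, but yours makes the continuity/monotonicity mechanism explicit and degenerates gracefully, since it uses only convexity of $s\mapsto\distX{\cX(s)}{u}$ along an edge.
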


\subsubsection{Gates}%
\vspace{-0.2cm}%

\parpic[r]{\includegraphics{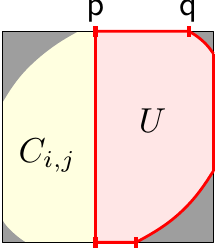}}%
\noindent Let $U$ be a subset of the parametric space that is convex
in every cell.  Let $\IHCellXY{i}{j}$ be a free space interval. We
call the left endpoints of $U \cap \IHCellXY{i}{j}$ the \emphi{left
   gate} of $U$ in the cell $\CellXY{i}{j}$, and similarly the right
endpoint is the \emphi{right gate}.  The figure to the right shows an
example of gates $\pnt$ and $\pntA$.  The \emphi{set of gates} of $U$
are the gates with respect to all cells in the free space diagram.  We
define the \emphi{canonical gate} of a vertex-edge pair as the point
in parametric space that minimizes the vertex-edge distance. Note that
canonical gates serve as endpoints of canonical tunnels that span
across columns in the free-space diagram.

\subsection{Curve simplification}

We use the following simple algorithm for the simplification of the
input curves. It is easy to verify that the curve simplified with
parameter $\sRadius$ is in \Frechet distance at most $\sRadius$ to the
original curve, see \cite{dhw-afdrc-12}.

\begin{defn}%
    \deflab{simplification}%
    Given a polygonal curve $\cXOrig$ and a parameter $\sRadius > 0$.
    First mark the initial vertex of $\cXOrig$ and set it as the
    current vertex. Now scan the polygonal curve from the current
    vertex until it reaches the first vertex that is in distance at
    least $\sRadius$ from the current vertex. Mark this vertex and set
    it as the current vertex. Repeat this until reaching the final
    vertex of the curve, and also mark it.  We refer to the resulting
    curve $\cX$ that connects only the marked vertices, in their order
    along $\cXOrig$, as a \emphi{$\sRadius$-simplification} of
    $\cXOrig$ and we denote it with $\simpX{\cXOrig,\sRadius}$.
\end{defn}

\begin{figure}\centering
    \includegraphics[width=0.3\textwidth]{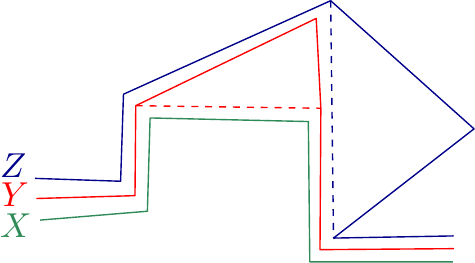}
    \caption{The \asymmetric{} $k$-shortcut \Frechet does not
          satisfy the triangle inequality. In the depicted
          counter-example it holds that $\distSFr{k}{\cX}{\cZ} >
          \distSFr{k}{\cX}{\cY} + \distSFr{k}{\cY}{\cZ}$ for any value
          of $k$ and for $k$ unbounded. This holds true in the
          \vrestricted{} and in the \unrestricted{} case.}
       \figlab{triangle:inequality}
\end{figure}

During the course of the algorithm we will simplify the input curves
in order to reduce the complexity of the free space.  The $k$-shortcut
\Frechet distance does not satisfy the triangle inequality, as can be
seen by the counter-example shown in \figref{triangle:inequality}.
Therefore, we need the next lemma to ensure that the computed distance
between the simplified curves approximates the distance between the
original curves.  The proof is straightforward and can be found in
\cite{d-raapg-13}.

\begin{lemma}[\cite{d-raapg-13}]%
    \lemlab{t:r:i:n:e:q:s:u:b}%
    Given a simplification parameter $\sRadius$ and two polygonal
    curves $\cXOrig$ and $\cYOrig$, let $\cX =
    \simpX{\cXOrig,\sRadius}$ and $\cY = \simpX{\cYOrig, \sRadius}$
    denote their $\sRadius$-simplifications, respectively.  For any $k
    \in \Na$, it holds that $\distSFr{k}{\cX}{\cY} - 2 \sRadius \leq
    \distSFr{k}{\cXOrig}{\cYOrig} \leq \distSFr{k}{\cX}{\cY} +
    2\sRadius$.  Similarly, $\distoSFr{\cX}{\cY} - 2 \sRadius \leq
    \distoSFr{\cXOrig}{\cYOrig} \leq \distoSFr{\cX}{\cY} + 2\sRadius$.
\end{lemma}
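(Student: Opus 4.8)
The plan is to prove the two-sided inequality for a \emph{fixed} $k$ and a \emph{fixed} choice of shortcut points, and then take the infimum over all such choices; the unbounded case follows by the same argument with $k$ ranging freely. Throughout I will use two standard facts: first, that $\simpX{\cXOrig,\sRadius}$ is in \Frechet distance at most $\sRadius$ from $\cXOrig$ (stated in the text before \defref{simplification}); second, that the \Frechet distance \emph{does} satisfy the triangle inequality, even though the shortcut \Frechet distance does not — this is exactly why we must argue at the level of individual shortcut curves rather than invoking a triangle inequality for $\distSFr{k}{\cdot}{\cdot}$ directly.

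For the upper bound $\distSFr{k}{\cXOrig}{\cYOrig} \le \distSFr{k}{\cX}{\cY} + 2\sRadius$: let $\cZ$ be an optimal $k$-shortcut curve of $\cY$ realizing $\distSFr{k}{\cX}{\cY}$, determined by parameter values $0 \le \y_1 \le \cdots \le \y_{2k} \le 1$ in the simplified curve. The key step is to transfer this shortcut structure back to $\cYOrig$. Each marked vertex of $\cY$ corresponds to an actual vertex of $\cYOrig$ (by \defref{simplification}, $\cY$ uses only original vertices); more delicately, a point $\cY(\y_i)$ in the interior of an edge of $\cY$ corresponds to a specific point on $\cYOrig$, and a subcurve $\SC{\cY}{\cdot}{\cdot}$ of $\cY$ ``expands'' to a subcurve of $\cYOrig$ whose \Frechet distance from the $\cY$-subcurve is at most $\sRadius$ — this needs a small lemma saying simplification restricted to a subcurve is still a $\sRadius$-simplification, or at least within $\sRadius$. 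Granting that, the expanded shortcut curve $\cZOrig$ of $\cYOrig$ satisfies $\distFr{\cZ}{\cZOrig} \le \sRadius$ (shortcuts are unchanged since they only depend on the two endpoint vertices; subcurves each move by at most $\sRadius$, and these pieces concatenate). Then by the genuine triangle inequality for \Frechet distance, $\distSFr{k}{\cXOrig}{\cYOrig} \le \distFr{\cXOrig}{\cZOrig} \le \distFr{\cXOrig}{\cX} + \distFr{\cX}{\cZ} + \distFr{\cZ}{\cZOrig} \le \sRadius + \distSFr{k}{\cX}{\cY} + \sRadius$.

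For the lower bound $\distSFr{k}{\cX}{\cY} - 2\sRadius \le \distSFr{k}{\cXOrig}{\cYOrig}$, I run the symmetric argument in the other direction: take an optimal $k$-shortcut curve $\cZOrig$ of $\cYOrig$, and ``contract'' it to a $k$-shortcut curve $\cZ'$ of $\cY$ by replacing each subcurve of $\cYOrig$ by the corresponding subcurve of the simplified curve $\cY$ (keeping the same endpoint vertices for the shortcuts when they are marked; if a shortcut endpoint of $\cYOrig$ is not a marked vertex one shifts it to a nearby marked vertex, incurring cost $\le \sRadius$) — again $\distFr{\cZ'}{\cZOrig} \le \sRadius$, so $\distSFr{k}{\cX}{\cY} \le \distFr{\cX}{\cZ'} \le \distFr{\cX}{\cXOrig} + \distFr{\cXOrig}{\cZOrig} + \distFr{\cZOrig}{\cZ'} \le 2\sRadius + \distSFr{k}{\cXOrig}{\cYOrig}$. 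Finally, the unbounded statement is obtained verbatim since nothing in either direction used a specific value of $k$ beyond the fact that contracting/expanding a $k$-shortcut curve yields a $k$-shortcut curve.

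The main obstacle is the bookkeeping in the ``expand/contract'' step: making precise the claim that a subcurve of the simplification is within \Frechet distance $\sRadius$ of the corresponding subcurve of the original, and handling the case where a shortcut endpoint in one curve is an \emph{interior} point of an edge rather than a marked/original vertex (in the \vrestricted{} setting one additionally must check the expanded/contracted curve is still \vrestricted{}, which is why the shortcut endpoints are chosen at original vertices and why one shifts to marked vertices when needed). Everything else is an application of the \Frechet triangle inequality and the basic simplification bound. Since the paper defers this to \cite{d-raapg-13}, I would include only the structural outline above and relegate the subcurve-simplification lemma to that reference.
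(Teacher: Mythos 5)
The paper does not actually include a proof of this lemma---it is deferred to \cite{d-raapg-13} with the remark that it is straightforward---so there is nothing in the text to compare against line by line; but your proposal is correct and is the natural argument one expects that reference to contain: work with a fixed optimal shortcut curve, transfer it across the simplification at a cost of $\sRadius$, and apply the genuine triangle inequality for the plain \Frechet distance, paying $\sRadius$ once for $\cX$ versus $\cXOrig$ and once for the transferred shortcut curve. You also correctly flag the only two points requiring care (that a subcurve of $\simpX{\cYOrig,\sRadius}$ between marked vertices is within \Frechet distance $\sRadius$ of the corresponding subcurve of $\cYOrig$ under the restriction of the standard matching, and that unmarked shortcut endpoints can be snapped to the preceding marked vertex at cost less than $\sRadius$ while preserving the order along the curve and the number of shortcuts), so I see no gap.
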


\begin{lemma}[\cite{dhw-afdrc-12}]%
    \lemlab{complexity:l:e:q}%
    For any two $c$-packed curves $\cXOrig$ and $\cYOrig$ in $\Re^d$
    of total complexity $n$, and two parameters $0 < \eps < 1$ and
    $\delta > 0$, we have that
    $\Nleq{\delta}(\simpX{\cXOrig,\sRadius},\simpX{\cYOrig, \sRadius})
    = O( cn /\eps)$, where $\sRadius=\Theta(\eps\delta)$.
\end{lemma}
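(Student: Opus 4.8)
The plan is to combine two standard features of the $\sRadius$-simplification with one packing estimate. Write $P=\simpX{\cXOrig,\sRadius}$ and $Q=\simpX{\cYOrig,\sRadius}$, and recall $\sRadius=\Theta(\eps\delta)$, so $\sRadius\le\delta$. By \defref{simplification}, $P$ and $Q$ have at most $n$ vertices between them and every edge of each, with the possible exception of the last, has length at least $\sRadius$; and since the class of $c$-packed curves is closed under simplification (recalled in the input-model discussion, following \cite{dhw-afdrc-12}), both $P$ and $Q$ are $O(c)$-packed. Directly from the definition of the free space, a cell $\CellXY{i}{j}$ of $P$ and $Q$ meets $\FullFDleq{\delta}(P,Q)$ exactly when its two edges $\edge$ (of $P$) and $\edge'$ (of $Q$) satisfy $\distSet{\edge}{\edge'}\le\delta$. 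Hence $\Nleq{\delta}(P,Q)$ is precisely the number of such close edge pairs, and it suffices to prove: for any two $O(c)$-packed curves with at most $n$ vertices in total, all of whose edges (bar at most one per curve) have length $\ge\sRadius\le\delta$, the number of edge pairs at distance $\le\delta$ is $O(cn\delta/\sRadius)$; with $\sRadius=\Theta(\eps\delta)$ this is $O(cn/\eps)$.

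For the counting, fix an edge $\edge$ of $P$. Every edge $\edge'$ of $Q$ with $\distSet{\edge}{\edge'}\le\delta$ lies inside the $\delta$-neighborhood of $\edge$, which can be covered by $O(1+\mathrm{len}(\edge)/\delta)$ balls of radius $2\delta$. Inside a ball of radius $2\delta$, the $c$-packedness of $Q$ caps its total length at $O(c\delta)$, so, as all but one edge of $Q$ has length $\ge\sRadius$, at most $O(c\delta/\sRadius)$ edges of $Q$ meet that ball. Summing over the covering balls, $\edge$ is close to $O\bigl((1+\mathrm{len}(\edge)/\delta)\,c\delta/\sRadius\bigr)$ edges of $Q$, and summing over all edges $\edge$ of $P$ yields $O\bigl((c\delta/\sRadius)(n+\mathrm{len}(P)/\delta)\bigr)$, and symmetrically with the roles of $P$ and $Q$ exchanged. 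The term $O(cn\delta/\sRadius)$ is exactly the one we want, so the real work is to dispose of the $\mathrm{len}(P)/\delta$ (equivalently $\mathrm{len}(Q)/\delta$) contribution.

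That elimination is the main obstacle: an edge longer than $\delta$ is spread over $\Theta(\mathrm{len}(\edge)/\delta)$ of the covering balls, so the estimate above over-charges it. I would handle this by classifying a close pair $(\edge,\edge')$ according to whether each of $\edge,\edge'$ is \emph{short} (length $\le 2\delta$) or \emph{long}, and summing each class from the favorable side. Pairs involving a short edge are easy, since a short edge fits in a single ball of radius $O(\delta)$ and is therefore close to only $O(c\delta/\sRadius)$ edges of the other curve; summing over all short edges of $P$ and of $Q$ contributes only $O(cn\delta/\sRadius)$. The long--long pairs are the delicate case, where one must exploit the $c$-packedness of \emph{both} curves at once: a long edge $\edge'$ of $Q$ coming within $\delta$ of a long edge $\edge$ of $P$ necessarily runs for arc-length $\Omega(\delta)$ inside the $2\delta$-neighborhood of $\edge$, so at most $O(c\,\mathrm{len}(\edge)/\delta)$ long edges of $Q$ are close to $\edge$; dually, any fixed point is within $2\delta$ of at most $O(c)$ long edges of $P$, each consuming $\Omega(\delta)$ of the $O(c\delta)$ length budget of $P$ in a ball of radius $3\delta$. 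Integrating this dual estimate along $Q$, and invoking the minimum-edge-length condition once more to convert lengths back into numbers of edges, bounds the number of long--long pairs by $O(cn\delta/\sRadius)$ as well. Combining the three cases gives $\Nleq{\delta}(P,Q)=O(cn\delta/\sRadius)=O(cn/\eps)$; I expect this long--long bookkeeping to be the step that requires the most care.
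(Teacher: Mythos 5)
Note first that the paper itself does not prove this lemma --- it imports it from \cite{dhw-afdrc-12} --- so the comparison below is against the standard argument from that reference. Your reduction of $\Nleq{\delta}$ to counting pairs of edges at distance at most $\delta$ is correct, and your treatment of all pairs involving a short edge (length $\le 2\delta$) is sound: such an edge sits in a single ball of radius $O(\delta)$, the packing bound caps the other curve's length there at $O(c\delta)$, and dividing by the minimum edge length $\sRadius$ gives $O(c/\eps)$ partners per short edge, hence $O(cn/\eps)$ such pairs in total.

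The long--long case, however, has a genuine gap exactly where you predicted the difficulty would be. Your point-wise estimate (any point has $O(c)$ long edges of $P$ within $2\delta$) is true, but ``integrating it along $Q$'' counts each pair $(\edge,\edge')$ once for every $\delta$-length piece of $\edge'$ that lies near $\edge$, not once per pair; the resulting bound is $O\pth{c\cdot\mathrm{len}(Q)/\delta}$, and $\mathrm{len}(Q)$ is not controlled by $n\delta/\eps$ or by anything else. Concretely, take $P$ and $Q$ to be two parallel segments of enormous length $L$ at distance $\delta/2$: both are $2$-packed, $n=4$, the true number of close long--long pairs is $1$, yet the integral is $\Theta(cL/\delta)$. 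The proposed repair --- ``invoking the minimum-edge-length condition to convert lengths back into numbers of edges'' --- goes in the wrong direction: a lower bound $\sRadius$ on edge lengths bounds the number of edges by $\mathrm{len}(Q)/\sRadius$ from above, it does not bound $\mathrm{len}(Q)$ in terms of the number of edges. The fix (and the argument in \cite{dhw-afdrc-12}) is to charge each close pair to the \emph{shorter} of its two edges and to use a ball whose radius scales with that edge rather than with $\delta$: if $\edge'$ is the shorter edge, with length $\ell'\ge\sRadius$ and midpoint $m'$, then every edge $\edge$ of the other curve with $\mathrm{len}(\edge)\ge\ell'$ and $\distSet{\edge}{\edge'}\le\delta$ must spend arc length at least $\ell'$ inside $\ball{m'}{O(\delta+\ell')}$, whose packing budget is $O\pth{c(\delta+\ell')}$; hence $\edge'$ is charged $O\pth{c(\delta+\ell')/\ell'}=O\pth{c\delta/\sRadius}=O(c/\eps)$ times, and summing over the $n$ edges (plus the $O(1)$ exceptional terminal edges) gives $O(cn/\eps)$. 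Your short/long dichotomy at threshold $2\delta$ is not needed once this charging is in place.
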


\subsection{Building blocks for the algorithm}

The algorithm uses
the following two non-trivial data structures.
\begin{datastructure}%
    \dslab{d:s:magic:B}%
    Given a polygonal curve $\cZ$ with $n$ vertices in $\Re^d$, one
    can build a data structure, in $O\pth{\GridCompl^2 n \log^2 n }$
    time, using $O\pth{\GridCompl^2 n }$ space, where
    $\GridCompl=\NgridCompl$, that supports a procedure
    $\tunnelPrice{}\pth{\pnt,\pntA}$ which receives two points $\pnt$
    and $\pntA$ in the parametric space of $\cX$ and $\cY$ and returns
    a value $\phi$, such that $ \phi \leq \scPrice{\pnt}{\pntA} \leq
    (1+\eps)\phi$ in $O\pth{ \eps^{-3} \log n \log\log n}$ time.  See
    \secrefpage{f:r:query} and \thmrefpage{segment:queries:f:r}.
\end{datastructure}

\begin{datastructure}
    \dslab{d:s:magic:A}%
    For given parameters $\eps$ and $\delta$, and two $c$-packed
    curves $\cXOrig$ and $\cYOrig$ in $\Re^d$, let $\cX =
    \simpX{\cXOrig, \sRadius}$ and $\cX = \simpX{\cYOrig, \sRadius}$,
    where $\sRadius = \eps \delta$.  One can compute all the
    vertex-edge pairs of the two simplified curves $\cX$ and $\cY$ in
    distance at most $\delta$ from each other, in time $O( n \log n +
    c^2n / \eps )$. See below for details.
\end{datastructure}



We describe here how to realize \dsref{d:s:magic:A}.  Observe that
$\cX$ and $\cY$ have density $\density = O( c)$.  Now, we build the
data structure of \si{de Berg} and Streppel \cite{bs-arsbsp-06} for
the segments of $\cY$ (with $\eps=1/2$). For each vertex of $\cX$ we
compute all the segments of $\cY$ that are in distance at most
$\delta$ from it, using the data structure \cite{bs-arsbsp-06}. Each
query takes $O( \log n + k \density)$ time, where $k$ is the number of
edges reported.  \lemref{complexity:l:e:q} implies that the total sum
of the $k$'s is $O( c n /\eps)$.  We now repeat this for the other
direction. This way, one can realize \dsref{d:s:magic:A}.

\subsection{Monotonicity of the prices of tunnels}

The following two lemmas imply readily that under certain conditions
the prices of tunnels which share an endpoint are approximately
monotone with respect to the $x$-coordinate of their starting
point. We will exploit this in the approximation algorithm that
computes the reachability in the free-space diagram. We will see in
\secref{tunnel:test} that this drastically reduces the number of
tunnels that need to be inspected in order to decide if a particular
cell is reachable.

\begin{lemma}%
    \lemlab{monotone:shortcut:base}%
    Given a value $\delta > 0$ and two curves $\subcX{1}$ and
    $\subcX{2}$, such that $\subcX{2}$ is a subcurve of $\subcX{1}$,
    and given two line segments $\subsegY{1}$ and $\subsegY{2}$, such
    that $\distFr{\subcX{1}}{\subsegY{1}} \leq \delta$ and the start
    (resp., end) point of $\subcX{2}$ is in distance $\delta$ to the
    start (resp., end) point of ${\subsegY{2}}$, then
    $\distFr{\subcX{2}}{\subsegY{2}} \leq \constSC \delta$.
\end{lemma}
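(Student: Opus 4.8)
The plan is to build a reparametrization of $\subcX{2}$ against $\subsegY{2}$ by ``borrowing'' the matching of $\subcX{1}$ against $\subsegY{1}$ on the relevant portion, and then bound the resulting width by a triangle-inequality argument that uses the three given hypotheses. Since $\subcX{2}$ is a subcurve of $\subcX{1}$, there are parameters so that $\subcX{2} = \SubCrv{\subcX{1}}{a}{b}$ for some $0 \le a \le b \le 1$. Fix a reparametrization $f$ realizing (up to an arbitrarily small slack) $\distFr{\subcX{1}}{\subsegY{1}} \le \delta$; restricting $f$ to $[a,b]$ matches $\subcX{2}$ to a subsegment $\subsegYS' = \ScutCrv{\subsegY{1}}{f(a)}{f(b)}$ of $\subsegY{1}$, with width at most $\delta$. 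So $\distFr{\subcX{2}}{\subsegYS'} \le \delta$. The remaining task is to compare $\subsegYS'$ with the target segment $\subsegY{2}$ and pay only $O(\delta)$ more.

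The key geometric step is a bound on the \Frechet distance between two line segments that have close endpoints \emph{after routing through the common curve $\subcX{2}$}. Let $x_0, x_1$ denote the start and end points of $\subcX{2}$. By hypothesis, $\distX{x_0}{(\text{start of }\subsegY{2})} \le \delta$ and $\distX{x_1}{(\text{end of }\subsegY{2})} \le \delta$. On the other hand, from $\distFr{\subcX{2}}{\subsegYS'} \le \delta$ we get that the start of $\subsegYS'$ is within $\delta$ of $x_0$ and the end of $\subsegYS'$ is within $\delta$ of $x_1$ (the endpoints must be matched to endpoints). Hence the start of $\subsegYS'$ is within $2\delta$ of the start of $\subsegY{2}$, and likewise for the endpoints. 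By \obsref{f:r:segments}, $\distFr{\subsegYS'}{\subsegY{2}} = \max(\text{endpoint distances}) \le 2\delta$. Now concatenate the two matchings: walk $\subcX{2}$ against $\subsegYS'$ (width $\le \delta$), then — keeping $\subcX{2}$ fixed at its far end — this does not compose directly, so instead I would argue via the following cleaner route.

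The cleaner route is to bound $\distFr{\subcX{2}}{\subsegY{2}}$ directly using the reparametrization already in hand. Use $f|_{[a,b]}$ (rescaled to $[0,1]$) as the reparametrization of $\subcX{2}$ against $\subsegY{2}$: at parameter $\alpha$ it matches the point $\subcX{2}(\alpha)$ to the point $\subsegY{2}(\alpha)$. We must bound $\distX{\subcX{2}(\alpha)}{\subsegY{2}(\alpha)}$. Write $\subsegYS'(\alpha)$ for the corresponding point on $\subsegYS'$ under the same parametrization. Then
\[
\distX{\subcX{2}(\alpha)}{\subsegY{2}(\alpha)} \le \distX{\subcX{2}(\alpha)}{\subsegYS'(\alpha)} + \distX{\subsegYS'(\alpha)}{\subsegY{2}(\alpha)} \le \delta + \distX{\subsegYS'(\alpha)}{\subsegY{2}(\alpha)}.
\]
Both $\subsegYS'$ and $\subsegY{2}$ are segments, and by the uniform parametrization each of them moves at constant speed; the distance between corresponding points on two segments is a convex function of $\alpha$ (as in \obsref{f:r:segments}), so it is maximized at $\alpha \in \{0,1\}$, where it equals the endpoint distances bounded above by $2\delta$. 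Therefore $\distX{\subsegYS'(\alpha)}{\subsegY{2}(\alpha)} \le 2\delta$, giving $\distX{\subcX{2}(\alpha)}{\subsegY{2}(\alpha)} \le 3\delta = \constSC\delta$ for all $\alpha$, hence $\widthX{f|_{[a,b]}}{\subcX{2},\subsegY{2}} \le \constSC \delta$, and the lemma follows.

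The main obstacle is the bookkeeping in the first step: one has to be careful that the endpoints of $\subcX{2}$, as a \emph{subcurve} of $\subcX{1}$, really are matched (under $f$) to interior points of $\subsegY{1}$ that are within $\delta$ of those endpoints — this is just the definition of width applied to the restricted reparametrization, but it needs to be stated cleanly — and that restricting and rescaling an orientation-preserving reparametrization yields an orientation-preserving reparametrization, which is immediate. The only genuinely geometric input is \obsref{f:r:segments} (convexity of pointwise distance between two linearly-parametrized segments), used twice.
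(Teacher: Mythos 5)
Your proof is correct and follows essentially the same route as the paper's: restrict the optimal matching of $\subcX{1}$ and $\subsegY{1}$ to obtain a subsegment of $\subsegY{1}$ within \Frechet distance $\delta$ of $\subcX{2}$, observe that its endpoints are within $2\delta$ of those of $\subsegY{2}$ so the two segments are within \Frechet distance $2\delta$ by \obsref{f:r:segments}, and add the two bounds. The only difference is cosmetic: the paper simply invokes the triangle inequality for the \Frechet distance at the last step, whereas you unpack it into an explicit composed reparametrization --- your worry that the matchings ``do not compose directly'' is unfounded, since the composition you then carry out is exactly the standard proof of that triangle inequality.
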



\noindent
\begin{minipage}{0.8\textwidth}
    \begin{proof}
        Let $\segA$ denote the subsegment of $\subsegY{1}$ that is
        matched to $\subcX{2}$ under an optimal \Frechet mapping
        between $\subcX{1}$ and $\subsegY{1}$.  We know that
        $\distFr{\subcX{2}}{\segA} \leq \delta$ by this mapping.  The
        start point of $\subsegY{2}$ is in distance $2\delta$ to the
        start point of $\segA$, since they are both in distance
        $\delta$ to the start point of $\subcX{2}$ and the same holds
        for the end points. This implies that
        $\distFr{\segA}{\subsegY{2}}\leq 2\delta$.  Now, by the
        triangle inequality, $\distFr{\subcX{2}}{\subsegY{2}} \leq
        \distFr{\subcX{2}}{\segA} + \distFr{\segA}{\subsegY{2}} \leq
        \constSC \delta$.
    \end{proof}
\end{minipage}%
\begin{minipage}{0.19\textwidth}
    \hfill {\includegraphics[width=0.9\textwidth]{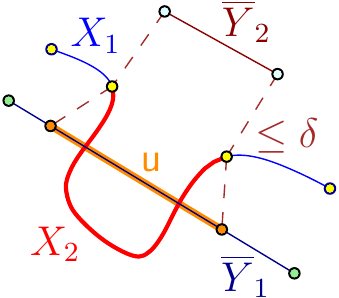}}
\end{minipage}

\begin{lemma}%
    \lemlab{monotone:shortcut}%
    Consider two polygonal curves $\cX$ and $\cY$, three points $\pnt,
    \pntA$ and $\pntB$ in their free space, and let $\delta' =
    \max(\frVal{\pnt}, \frVal{\pntA}, \frVal{\pntB})$. If $x(\pnt)
    \leq \x(\pntA) \leq \x(\pntB)$ then
    $\priceX{\xtunnel{\pntA}{\pntB}} \leq 3 \max \pth{\MakeSBig
       \delta', \priceX{\xtunnel{\pnt}{\pntB} } }$.%
    \footnote{Here, $\frVal{\pnt}$ is the elevation of $\pnt$, see
       \defref{elevation}, and $\xtunnel{\pntA}{\pntB}$ is the tunnel
       between $\pntA$ and $\pntB$, see \secref{tunnels:real:def}.}
\end{lemma}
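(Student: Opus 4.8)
The plan is to read this bound off directly from \lemref{monotone:shortcut:base} after unwinding the tunnel notation. Write $\pnt=(\xPnt,\yPnt)$, $\pntA=(\xPntA,\yPntA)$, and $\pntB=(\xPntB,\yPntB)$. By the definition of a tunnel and its price, $\priceX{\xtunnel{\pnt}{\pntB}}=\distFr{\subcX{1}}{\subsegY{1}}$ where $\subcX{1}:=\SubCrv{\cX}{\xPnt}{\xPntB}$ and $\subsegY{1}:=\ScutCrv{\cY}{\yPnt}{\yPntB}$, and likewise $\priceX{\xtunnel{\pntA}{\pntB}}=\distFr{\subcX{2}}{\subsegY{2}}$ where $\subcX{2}:=\SubCrv{\cX}{\xPntA}{\xPntB}$ and $\subsegY{2}:=\ScutCrv{\cY}{\yPntA}{\yPntB}$. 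As numbers, these prices only see the $x$-coordinates of the three points through the subcurves they define, so the stated hypothesis $\xPnt\le\xPntA\le\xPntB$ is all that will be needed (the $y$-monotonicity matters only for the interpretation as a monotone path in parametric space, not for this estimate).

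Next I set $\delta:=\max(\delta',\priceX{\xtunnel{\pnt}{\pntB}})$ and check the three hypotheses of \lemref{monotone:shortcut:base} for the pairs $(\subcX{1},\subsegY{1})$ and $(\subcX{2},\subsegY{2})$ with this value of $\delta$: (a) since $\xPnt\le\xPntA\le\xPntB$, the curve $\subcX{2}$ is a subcurve of $\subcX{1}$; (b) $\distFr{\subcX{1}}{\subsegY{1}}=\priceX{\xtunnel{\pnt}{\pntB}}\le\delta$; and (c) the start point $\cX(\xPntA)$ of $\subcX{2}$ is within $\frVal{\pntA}=\distX{\cX(\xPntA)}{\cY(\yPntA)}\le\delta'\le\delta$ of the start point $\cY(\yPntA)$ of $\subsegY{2}$, and symmetrically the end point $\cX(\xPntB)$ of $\subcX{2}$ is within $\frVal{\pntB}\le\delta'\le\delta$ of the end point $\cY(\yPntB)$ of $\subsegY{2}$. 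Then \lemref{monotone:shortcut:base} yields $\priceX{\xtunnel{\pntA}{\pntB}}=\distFr{\subcX{2}}{\subsegY{2}}\le\constSC\delta=3\max(\delta',\priceX{\xtunnel{\pnt}{\pntB}})$, which is exactly the claim.

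I do not expect a genuine obstacle here; the only point requiring care is the bookkeeping of orientations, namely that the endpoints of $\subcX{2}$ are $\cX(\xPntA)$ and $\cX(\xPntB)$ and that, under the uniform parametrizations of the polygonal curves, these pair up with the endpoints $\cY(\yPntA)$ and $\cY(\yPntB)$ of $\subsegY{2}$ in the correct order rather than swapped — and this is forced by $\xPnt\le\xPntA\le\xPntB$. In essence the lemma is just \lemref{monotone:shortcut:base} repackaged in tunnel language so that it can be applied repeatedly along a family of tunnels sharing the endpoint $\pntB$, which is precisely how it gets used in \secref{tunnel:test} to reduce the number of tunnels that must be examined to a constant per grid edge.
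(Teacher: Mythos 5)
Your proposal is correct and follows the same route as the paper: both define $\subcX{1}=\SubCrv{\cX}{\xPnt}{\xPntB}$, $\subcX{2}=\SubCrv{\cX}{\xPntA}{\xPntB}$, $\subsegY{1}=\ScutCrv{\cY}{\yPnt}{\yPntB}$, $\subsegY{2}=\ScutCrv{\cY}{\yPntA}{\yPntB}$ and invoke \lemref{monotone:shortcut:base}. Your write-up is in fact more careful than the paper's one-line proof, since you explicitly verify all three hypotheses of the base lemma.
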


\begin{proof}
    Let $\subcX{1}$ be the subcurve $\SubCrv{\cX}{\xPnt}{\xPntB}$, and
    let $\subcX{2} = \SubCrv{\cX}{\xPntA}{\xPntB}$.  Similarly, let
    $\subsegY{1}$ be the shortcut $\ScutCrv{\cY}{\yPnt}{\yPntB}$ and
    let $\subsegY{2} = \ScutCrv{\cY}{\yPntA}{\yPntB}$. By
    \lemref{monotone:shortcut:base} $\priceX{\xtunnel{\pntA}{\pntB}}
    \leq 3\delta'$ for $\delta' = \max \pth{
       \distFr{\subcX{1}}{\subsegY{1}}, \delta}$.
\end{proof}

We will see in \secref{second:stage} that the monotonicity property of
\lemref{monotone:shortcut} also enables a faster search over tunnel
events. The property holds even if the \tunnel{}s under consideration
are not valid. For example if $\xPnt < \xPntB$ and $\yPnt > \yPntB$
then the tunnel $\xtunnel{\pnt}{\pntB}$ is not a valid tunnel and it
cannot be used by a valid solution. Nevertheless,
$\xtunnel{\pnt}{\pntB}$ has a well defined price, and these prices
have the required monotonicity property.

\begin{lemma}
    \lemlab{monotone:matrix}%
    For a parameter $\delta \geq 0$, let $\pnt_1, \ldots, \pnt_m$ be
    $m$ points in the $\delta$-free space in increasing order by their
    $x$-coordinates, and let $\psi_i =
    \priceX{\xtunnel{\pnt_{i}}{\pnt_{m}}}$ for any $1\leq i\leq m$.
    Then, we have:
    \begin{compactenum}[\qquad (A)]
        \item If $\psi_i \geq \delta$ then for all $j> i$, we have
        $\priceX{\xtunnel{\pnt_{j}}{\pnt_{m}}} \leq 3\psi_i$.
        
        \item If $\psi_i > 3\delta$ then for all $j< i$, we have
        $\priceX{\xtunnel{\pnt_{j}}{\pnt_{m}}} \geq \psi_i/3$.
    \end{compactenum}
\end{lemma}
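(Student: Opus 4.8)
The plan is to read both statements off directly from \lemref{monotone:shortcut}. The only extra ingredient needed is the trivial observation that, since $\pnt_1,\dots,\pnt_m$ all lie in the $\delta$-free space, the quantity $\delta'=\max(\frVal{\pnt},\frVal{\pntA},\frVal{\pntB})$ appearing in that lemma is at most $\delta$ whenever its three arguments are chosen among $\pnt_1,\dots,\pnt_m$.

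\textbf{Part (A).} Fix $j$ with $i<j\le m$. The points are sorted by their $x$-coordinate, so $\x(\pnt_i)\le\x(\pnt_j)\le\x(\pnt_m)$, and applying \lemref{monotone:shortcut} to the triple $(\pnt,\pntA,\pntB)=(\pnt_i,\pnt_j,\pnt_m)$ gives
\[
  \priceX{\xtunnel{\pnt_j}{\pnt_m}} \le 3\max(\delta',\psi_i) \le 3\max(\delta,\psi_i) = 3\psi_i,
\]
where $\delta'=\max(\frVal{\pnt_i},\frVal{\pnt_j},\frVal{\pnt_m})\le\delta$ and the last equality uses the hypothesis $\psi_i\ge\delta$. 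The degenerate case $j=m$ needs no separate treatment, since the proofs of \lemref{monotone:shortcut} and \lemref{monotone:shortcut:base} nowhere require the three points to be distinct.

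\textbf{Part (B).} Fix $j$ with $1\le j<i$ (here $i<m$, since otherwise $\psi_i=\priceX{\xtunnel{\pnt_m}{\pnt_m}}=\frVal{\pnt_m}\le\delta$ would contradict $\psi_i>3\delta\ge 0$). Since $\x(\pnt_j)\le\x(\pnt_i)\le\x(\pnt_m)$, applying \lemref{monotone:shortcut} to $(\pnt,\pntA,\pntB)=(\pnt_j,\pnt_i,\pnt_m)$ gives
\[
  \psi_i = \priceX{\xtunnel{\pnt_i}{\pnt_m}} \le 3\max(\delta',\psi_j) \le 3\max(\delta,\psi_j),
\]
with $\delta'\le\delta$ as before. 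As $\psi_i>3\delta$, the maximum cannot equal $\delta$, so $\psi_j>\delta$, the maximum equals $\psi_j$, and we conclude $\psi_i\le 3\psi_j$, i.e.\ $\priceX{\xtunnel{\pnt_j}{\pnt_m}}=\psi_j\ge\psi_i/3$.

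\textbf{Where the work is.} There is essentially no obstacle at this stage: each part is a one-line invocation of \lemref{monotone:shortcut} followed by a two-case evaluation of a $\max$, and all the geometric content has already been packaged into \lemref{monotone:shortcut} (which in turn rests on the triangle-inequality estimate of \lemref{monotone:shortcut:base} together with \obsref{f:r:segments}). The only point to keep in mind is that \lemref{monotone:shortcut} remains valid even when the triple $(\pnt_i,\pnt_j,\pnt_m)$ is not a monotone configuration in the free-space diagram, so it may be applied above regardless of the relative $y$-coordinates of the three points, even though some of the tunnels $\xtunnel{\cdot}{\cdot}$ produced along the way need not be valid tunnels — precisely the subtlety flagged in the remark following \lemref{monotone:shortcut}.
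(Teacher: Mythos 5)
Your proof is correct and follows essentially the same route as the paper's: both parts are direct applications of \lemref{monotone:shortcut} to the triples $(\pnt_i,\pnt_j,\pnt_m)$ and $(\pnt_j,\pnt_i,\pnt_m)$ respectively, with the observation that $\delta'\leq\delta$ because all points lie in the $\delta$-free space. Your write-up merely makes explicit a few details (the bound $\delta'\leq\delta$, the degenerate cases) that the paper leaves implicit.
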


\begin{proof}
    To see the first part of the claim, note that by
    \lemref{monotone:shortcut}, $\priceX{\xtunnel{\pnt_{j}}{\pnt_{m}}}
    \leq 3 \max\pth{\delta,\psi_i} \leq 3 \psi_i$. As for the second
    part, we have by the same lemma that $ \delta < \psi_i/3 \leq
    \max\pth{\delta,\priceX{\xtunnel{\pnt_{j}}{\pnt_{m}}} } $, and
    thus $\psi_i/3 \leq \priceX{\xtunnel{\pnt_{j}}{\pnt_{m}}}$.
\end{proof}

\section{Approximating the shortcut \Frechet distance}%
\seclab{algo:unbounded:general}%

We describe the algorithm to approximate the \asymmetric{} \vrestricted{}
shortcut \Frechet distance between two given polygonal curves $\cX$ and $\cY$
where the number of shortcuts that can be used in a solution is unbounded.
In \secref{analysis:unbounded} we prove the correctness and analyze the
complexity of this algorithm.

%

\subsection{The \tunnelTest{} procedure}
\seclab{tunnel:test}

A key element in the decision procedure is the \tunnelTest{} procedure
depicted in \figref{tunnel:test}.  During the decision procedure, we
will repeatedly invoke the \tunnelTest procedure with a set of gates
$\Reached$, for which we already know that they are contained in the
reachable free space $\FDleqI{\infty}{\delta}\pth{\cX,\cY}$, and the
left gate associated with a horizontal free space interval of
$\FullFDleq{\delta}(\cX,\cY)$, in order to determine, if and to which
extent this interval is reachable.

Intuitively, this procedure receives as input a set of reachable
points in the parametric space and a free space interval (in the form
of the left gate) and we are asking if there exists an
\emphi{affordable} \tunnel connecting a reachable point to the
interval. Here, affordable means that its price is less than $\delta$.
More precisely, the procedure receives a set of gates $\Reached$ and a
gate $\pnt$ as input and returns the endpoint of an (approximately)
affordable tunnel that starts at a gate of $\Reached$ and ends either
at $\pnt$ or the leftmost point to the right of $\pnt$ in the same
free space interval.  If a tunnel between a gate in $\Reached$ and the
free space interval of $\pnt$ exists, which has price less than
$\delta$, then the algorithm will return the endpoint of a tunnel of
price less than or equal to $(1+\eps)3\delta$.  If the algorithm
returns \NULL, then we know that no such tunnel of price less than
$\delta$ exists.

The main idea of the \tunnelTest procedure is the following.  For a
given \tunnel, we can $(1+\eps)$-approximate its price, using a data
structure which answers these queries in polylogarithmic time, see
\dsref{d:s:magic:B}.  The desired tunnel could be a vertical tunnel
which starts at a gate of $\Reached$, or a tunnel between a gate of
$\Reached$ and $\pnt$.  Naively, one could test all tunnels that start
from a gate in $\Reached$ and end in $\pnt$, however, this takes time
at least linear in the size of $\Reached$. Since we are only
interested in a constant factor approximation, it is sufficient, by
\lemref{monotone:shortcut}, to test only the \tunnel which corresponds
to the shortest subcurve of $\cX$.  The corresponding gates can be
found in polylogarithmic time using a two dimensional range tree,
which is built on the set $\Reached$ and we assume is available to
us. We can maintain the range tree during the decision procedure
depicted in \figref{decider:infty:inner}. The technical details are
described in the proof of \lemrefpage{s:y:m:f:s}.  An alternative
solution that uses a balanced binary search tree only is described in
\cite{d-raapg-13}.

\begin{figure}[t]%
    \begin{center}%
        \fbox{\begin{minipage}{0.76\linewidth}%
               \tunnelTest{}$(\Reached, \pnt, \eps, \delta)$
               \begin{algorithmic}[1]
                   \STATE Let $\pntA = (\xPntA,\yPntA)$ point in
                   $\Reached$ with $\max$ value of $\xPntA$,\\
                   ~~~~~~~~~~ such that $\xPntA \leq \xPnt$ and
                   $\yPntA < \yPnt$, where $\pnt=(\xPnt,\yPnt)$.
                   \linelab{range:search}
                   
		   \STATE $\phi = \tunnelPrice{}\pth{\pntA,\pnt}$, see
                   \dsref{d:s:magic:B}. %
                   \linelab{tunnel:compute:phi}
                   \IF { $\phi \leq \constSC\delta$ }
                   \linelab{tunnel:test}
                   
                   \STATE Return $\pnt$ ~~~ ~ \CodeComment{// \tunnel
                      $\xtunnel{\pntA}{\pnt}$} \linelab{X:i}
                   
                   \ENDIF
                   
                   \STATE Compute $j$ such that $\xPnt \in
                   \IEdge{\cX}{j} = [\x_j, \x_{j+1}]$
                   \linelab{col:index}
                   
                   \STATE Let $\pntA = (\xPntA, \yPntA)$ point in
                   $\Reached$ with $\min$ value of $\xPntA$,\\
                   ~~~~~~~~~~ such that $\xPntA \in \IEdge{\cX}{j}$,
                   $\xPntA \geq \xPnt$, and $\yPntA < \yPnt$
                   \linelab{range:search:2}
                   
                   \IF{ $\pntA$ does not exist}%
                   \STATE Return \NULL.
                   \ENDIF%
                   
                   \STATE $\pntD = ( \xPntA , \yPnt)$
                   
                   \IF{ $\frVal{\pntD} \leq \delta$}
                   
                   \STATE Return $\pntD$ \linelab{X:i:2} %
                   ~~~~~\CodeComment{// vertical \tunnel
                      $\xtunnel{\pntA}{\pntD}$}
                   
                   \ELSE
                   
                   \STATE Return \NULL.
                   
                   \ENDIF
                   
               \end{algorithmic}
           \end{minipage}%
        }
    \end{center}%
    \vspace{-0.5cm}%
    \caption{%
       The \tunnelTest procedure receives a set of gates $\Reached$
       and a gate $\pnt$ in the parametric space and returns the
       endpoint of an affordable tunnel between $\Reached$ and $\pnt$
       (or a point close to it) if it exists. The technical details of
       the range queries in \lineref{range:search} and
       \lineref{range:search:2} are described in the proof of
       \lemrefpage{s:y:m:f:s}.  }
    \figlab{tunnel:test}%
\end{figure}%

\subsection{The decision algorithm}
\seclab{algo:decider}

In the decision problem we want to know whether the shortcut \Frechet
distance between two curves, $\cX$ and $\cY$, is smaller or equal a
given value $\delta$.  The free space diagram
$\FullFDleq{\delta}(\cX,\cY)$ may consist of a certain number of
disconnected components and our task is to find a monotone path from
$(0,0)$ to $(1,1)$, that traverses these components while using
shortcuts between vertices of $\cY$ to ``bridge'' between points in
different components or where there is no monotone path connecting
them (see \figref{simple:example}).  The decision algorithm exploits
the monotonicity of the tunnel prices shown in
\lemref{monotone:shortcut} and is based on a breadth first search in
the free space diagram (a similar idea was used
in \cite{dhw-afdrc-12}, but here the details are more involved).

\begin{figure}[t]
    \begin{center}
        \fbox{\begin{minipage}{\FigWidth}%
               \deciderFr{}$(\cX, \cY,\eps, \delta)$
               \begin{algorithmic}[1]
                   \STATE Assert that $\frVal{ 0,0} =
                   \distX{\cX(0)}{\cY(0)} \leq \delta$ and $\frVal{
                      1,1} \leq \delta$
                   
                   \STATE Let $\Queue$ be a min-priority queue for
                   nodes $v(i,j)$ with keys $(jn+i)$
                   
                   \STATE Compute and enqueue the cells
                   $\CellXY{i}{j}$ that have non-empty
                   $\IHCellXY{i}{j}$ or $\IVCellXY{i}{j}$.
                   
                   \STATE Let $\Reached = \brc{(0,0)}$.
                   
                   \WHILE{$\Queue \neq \emptyset$}
                   
                   \STATE Dequeue node $v(i,j)$ and its copies from
                   $\Queue$
                   
                   \STATE Let $\pnt$ be the left gate of
                   $\IHCellXY{i}{j}$
                   
                   \STATE $\pntD = \tunnelTest{}(\Reached, \pnt, \eps,
                   \delta)$ \linelab{decider:t:t}
                   
                   \STATE Compute $\RHCellXY{i}{j}$ and
                   $\RVCellXY{i}{j}$ from $\pntD$,
                   $\RVCellXY{i-1}{j}$, $\RHCellXY{i}{j-1}$,
                   $\IVCellXY{i}{j}$ and $\IHCellXY{i}{j}$
                   
                   \IF{$\RVCellXY{i}{j} \neq \emptyset$}
                   
                   \STATE Enqueue $v(i+1,j)$ and insert edge between
                   $v(i,j)$ and $v(i+1,j)$
                   
                   \ENDIF
                   
                   \IF{$\RHCellXY{i}{j} \neq \emptyset$}
                   
                   \STATE Enqueue $v(i,j+1)$ and insert edge between
                   $v(i,j)$ and $v(i,j+1)$
                   
                   \STATE Add gates of $\RHCellXY{i}{j}$ to $\Reached$
                   
                   \ENDIF
                   
                   
                   
                   
                   \ENDWHILE
                   
                   \IF{$(1,1) \in \Reached$}
                   
                   \STATE Return ``$\distoSFr{\cX}{\cY} \leq
                   (1+\eps)\constSC\delta$''
                   
                   \ELSE
                   
                   \STATE Return ``$\distoSFr{\cX}{\cY} > \delta$''
                   
                   \ENDIF
                   
               \end{algorithmic}
           \end{minipage}}%
    \end{center}
    \vspace{-0.6cm}%
    \caption{The decision procedure \deciderFr{} for the shortcut
       \Frechet distance.  }
    \figlab{decider:infty:inner}
\end{figure}

Given two curves $\cX$ and $\cY$, and parameters $\delta$ and $\eps$,
the algorithm may output an answer equivalent to ``yes'' if there
exists a shortcut curve $\cY'$ of $\cY$, such that
$\distoSFr{\cX}{\cY'} \leq \delta$ and an answer equivalent ``no'' if
there exists no shortcut curve such that $\distoSFr{\cX}{\cY'} \leq
(1+\eps)3\delta$.

\subsubsection{Detailed description of the decision procedure}
\seclab{b:f:s:decider}

The decision algorithm is depicted in \figref{Decider:infty} (and
\figref{decider:infty:inner}).  The algorithm uses a directed graph
$\Graph$ that has a node $v(i,j)$ for every free space cell
$\CellXY{i}{j}$ whose boundary has a non-empty intersection with the
free space $\FullFDleq{\delta}(\cX,\cY)$.  These intersections are
defined as the free space intervals $\IHCellXY{i}{j}$,
$\IVCellXY{i}{j}$, $\IHCellXY{i-1}{j}$ and $\IVCellXY{i}{j-1}$, see
\secref{background}.  For any path along the edges of the graph
$\Graph$ from $v(1,1)$ to $v(i,j)$, there exists a monotone path that
traverses the corresponding cells of $\FullFDleq{\delta}(\cX,\cY)$
while using zero or more affordable tunnels.  A node $v(i,j)$ can have
an incoming edge from another node $v(i',j')$, if $i'\leq i$ and $j'
\leq j$ and either $v(i',j')$ is a neighboring node, or the two cells
can be connected by an affordable tunnel which starts at the lower
boundary of the cell corresponding to $v(i',j')$ and ends at the upper
boundary of the cell corresponding to $v(i,j)$.  The idea of the
algorithm is to propagate reachability intervals $\RVCellXY{i}{j}
\subset \IVCellXY{i}{j}$ and $\RHCellXY{i}{j} \subset \IHCellXY{i}{j}$
while traversing a sufficiently large subgraph starting from $v(1,1)$,
and computing the necessary parts of this subgraph on the fly.  We
store these intervals with the cell $v(i,j)$ that has them on the top
(resp., right) boundary.  The reachability intervals $\RVCellXY{i}{j}$
being computed satisfy
\begin{equation}
    \FDleqI{\infty}{\delta}\pth{\cX,\cY} \cap \IVCellXY{i}{j}%
    \subseteq%
    \RVCellXY{i}{j}%
    \subseteq%
    \FDleqI{\infty}{(1+\eps)3\delta}\pth{\cX,\cY} \cap
    \IVCellXY{i}{j},
    \eqlab{reachable:good}
\end{equation}
and an analogous statement applies to $\RHCellXY{i}{j}$.  The aim is
to determine if either $(1,1) \in
\FDleqI{\infty}{(1+\eps)3\delta}\pth{\cX,\cY}$ or $(1,1) \notin
\FDleqI{\infty}{\delta}\pth{\cX,\cY}$.  Throughout the whole algorithm
we also maintain a set of gates $\Reached$, which represents the
endpoints of the horizontal reachability intervals computed so far.

We will traverse the graph by handling the nodes in a row-by-row
order, thereby handling any node $v(i,j)$ only after we handled the
nodes $v(i',j')$, where $j'\leq j$, $i'\leq i$ and $(i'+j')<(i+j)$.
To this end we keep the nodes in a min-priority queue where the node
$v(i,j)$ has the key $(jn+i)$.  The correctness of the computed
reachability intervals will follow by induction on the order of these
keys.  Furthermore, it will ensure that we handle each node at most
once and that we traverse at most three of the incoming edges to each
node of the graph.

The queue is initialized with the entire node set at once.  To compute
this initial node set and the corresponding free space intervals we
use \dsref{d:s:magic:A}.  The algorithm then proceeds by handling
nodes in the order of extraction from this queue.  When dequeuing
nodes from the queue, the same node might appear three times
(consecutively) in this queue. Once from each of its direct neighbors
in the grid and once from the initial enqueuing.

In every iteration, the algorithm dequeues the one or more copies of
the same node $v(i,j)$ and merges them into one node if necessary.
Assume that $v(i,j)$ has an incoming edge that corresponds to an
affordable tunnel.  Let $\pnt$ be the left gate of
$\IHCellXY{i}{j}$. We invoke \tunnelTest{}$(R,\pnt,\eps,\delta)$ to
test if this is the case.  If the call returns \NULL, then there is no
such affordable tunnel.  Otherwise, we know that the returned point
$\pntD$ is contained in $\RHCellXY{i}{j}$.
If there were more than one copies of this node in the queue, we also
access the reachability intervals of the one or two neighboring
vertices (i.e., $\RVCellXY{i-1}{j}$ and $\RHCellXY{i}{j-1}$).  Using
the reachability information from the at most three incoming edges
obtained this way, we can determine if the cells $\CellXY{i}{j+1}$ and
$\CellXY{i+1}{j}$ are reachable, by computing the resulting
reachability intervals $\RHCellXY{i}{j}$ at the top side and
$\RVCellXY{i}{j}$ the right side of the cell $\CellXY{i}{j}$.  Since
the free space within a cell is convex and of constant complexity,
this can be done in constant time.

Now, if $\RHCellXY{i}{j} \neq \emptyset$ we create a node $v(i,j+1)$,
connect it to $v(i,j)$ by an edge, we enqueue it, and add the gates of
$\RHCellXY{i}{j}$ to $\Reached$.  If $\RVCellXY{i}{j} \neq \emptyset$
we create a node $v(i+1,j)$, connect it to $v(i,j)$ by an edge, and we
enqueue it.  If we discover that the top-right corner of the free
space diagram is reachable this way, we output the equivalent to
``yes'' and the algorithm terminates. In this case we must have added
$(1,1)$ as a gate to $\Reached$.  The algorithm may also terminate
before this happens if there are no more nodes in the queue, in this
case we output that no suitable shortcut curve exists.

\begin{figure}[t]
    \begin{center}
        \fbox{\begin{minipage}{\FigWidth}%
               \DeciderFr{}$(\cXOrig, \cYOrig,\eps, \delta)$
               \begin{algorithmic}[1]
                   \STATE Let $\eps'=\eps/10$
                   
                   \STATE Compute $\cX=\simpX{\cXOrig,\sRadius}$ and
                   $\cY=\simpX{\cYOrig,\sRadius}$ with
                   $\sRadius=\eps'\delta$
	       
                   \STATE Call $\deciderFr(\cX,\cY,\eps',\delta')$
                   with $\delta'=(1+2\eps')\delta$
                   \linelab{decider:replace}
                   
                   \STATE Return either ``$\distoSFr{\cXOrig}{\cYOrig}
                   \leq (1+\eps)\constSC\delta$'' or
                   ``$\distoSFr{\cXOrig}{\cYOrig} > \delta$''
                   
               \end{algorithmic}
           \end{minipage}}%
    \end{center}
    \vspace{-0.6cm}%
    \caption{The resulting decision procedure \DeciderFr{}.  A
       detailed description of the complete algorithm is given in
       \secref{b:f:s:decider}. }
    \figlab{Decider:infty}
\end{figure}

\subsection{The main algorithm}
\seclab{algo:main}

The given input is two curves $\cX$ and $\cY$.  We want to use the
approximate decision procedure \DeciderFr, described above, in a
binary search like fashion to compute the shortcut \Frechet distance.
Conceptually, one can think of the decider as being exact. In
particular, the algorithm would, for a given value of $\delta$, call
the decision procedure twice with parameters $\delta$ and $\delta'=
\delta / \constDec$ (using $\eps=1/3$). If the two calls agree, then
we can make an exact decision, if the two calls disagree, then we can
output a $O(1)$-approximation of the shortcut \Frechet distance.

The challenge is how to choose the right subset of candidate values to
guide this binary search.  Some of the techniques used for this search
have been introduced in previous papers. In particular, this holds for
the search over vertex-vertex, vertex-edge and monotonicity events
which we describe as preliminary computations in \secref{first:stage}.
This stage of the algorithm eliminates the candidate values that also
need to be considered for the approximation of the standard \Frechet
distance and it is almost identical to the algorithm presented in
\cite{dhw-afdrc-12}.

As mentioned before, a monotone path could also become usable by
taking a tunnel.  There are two types of events associated with a
\tunnel family: The first time such that any \tunnel in this family is
feasible, which is the \emphi{creation radius}.  Fortunately, the
creation radii of all \tunnels are approximated by the set of
vertex-vertex and vertex-edge event radii, and our first stage search
(see \secref{first:stage}) would thus take care of such events.

The other events we have to worry about are the first time that the
feasible family of \tunnel{}s becomes usable via a \tunnel (i.e., the
price of some tunnel in this family is below the distance threshold
$\delta$).  Luckily, it turns out that it is sufficient to search over
the price of the canonical \tunnel{} associated with such a family.
The price of a specific \tunnel can be approximated quickly using
\dsref{d:s:magic:B}.  However, there are $\Theta(n^4)$ \tunnel
families, and potentially the algorithm has to consider all of them.
Fortunately, because of $c$-packedness, only $O(n^2)$ of these events
are relevant.  A further reduction in running time is achieved by
using a certain monotonicity property of the prices of these
\tunnel{}s and our ability to represent them implicitly to search over
them efficiently.

\subsubsection{The algorithm -- First stage}
\seclab{first:stage}

We are given two $c$-packed polygonal curves $\cXOrig$ and $\cYOrig$
with total complexity $n$.  We repeatedly compute sets of event values
and perform binary searches on these values as follows.

We compute the set of vertices $V$ of the two curves, and using
well-separated pairs decomposition, we compute, in $O(n \log n)$ time,
a set $U$ of $O(n)$ distances that, up to a factor of two, represents
any distance between any two vertices of $V$. Next, we use \DeciderFr
(with fixed $\eps = 1/3$) to perform a binary search for the atomic
interval in $U$ that contains the desired distance. Let $[\alpha,
\beta]$ denote this interval.  If $10\alpha \geq \beta/10$ then we are
done, since we found a constant size interval that contains the
\Frechet distance. Otherwise, we use the decision procedure to verify
that the desired radius is not in the range $[\alpha, 10\alpha]$ and
$[\beta/10, \beta]$.  For $\alpha' = 3\alpha$, $\beta' = \beta/3$, let
$\Interval'=[\alpha',\beta']$ denote the obtained interval.

We now continue the search using only \deciderFr and the simplified
curves $\cX = \simpX{\cXOrig,\sRadius}$ and $\cY =
\simpX{\cYOrig,\sRadius}$, where $\sRadius = \alpha'$.  We extract the
vertex-edge events of $\cX$ and $\cY$ that are smaller than $\beta'$,
see \secref{background}. To this end, we compute all edges of $\cX$
that are in distance at most $\beta'$ of any vertex of $\cY$ and vice
versa using \dsref{d:s:magic:A}.  Let $U'$ be the set of resulting
distances.  We perform a binary search, using \deciderFr to find the
atomic interval $\Interval''=[\alpha'', \beta'']$ of $U' \cap
\Interval'$ that contains the shortcut \Frechet distance between $\cX$
and $\cY$.

Finally, we again search the margins of this interval, so that either
we found the desired approximation, or alternatively we output the
interval $[10\alpha'', \beta''/10]$,

\subsubsection{Second Stage -- Searching over tunnel prices}
\seclab{second:stage}


It remains to search over the canonical prices of tunnel families
$\xTunnels{\edge}{\edge'}{u}{v}$, where $\edge \neq \edge'$%
\footnote{Since for the case where $\edge=\edge'$ the canonical price
   coincides with the creation event value.}.  After the first stage,
we have an interval $[\alpha,\beta] = [10\alpha'', \beta''/10]$, and
simplified curves $\cX$ and $\cY$ of which the shortcut \Frechet
distance is contained in $[\alpha,\beta]$ and approximates
$\distoSFr{\cXOrig}{\cYOrig}$.  By \lemref{complexity:l:e:q}, the
number of vertex-edge pairs in distance $\beta$ is bounded by
$O(cn/\eps)$.  The corresponding horizontal grid edges in the
parametric space contain the canonical gates which are feasible for
any value in $[\alpha,\beta]$.  Let $\PntSet$ denote the
$m=O(cn/\eps)$ points in the parametric space that correspond to the
canonical gates of these vertex-edge pairs; %
that is, for every feasible pair $\pnt$ (a vertex of $\cY$) and
$\edge$ (an edge of $\cX$), we compute the closest point $\pntA$ on
$\edge$ to $\pnt$, and place the point corresponding to $(\pntA,
\pnt)$ in the free space into $\PntSet$.

It is sufficient to consider the tunnel families between these
vertex-edge pairs, since all other families are not feasible in the
remaining search interval.  Thus, if we did not care about the running
time, we could compute and search over the prices of the tunnels
$\PntSet \times \PntSet$, using \dsref{d:s:magic:B}. Naively, this
would take roughly quadratic time. Instead, we use a more involved
implicit representation of these \tunnel{}s to carry out this task.

\paragraph{Implicit search over \tunnel prices.}

Consider the implicit matrix of tunnel prices $M=\PntSet \times
\PntSet$ where the entry $M(i,j)$ is a $(1+\eps)$-approximation to the
price of the canonical tunnel $\xtunnel{\pnt_i}{\pnt_j}$. By
\lemref{monotone:matrix}, the first $j$ values of the $j$\th row of
this matrix are monotonically decreasing up to a constant factor,
since they correspond to tunnels that share the same endpoint $\pnt_j$
and are ordered by their starting points $\pnt_i$ (we ignore the
values in this matrix above the diagonal).  Using \dsref{d:s:magic:B}
we can $(1+\eps)$-approximate a value in the matrix in polylogarithmic
time per entry.  Similarly, the lower triangle of this matrix is
sorted in increasing order in each column. As such, this matrix is
sorted in both rows and columns and one can apply the algorithm of
Frederickson and Johnson \cite{fj-gsrsm-84} to find the desired
value. This requires $O(\log m)$ calls to \DeciderFr, the evaluation
of $O(m)$ entries in the matrix, and takes $O( m)$ time otherwise.
Here, we are using \DeciderFr as an exact decision procedure.  The
algorithm will terminate this search with the desired constant factor
approximation to the shortcut \Frechet distance.

\section{Analysis}
\seclab{analysis:unbounded}

\subsection{Analysis of the \tunnelTest{} procedure}

\begin{lemma}%
    \lemlab{tunnel:test}%
    Given the left gate $\pnt$ of a free space interval
    $\IHCellXY{i}{j}$ and a set of gates $\Reached$, and parameters
    $0< \eps \leq 1$ and $\delta>0$, the algorithm $\tunnelTest$
    depicted in \figref{tunnel:test} outputs one of the following:
    \begin{compactenum}[(i)]
        \item A point $\pntD \in \IHCellXY{i}{j}$, such that there
        exists a tunnel $\xtunnel{\pntA}{\pntD}$ of price
        $\scPrice{\pntA}{\pntD} \leq (1+\eps)3\delta$ from a gate
        $\pntA \in \Reached$, or
        
        \item \NULL, in this case, there exists no tunnel of price
        less than or equal to $\delta$ between a gate of $\Reached$
        and a point in $\IHCellXY{i}{j}$.
    \end{compactenum}
    Furthermore, in case $(i)$, there exists no other point $\pntB \in
    [\pnt,\pntD]$ that is the endpoint of a tunnel from $\Reached$
    with price less than or equal to $\delta$.
\end{lemma}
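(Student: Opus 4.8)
The plan is to trace each of the four exits of the procedure in \figref{tunnel:test} — the two that return a point (\lineref{X:i} and \lineref{X:i:2}) and the two that return \NULL{} — and to verify the three assertions exit by exit. Two facts are used throughout: every gate of $\Reached$ and the left gate $\pnt$ lie in $\FullFDleq{\delta}(\cX,\cY)$, hence have elevation at most $\delta$; and by \dsref{d:s:magic:B} the estimate of \lineref{tunnel:compute:phi} satisfies $\phi \le \scPrice{\pntA}{\pnt} \le (1+\eps)\phi$ (if no gate meets the condition of \lineref{range:search}, read $\phi=\infty$ there). The analytic workhorse is \lemref{monotone:shortcut:base}: restricting the $\cX$-subcurve and $\cY$-shortcut of an affordable tunnel to a sub-segment whose two endpoints are each within $\delta$ of the originals inflates the price by at most a factor $\constSC$. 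For \emph{soundness}: if the procedure exits at \lineref{X:i} it returns $\pnt$, which lies in $\IHCellXY{i}{j}$ as its left endpoint, and the gate $\pntA$ of \lineref{range:search} has $\xPntA\le\xPnt$, $\yPntA<\yPnt$, so $\xtunnel{\pntA}{\pnt}$ is a genuine monotone tunnel out of $\Reached$ whose price is at most $(1+\eps)\phi\le(1+\eps)\constSC\delta$ by the test on \lineref{tunnel:test} and \dsref{d:s:magic:B}. If it exits at \lineref{X:i:2}, let $\pntA$ now be the gate of \lineref{range:search:2}; then $\pntD=(\xPntA,\yPnt)$ lies on the top edge of $\CellXY{i}{j}$ (as $\xPntA\ge\xPnt$ and $\xPntA$ is in the edge of $\cX$ containing $\xPnt$) and is free ($\frVal{\pntD}\le\delta$ was verified), so $\pntD\in\IHCellXY{i}{j}$, while the tunnel $\xtunnel{\pntA}{\pntD}$ matches the single point $\cX(\xPntA)$ to the shortcut $\ScutCrv{\cY}{\yPntA}{\yPnt}$, so by \obsref{f:r:segments} its price is $\max(\frVal{\pntA},\frVal{\pntD})\le\delta$. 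This gives (i).

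For \emph{completeness} I argue the contrapositive: if there is a tunnel $\xtunnel{\pntB}{\pntC}$ with $\pntB\in\Reached$, $\pntC\in\IHCellXY{i}{j}$ and $\scPrice{\pntB}{\pntC}\le\delta$, then the output is not \NULL{}. We may take $\yPntB<\yPnt$; a hypothetical tunnel with $\yPntB=\yPnt$ just asserts that the horizontal segment at height $\yPnt$ between $\pntB$ and $\pntC$ is entirely free, which is ordinary free-space connectivity handled by the BFS of \figref{decider:infty:inner} and is not the concern of \tunnelTest{}. Since $\pntC\in\IHCellXY{i}{j}$ we have $\yPntC=\yPnt$, $\xPnt\le\xPntC$, and $\xPntC$ lies in the edge of $\cX$ containing $\xPnt$. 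If $\xPntB\le\xPnt$, then $\pntB$ meets the condition of \lineref{range:search}, so the gate $\pntA$ selected there has $\xPntB\le\xPntA\le\xPnt$; applying \lemref{monotone:shortcut:base} with $\subcX{1}=\SubCrv{\cX}{\xPntB}{\xPntC}$, $\subsegY{1}=\ScutCrv{\cY}{\yPntB}{\yPnt}$ and $\subcX{2}=\SubCrv{\cX}{\xPntA}{\xPnt}$, $\subsegY{2}=\ScutCrv{\cY}{\yPntA}{\yPnt}$ — the containment $\subcX{2}\subseteq\subcX{1}$ holds since $\xPntB\le\xPntA\le\xPnt\le\xPntC$, and the endpoints of $\subcX{2}$ are within $\delta$ of those of $\subsegY{2}$ since $\frVal{\pntA},\frVal{\pnt}\le\delta$ — yields $\phi\le\scPrice{\pntA}{\pnt}\le\constSC\delta$, so the test on \lineref{tunnel:test} succeeds and the procedure returns $\pnt$. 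If instead $\xPntB>\xPnt$, then $\xPnt<\xPntB\le\xPntC$ all lie in the edge of $\cX$ containing $\xPnt$, so $\pntB$ meets the condition of \lineref{range:search:2}; hence a gate $\pntA$ is found there with $\xPnt\le\xPntA\le\xPntB\le\xPntC$, and since $x\mapsto\distX{\cX(x)}{\cY(\yPnt)}$ is convex along that edge and at most $\delta$ at $x=\xPnt$ (because $\frVal{\pnt}\le\delta$) and at $x=\xPntC$ (because $\pntC\in\FullFDleq{\delta}(\cX,\cY)$), we get $\frVal{\pntD}=\distX{\cX(\xPntA)}{\cY(\yPnt)}\le\delta$ and the procedure returns $\pntD=(\xPntA,\yPnt)$ at \lineref{X:i:2}. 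In neither case is the output \NULL{}, which is (ii).

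For the \emph{``furthermore''} part: if the procedure exited at \lineref{X:i} then $\pntD=\pnt$ and $[\pnt,\pntD]$ is a single point, so there is nothing to show. Otherwise it exited at \lineref{X:i:2}, which means the test on \lineref{tunnel:test} failed, i.e.\ $\phi>\constSC\delta$, and $\pntD=(\xPntA,\yPnt)$ for the gate $\pntA$ of \lineref{range:search:2}. Suppose, for contradiction, that some $\pntB=(\xPntB,\yPnt)$ with $\xPnt\le\xPntB<\xPntA$ (i.e.\ $\pntB\in[\pnt,\pntD]$, $\pntB\ne\pntD$) were the endpoint of a tunnel $\xtunnel{\pntC}{\pntB}$ from a gate $\pntC\in\Reached$ with $\scPrice{\pntC}{\pntB}\le\delta$ (again take $\yPntC<\yPnt$). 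If $\xPntC\le\xPnt$, then $\pntC$ is an admissible candidate in \lineref{range:search}, so the $x$-coordinate of the gate selected there is $\ge\xPntC$, and the same application of \lemref{monotone:shortcut:base} as in the first case of completeness — now with $\subcX{1}=\SubCrv{\cX}{\xPntC}{\xPntB}$, $\subsegY{1}=\ScutCrv{\cY}{\yPntC}{\yPnt}$ and the $\cX$-sub-segment ending at $\xPnt$ — forces $\phi\le\constSC\delta$, contradicting $\phi>\constSC\delta$. If $\xPntC>\xPnt$, then $\xPnt<\xPntC\le\xPntB<\xPntA$ all lie in the edge of $\cX$ containing $\xPnt$, so $\pntC$ is admissible in \lineref{range:search:2}; but $\pntA$ was selected there with the \emph{smallest} $x$-coordinate among such candidates, forcing $\xPntA\le\xPntC\le\xPntB$, contradicting $\xPntB<\xPntA$. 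So no such $\pntB$ exists.

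The main obstacle is not any individual step but the bookkeeping around the two invocations of \lemref{monotone:shortcut:base}: one must check that the subcurve containment (of $\SubCrv{\cX}{\xPntA}{\xPnt}$ inside the $\cX$-part of the hypothetical affordable tunnel) and the $\delta$-closeness of both matched endpoint pairs are exactly what the selection rules of \lineref{range:search}/\lineref{range:search:2} and membership in the $\delta$-free space deliver, and similarly to dispose of the degenerate $\yPntB=\yPnt$ tunnels and of the harmless reuse of the symbol $j$ in \lineref{col:index} for both the cell row and the $\cX$-edge index.
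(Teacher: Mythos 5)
Your proof is correct and follows essentially the same route as the paper's: the monotonicity of tunnel prices (you invoke \lemref{monotone:shortcut:base} directly where the paper cites its corollary \lemref{monotone:shortcut}) shows that testing only the tunnel whose $\cX$-subcurve is shortest suffices for the candidates to the left of $\pnt$, and the observation that a vertical tunnel's price is just the maximum elevation of its endpoints handles the candidates to the right. Your convexity argument showing that any affordable (not necessarily vertical) tunnel starting strictly to the right of $\pnt$ in the same column forces $\frVal{\pntD}\leq\delta$ for the gate actually selected in \lineref{range:search:2} makes explicit a step the paper's proof leaves implicit, but it is the same underlying argument.
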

\begin{proof}
    The correctness of this procedure follows from the monotonicity of
    the tunnel prices, which is testified by
    \lemref{monotone:shortcut}.  Let $\phi$ be the
    $(1+\eps)$-approximation to the price of the tunnel, that we
    compute in \lineref{tunnel:compute:phi}.  This tunnel starts at a
    point in $\Reached$ and ends in $\pnt$ and it corresponds to the
    shortest subcurve $\widehat{\cX}$ of $\cX$ over any such tunnel.
    \lemref{monotone:shortcut} implies that if $\phi < 3\delta$ then
    there can be no other tunnel of price less than $\delta$, which
    corresponds to a subcurve of $\cX$ that contains $\widehat{\cX}$.
    Therefore, the price of any tunnel from a point $\pntA \in
    \Reached$, which lies in the lower left quadrant of $\pnt$, to a
    point that lies in the upper right quadrant of $\pnt$ has a price
    larger than $\delta$. In particular, this holds for those tunnels
    that end to the right of $\pnt$ in the same free space interval.
    The only other possibility for a tunnel from $\Reached$ to
    $\IHCellXY{i}{j}$ is a vertical tunnel that lies to the right of
    $\pnt$.  Observe that a vertical tunnel which is feasible for
    $\delta$ always has price at most $\delta$, since it corresponds
    to a subcurve of $\cX$ that is equal to a point which is in
    distance $\delta$ to the shortcut edge.  In \lineref{col:index}
    and \lineref{range:search:2} we compute the leftmost gate of
    $\Reached$ in the lower right quadrant of $\pnt$ which lies in the
    same column as $\pnt$. If there exists such a point with a
    vertical tunnel that ends in the free space interval
    $\IHCellXY{i}{j}$, then we return the endpoint of this tunnel.
    Otherwise we can safely output the equivalent to the answer that
    there exists no tunnel of price less than $\delta$.
\end{proof}

\subsection{Analysis of the decision procedure}

Clearly, the priority queue operations take $O(N \log N)$ time and
$O(N)$ space, where $N=\Nleq{\delta}(\cX, \cY)$ is the size of the
node set, which corresponds to the complexity of the free space
diagram.  We invoke the \tunnelTest procedure once for each node.
Since we add at most a constant number of gates for every cell to
$\Reached$, the size of this set is also bounded by $O(N)$.
Therefore, after the initialization the algorithm takes time near
linear in the complexity of the free space diagram.  We can reduce
this complexity by first simplifying the input curves with
$\sRadius=\Theta(\eps\delta)$ before invoking the \deciderFr
procedure, thereby paying another approximation factor. We denote the
resulting wrapper algorithm with \DeciderFr, it is depicted in
\figref{Decider:infty}.  Now, the initial computation of the nodes
takes near-linear time by \dsref{d:s:magic:A} and therefore the
overall running time is near linear.  A more detailed analysis of the
running time can be found in the following.

\begin{lemma}%
    \lemlab{s:y:m:f:s}%
    Given parameters $\delta > 0$ and $0 < \eps \leq 1$ and two
    $c$-packed polygonal curves $\cXOrig$ and $\cYOrig$ {in $\Re^d$}
    of total complexity $n$. The algorithm \DeciderFr depicted in
    \figref{Decider:infty} outputs one of the following:
    \begin{inparaenum}[(i)]
        \item ``$\distoSFr{\cXOrig}{\cYOrig} \leq
        (1+\eps)\constSC\delta$'', or
	\item ``$\distoSFr{\cXOrig}{\cYOrig} > \delta$''.
    \end{inparaenum}
    In any case, the output returned is correct. The running time is
    $O\pth{ C n \log^2 n }$, where $C = c^2\eps^{-2d} \log (1/\eps)$.
\end{lemma}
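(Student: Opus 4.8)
The plan is to split the claim into correctness and running time, reducing each to the behavior of the inner procedure \deciderFr{} and of the two data structures \dsref{d:s:magic:A} and \dsref{d:s:magic:B}. The heart of the correctness argument is the invariant \Eqref{reachable:good}: that the reachability intervals computed by \deciderFr{} sandwich the true reachable free space, $\FDleqI{\infty}{\delta}\pth{\cX,\cY}\cap\IVCellXY{i}{j}\subseteq\RVCellXY{i}{j}\subseteq\FDleqI{\infty}{(1+\eps)\constSC\delta}\pth{\cX,\cY}\cap\IVCellXY{i}{j}$, and likewise for $\RHCellXY{i}{j}$. I would prove this by induction on the priority-queue key $jn+i$. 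The base case is $(0,0)\in\Reached$. For a dequeued node $v(i,j)$, the row-by-row key order guarantees that every cell that can feed into $\CellXY{i}{j}$ by a monotone step has already been processed, and that the set $\Reached$ passed to \tunnelTest{} already contains, by induction, every gate of $\FDleqI{\infty}{(1+\eps)\constSC\delta}\pth{\cX,\cY}$ seen so far while still containing every gate of $\FDleqI{\infty}{\delta}\pth{\cX,\cY}$. Combining the at most three incoming edges with \lemref{tunnel:test} then gives both inclusions for $v(i,j)$: for soundness, a point enters $\RHCellXY{i}{j}$ only through a monotone path inside convex cell free space from a sound point, or through a tunnel that \lemref{tunnel:test} certifies has price at most $(1+\eps)\constSC\delta$ with both endpoints of elevation at most $\delta\le(1+\eps)\constSC\delta$ --- and since the \Frechet distance is a bottleneck measure, using several such tunnels along one monotone path does not compound the factor; for completeness, the ``furthermore'' clause of \lemref{tunnel:test} guarantees that whenever some point of $\IHCellXY{i}{j}$ is reachable in $\FDleqI{\infty}{\delta}\pth{\cX,\cY}$ via an affordable tunnel, \tunnelTest{} returns the leftmost endpoint of such a tunnel, so convexity of the cell free space lets the reachable portion propagate without being skipped. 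Reading off the final test, \deciderFr{}$(\cX,\cY,\eps',\delta')$ is correct, answering ``$\distoSFr{\cX}{\cY}\le(1+\eps')\constSC\delta'$'' when $(1,1)\in\Reached$ and ``$\distoSFr{\cX}{\cY}>\delta'$'' otherwise.

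Next I would propagate this from the simplified curves back to the input. With the wrapper's parameters $\eps'=\eps/10$, $\sRadius=\eps'\delta$, $\delta'=(1+2\eps')\delta$, $\cX=\simpX{\cXOrig,\sRadius}$ and $\cY=\simpX{\cYOrig,\sRadius}$, \lemref{t:r:i:n:e:q:s:u:b} gives $\distoSFr{\cX}{\cY}-2\sRadius\le\distoSFr{\cXOrig}{\cYOrig}\le\distoSFr{\cX}{\cY}+2\sRadius$. If the inner call answers ``$\distoSFr{\cX}{\cY}\le(1+\eps')\constSC\delta'$'', then $\distoSFr{\cXOrig}{\cYOrig}\le\constSC(1+\eps')(1+2\eps')\delta+2\eps'\delta=\bigl(\constSC+11\eps'+6(\eps')^2\bigr)\delta$, which for $\eps\le1$ (so $\eps'\le1/10$) is at most $(\constSC+3\eps)\delta=(1+\eps)\constSC\delta$; if it answers ``$\distoSFr{\cX}{\cY}>\delta'$'', then $\distoSFr{\cXOrig}{\cYOrig}>(1+2\eps')\delta-2\eps'\delta=\delta$. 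This yields the two claimed, always-correct output alternatives.

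For the running time, the two simplifications cost $O(n\log n)$. Since $\sRadius=\Theta(\eps\delta')$, \lemref{complexity:l:e:q} bounds the free space complexity $N=\Nleq{\delta'}(\cX,\cY)$ by $O(cn/\eps)$, and \dsref{d:s:magic:A} computes the initial node set with its free space intervals in $O(n\log n+c^2 n/\eps)$ time; the min-priority queue over these $O(N)$ nodes costs $O(N\log N)$, and each node is handled once. Each node triggers one call to \tunnelTest, which performs one \tunnelPrice{} query costing $O(\eps^{-3}\log n\log\log n)$ by \dsref{d:s:magic:B} plus a constant number of orthogonal range queries over $\Reached$ (of size $O(N)$), each costing $O(\log^2 N)$; the range tree on $\Reached$ is maintained incrementally across the iterations. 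Finally, \dsref{d:s:magic:B} is built once on $\cX$ in $O(\GridCompl^2 n\log^2 n)$ time with $\GridCompl=\NgridCompl$. Summing these contributions and absorbing the lower-order factors into $C=c^2\eps^{-2d}\log(1/\eps)$, using $N=O(cn/\eps)$, gives the stated $O(Cn\log^2 n)$ bound.

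I expect the induction establishing \Eqref{reachable:good} to be the main obstacle. One must simultaneously verify that the row-by-row dequeue order processes all predecessors of a node before the node itself and that merging the up-to-three copies of a node in the queue combines the incoming reachability correctly; that the single-tunnel-per-query shortcut taken by \tunnelTest{} (justified by \lemref{monotone:shortcut}) never drops a genuinely reachable sub-interval, which is exactly where the ``furthermore'' clause of \lemref{tunnel:test} together with the per-cell convexity of the free space are needed; and that chaining several affordable tunnels along a single monotone path leaves the approximation factor at $(1+\eps)\constSC\delta$ rather than letting it accumulate. The remaining pieces --- the arithmetic and the running-time summation --- are routine once the invariant is in place.
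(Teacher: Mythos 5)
Your proposal is correct and follows essentially the same route as the paper's proof: induction on the queue order to establish the sandwich invariant \Eqref{reachable:good} via \lemref{tunnel:test}, transfer to the original curves via \lemref{t:r:i:n:e:q:s:u:b}, and the same running-time decomposition using \lemref{complexity:l:e:q}, \dsref{d:s:magic:A}, \dsref{d:s:magic:B}, and an insertion-only range tree. If anything, you are more explicit than the paper, which leaves the induction and the $(1+\eps')(1+2\eps')$ arithmetic implicit; your worked-out constants check out.
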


\begin{proof}
    The algorithm $\DeciderFr$ computes the simplified curves $\cX =
    \simpX{\cXOrig,\sRadius}$ and $\cY = \simpX{\cYOrig,\sRadius}$
    with $\sRadius=\Theta(\eps\delta)$, before invoking the algorithm
    \deciderFr described in \figref{decider:infty:inner} on these
    curves.  By the correctness of the \tunnelTest{} procedure (i.e.,
    \lemref{tunnel:test}), one can argue by induction that the subsets
    of points of $\FDleqI{\infty}{\delta}\pth{\cX,\cY}$ intersecting a
    grid edge are sufficiently approximated by the reachable intervals
    computed by \deciderFr (see \Eqrefpage{reachable:good}).  By
    \lemref{t:r:i:n:e:q:s:u:b}, this approximates the decision with
    respect to the original curves sufficiently.
    
    It remains to analyze the running time. By
    \lemref{complexity:l:e:q}, the size of the node set of the graph
    $\Graph$ is bounded by $N = O(cn/\eps)$.  This also bounds the
    size of the point set $\Reached$ and the number of calls to the
    \tunnelTest procedure, as those are at most a constant number per
    node.  During the \tunnelTest procedure, which is depicted in
    \figref{tunnel:test}, we%
    \smallskip%
    \begin{compactenum}[\;\;\;(A)]
        \item approximate the price of one \tunnel in
        \lineref{tunnel:compute:phi}, and
        \item invoke two orthogonal range queries on the set
        $\Reached$ in \lineref{range:search} and
        \lineref{range:search:2}.
    \end{compactenum}
    \smallskip%
    As for (A), building the data structure that supports this kind of
    queries takes $T_1 = O\pth{ n \eps^{-2d} \log^2 (1/\eps) \log^2 n
    }$ time by \dsref{d:s:magic:B}. Since we perform $O(N)$ such
    queries, this takes $T_2 = O\pth{ N \eps^{-3} \log n \log \log n}
    = O\pth{ c n \eps^{-4} \log n \log \log n }$ time overall.  As for
    (B), again, the set of gates $\Reached$ is a finite set of two
    dimensional points and we can use two dimensional range trees
    (with fractional cascading as described in \cite{bcko-cgaa-08}) to
    support the orthogonal range queries. We want to build this tree
    by adding $O(N)$ points throughout the algorithm execution.  Since
    the range tree is a static data structure, we have to make it
    dynamic, but we only need to support insertions, and no deletions.
    This can be easily done by using the logarithmic method if we
    allow an additional logarithmic factor to the running time, see
    also \cite{bs-dspsd-80, o-ddds-83}.  In this method, the point set
    is distributed over $O(\log N)$ static range trees, which need to
    be queried independently and which are repeatedly rebuilt
    throughout the algorithm.  Overall, maintaining this data
    structure and answering the orthogonal range queries takes $T_3 =
    O( N \log^2 N )$ time.
    
    During the algorithm, we maintain a priority queue, where each
    node is added and extracted at most three times. As such, the
    priority queue operations take time in $O(N \log N)$.  The initial
    computation of the node set takes $T_4=O(n \log n + c^2n / \eps )$
    by \dsref{d:s:magic:A}.
    
    Therefore, the overall running time is $T_1 + T_2 + T_3 + T_4$,
    which is
    %
    \begin{align*}
        &O\pth{ n \eps^{-2d} \log^2 (1/\eps) \log^2 n 
           + %
           c n \eps^{-4} \log n \log \log n 
           + %
           c n \log^2 n 
           + %
           n \log n + c^2n / \eps 
        }
        \\
        & \hspace*{2cm}=%
        O\pth{ C n \log^2 n },
    \end{align*}%
    where $C = c^2\eps^{-2d} \log (1/\eps)$.
\end{proof}

\begin{observation}
    \obslab{reparametrizations}%
    It is easy to modify the \deciderFr algorithm such that it also
    outputs the respective shortcut curve and reparametrization which
    satisfies the \Frechet distance. We would modify the tunnel
    procedure such that it returns not only the endpoint, but also the
    starting point of the computed tunnel. During the algorithm, we
    then insert an edge for each computed tunnel, thereby creating at
    most three incoming edges to each node.  After the algorithm
    terminates, we can trace any path backwards from $(1,1)$ to
    $(0,0)$ in the subgraph computed this way. This path encodes the
    shortcut curves as well as the reparametrizations.%
\end{observation}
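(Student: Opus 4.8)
The plan is to augment the bookkeeping of \deciderFr so that every reachability interval it computes also records \emph{why} it became reachable, and then to recover a witness by tracing this information backwards from $(1,1)$. Recall that \deciderFr maintains, for each handled cell $\CellXY{i}{j}$, the intervals $\RHCellXY{i}{j}\subseteq\IHCellXY{i}{j}$ and $\RVCellXY{i}{j}\subseteq\IVCellXY{i}{j}$, each computed in constant time from the at most three incoming pieces of information: the interval $\RVCellXY{i-1}{j}$ on the left, the interval $\RHCellXY{i}{j-1}$ below, and the point $\pntD$ returned by \tunnelTest. First I would have \tunnelTest return, in addition to $\pntD$, the starting gate $\pntA\in\Reached$ of the tunnel $\xtunnel{\pntA}{\pntD}$ it discovered — this is free, since $\pntA$ is already located by the range query in \lineref{range:search} or \lineref{range:search:2} — together with a flag indicating whether it is a ``slanted'' tunnel ending at $\pnt$ or a vertical tunnel ending to the right of $\pnt$. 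Then, when \deciderFr updates $\RHCellXY{i}{j}$ or $\RVCellXY{i}{j}$, it additionally stores with that interval a back-pointer to the single incoming object (left neighbor, bottom neighbor, or the tunnel $\xtunnel{\pntA}{\pntD}$) that produced it. Since each node is handled once and acquires at most three incoming edges, this adds only $O(1)$ work per node and $O(N)$ edges to $\Graph$, where $N=\Nleq{\delta}(\cX,\cY)$; the asymptotic running time of \DeciderFr is unchanged.

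Next, once $(1,1)$ has been added to $\Reached$, I would follow the back-pointers from the cell carrying $(1,1)$ back to the cell carrying $(0,0)$. This produces a sequence of cells in which consecutive cells are either grid-adjacent (a horizontal or vertical step) or linked by a stored tunnel $\xtunnel{\pntA}{\pntD}$. A maximal run of grid-adjacency steps corresponds, by the invariant \Eqref{reachable:good} together with convexity of the free space inside each cell, to a monotone path inside $\FDleqI{\infty}{(1+\eps)\constSC\delta}\pth{\cX,\cY}$ between two gates; such a path — and hence a reparametrization of the corresponding stretches of $\cX$ and $\cY$ — is extracted in $O(1)$ time per cell by walking the staircase of reachable intervals exactly as in the standard \Frechet algorithm. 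Each stored tunnel $\xtunnel{\pntA}{\pntD}$ encodes a shortcut $\ScutCrv{\cY}{\yPntA}{\yPntD}$ of $\cY$ matched to the subcurve $\SubCrv{\cX}{\xPntA}{\xPntD}$, and a sub-reparametrization realizing \Frechet distance at most $(1+\eps)\constSC\delta$ for this pair is obtained either by recomputing it directly for each of the at most $N$ tunnels on the traced path, or by having \dsref{d:s:magic:B} also return a witness mapping. Concatenating all pieces in order yields a \vrestricted{} shortcut curve $\cZ$ of $\cY$ together with a reparametrization certifying $\distFr{\cX}{\cZ}\le(1+\eps)\constSC\delta$, i.e.\ $\distoSFr{\cX}{\cY}\le(1+\eps)\constSC\delta$. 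Since \DeciderFr actually runs on the simplified curves $\cX=\simpX{\cXOrig,\sRadius}$ and $\cY=\simpX{\cYOrig,\sRadius}$, whose vertices are vertices of the originals, $\cZ$ is also a valid \vrestricted{} shortcut curve of $\cYOrig$, and lifting its reparametrization to $\cXOrig,\cYOrig$ costs only the extra $2\sRadius$ slack already accounted for in \lemref{t:r:i:n:e:q:s:u:b} and in the stated approximation factor.

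The one genuinely delicate point — and the part I expect to require care — is consistency of the back-pointers with the existence of a monotone path. \tunnelTest need not return the left gate $\pnt$ itself: it may return the leftmost reachable point to the right of $\pnt$ in the same free space interval, or the endpoint of a vertical tunnel. One must check that when the traced path enters a cell via a tunnel ending at some $\pntD\ne\pnt$, the staircase reconstruction resumes exactly at $\pntD$ with no need to cover $[\pnt,\pntD]$; this is precisely the ``furthermore'' clause of \lemref{tunnel:test}, which guarantees that no point of $[\pnt,\pntD]$ is the endpoint of an affordable tunnel, so a genuine optimal path is never forced to enter that sub-interval. One must also verify that the at-most-three incoming edges at a node never need to be ``combined'' — the witnessing path uses exactly one of them at each node — which follows from the row-by-row processing induced by the keys $(jn+i)$ and the induction underlying \Eqref{reachable:good}. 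Once these local consistency checks are established, the back-trace is a routine $O(N)$ walk, and the modification is, as claimed, an easy one.
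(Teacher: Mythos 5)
Your proposal is correct and follows essentially the same route as the paper: the observation itself \emph{is} the paper's argument (return the tunnel's starting gate from \tunnelTest{}, record an incoming edge per tunnel alongside the at most three grid edges, and trace a path backwards from $(1,1)$ to $(0,0)$ to read off the shortcut curve and reparametrization). Your additional remarks on the staircase reconstruction, the role of the ``furthermore'' clause of \lemref{tunnel:test}, and lifting to the original curves via \lemref{t:r:i:n:e:q:s:u:b} are correct elaborations of details the paper leaves implicit.
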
%

\subsection{Analysis -- understanding \tunnel events}
\seclab{shortcut:price}

The main algorithm uses the procedure \DeciderFr to perform a binary
search for the minimum $\delta$ for which the decision procedure
returns ``yes''.  In the problem at hand we are allowed to use
\tunnels to traverse the free space diagram, and it is possible that a
path becomes feasible by introducing a \tunnel.  The algorithm has to
consider this new type of critical events.


Consider the first time (i.e., the minimal value of $\delta$) that a
decision procedure would try to use a \tunnel of a certain family.

\begin{defn}%
    \deflab{r:create}%
    Given a \tunnel family $\xTunnels{\edge_i}{\edge_j}{u}{v}$, we
    call the minimal value of $\delta$ such that
    $\xTunnelsLeq{\edge_i}{\edge_j}{u}{v}{\delta}$ is non-empty the
    \emphi{creation radius} of the \tunnel family and we denote it
    with $\rCreate{\edge_i}{\edge_j}{u}{v}$.  (Note, that the price of
    a \tunnel might be considerably larger than its creation radius.)
\end{defn}

\begin{lemma}%
    \lemlab{r:create}%
    The creation radius $\rCreate{\edge_i}{\edge_j}{u}{v} =
    \rMinCreate{\edge_i}{\edge_j}{u}{v}$, see \defref{min:radius}.
\end{lemma}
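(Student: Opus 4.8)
The plan is to prove that the creation radius of a tunnel family --- the smallest $\delta$ for which some tunnel in the family has both endpoints in the $\delta$-free space --- equals $\rMinCreate{\edge_i}{\edge_j}{u}{v}$, the quantity defined in \Eqref{c:tunnel:endpoints} as the minimax over pairs $\cX(s)\in\edge_i$, $\cX(t)\in\edge_j$ with $s\le t$ of $\max(\distX{\cX(s)}{u},\distX{\cX(t)}{v})$. The statement is essentially a reformulation: a tunnel $\xtunnel{\pnt}{\pntA}$ with $\pnt=(\xPnt,\yPnt)$, $\pntA=(\xPntA,\yPntA)$, $\cY(\yPnt)=u$, $\cY(\yPntA)=v$, $\cX(\xPnt)\in\edge_i$, $\cX(\xPntA)\in\edge_j$ is feasible for $\delta$ exactly when $\frVal{\pnt}=\distX{\cX(\xPnt)}{u}\le\delta$ and $\frVal{\pntA}=\distX{\cX(\xPntA)}{v}\le\delta$, by the definition of feasibility in \secref{tunnels:real:def}. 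So I would argue both inequalities.

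\textbf{First} ($\rCreate{} \ge \rMinCreate{}$): take any $\delta < \rMinCreate{\edge_i}{\edge_j}{u}{v}$. By definition of the minimum in \Eqref{c:tunnel:endpoints}, for every choice of $s\le t$ with $\cX(s)\in\edge_i$ and $\cX(t)\in\edge_j$ we have $\max(\distX{\cX(s)}{u},\distX{\cX(t)}{v}) > \delta$, hence at least one of $\cX(s)$, $\cX(t)$ has elevation exceeding $\delta$. Therefore no tunnel in the family is feasible for $\delta$, i.e., $\xTunnelsLeq{\edge_i}{\edge_j}{u}{v}{\delta}$ is empty, so $\delta$ is below the creation radius. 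Taking the supremum over such $\delta$ gives $\rCreate{} \ge \rMinCreate{}$.

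\textbf{Second} ($\rCreate{} \le \rMinCreate{}$): let $\delta = \rMinCreate{\edge_i}{\edge_j}{u}{v}$. Here I would use that $\edge_i,\edge_j$ are compact segments and the functions $s\mapsto\distX{\cX(s)}{u}$, $t\mapsto\distX{\cX(t)}{v}$ are continuous, so the minimum in \Eqref{c:tunnel:endpoints} is attained at some $s^*\le t^*$; this is exactly the canonical tunnel $\bCanonical{\edge_i}{\edge_j}{u}{v}$ of \defref{min:radius} (and in the degenerate case the definition there pins down $s^*,t^*$ individually). Then $\distX{\cX(s^*)}{u}\le\delta$ and $\distX{\cX(t^*)}{v}\le\delta$, so the corresponding tunnel has both endpoints in $\FullFDleq{\delta}(\cX,\cY)$ and is feasible for $\delta$. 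Hence $\xTunnelsLeq{\edge_i}{\edge_j}{u}{v}{\delta}$ is non-empty, so $\rCreate{}\le\delta=\rMinCreate{}$. Combining the two inequalities yields equality.

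\textbf{The only subtle point} --- and the step I would be most careful about --- is the attainment of the minimum in the second direction and the monotonicity constraint $s\le t$: one must check that restricting to $s\le t$ does not destroy existence of a minimizer (it does not, since the feasible set $\{(s,t): \cX(s)\in\edge_i,\cX(t)\in\edge_j,\ s\le t\}$ is a nonempty compact subset of the parameter square as soon as $\edge_i$ precedes or equals $\edge_j$ along $\cX$, which is forced by $\xPnt\le\xPntA$ in the tunnel definition), and that the case analysis of \obsref{min:radius:eq} --- closest point of $u$ on $\edge_i$, closest point of $v$ on $\edge_j$, or a monotonicity event when $\edge_i=\edge_j$ --- exhausts how this minimum can be realized. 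Everything else is a direct unwinding of definitions, so I do not expect a genuine obstacle.
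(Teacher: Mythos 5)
Your proof is correct, but it takes a genuinely more direct route than the paper's. You observe that a tunnel of the family is indexed exactly by a pair $(s,t)$ with $\cX(s)\in\edge_i$, $\cX(t)\in\edge_j$, $s\le t$, and that its feasibility for $\delta$ is, by the definitions in \secref{tunnels:real:def}, precisely the condition $\max\pth{\distX{\cX(s)}{u},\distX{\cX(t)}{v}}\le\delta$; hence the creation radius is the minimum of this quantity over the (nonempty, compact) constraint set, which is $\rMinCreate{\edge_i}{\edge_j}{u}{v}$ verbatim by \Eqref{c:tunnel:endpoints}, with compactness guaranteeing attainment. The paper instead proves the equality by exhibiting \emph{where} the minimum is attained: it splits into the case where the closest point $u'$ of $u$ on $\edge_i$ precedes the closest point $v'$ of $v$ on $\edge_j$ (so the unconstrained minimizer $(u',v')$ is already monotone and the claim is immediate), and the case $i=j$ with $u'$ after $v'$, where it argues geometrically that $\distFr{\wu\wv}{uv}\ge\distFr{\pntB}{uv}$ for every subsegment $\wu\wv\subseteq\edge$ and the bisector point $\pntB$, i.e., the constrained minimum degenerates to a single point on the diagonal $s=t$. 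Your argument is shorter and fully suffices for the stated equality; what the paper's extra work buys is the substantiation of the three-case characterization in \obsref{min:radius:eq} (closest-point distance on $\edge_i$ or $\edge_j$, or a monotonicity event), which the later analysis depends on when it argues that all tunnel creation radii are eliminated by the first-stage search (see \lemref{first:stage} and \lemref{rand:algo}). So nothing is missing from your proof of the lemma as stated, but the surrounding development leans on the stronger structural fact that the paper's proof delivers as a byproduct.
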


\begin{proof}
    Recall that the creation radius of the \tunnel family is the
    minimal value of $\delta$ such that any \tunnel in this family is
    feasible. Let $u'$ be the closest point on $\edge_i$ to $u$, and
    $v'$ the closest point on $\edge_j$ to $v$.  If $u'$ appears
    before $v'$ on $\cX$, then the canonical \tunnel is realized by
    $\cX(\xPntA)=u'$ and $\cX(\xPntA)=v'$ and the claim holds. In
    particular, this is the case if $i < j$.
    
    Now, the only remaining possibility is that $u'$ appears after
    $v'$ on $\edge$. It must be that $i=j$, therefore let
    $\edge=\edge_i=\edge_j$.  Observe that in this case any \tunnel in
    the family which is feasible for $\delta$ also has a price that is
    smaller or equal to $\delta$.
    Consider the point $\pntB$ realizing the quantity
    \begin{align*}
        \min_{\pntB \in \edge} \max \pth{\, \MakeSBig\!
           \distX{\pntB}{u} ,\, \distX{\pntB}{v} \,}.
    \end{align*}
    Note that $\pntB$ is the subcurve of $\cX$ corresponding to the
    (vertical) canonical \tunnel in this case.  We claim that for any
    subsegment $\wu \wv \subseteq \edge$ (agreeing with the
    orientation of $\edge$) we have that $\distFr{\wu \wv}{uv} \geq
    \distFr{\pntB}{uv}$.  If $\wu = \wv$ then the claim trivially
    holds.
    
    \begin{figure}[\si{tb}]\center
        \includegraphics[scale=0.8]{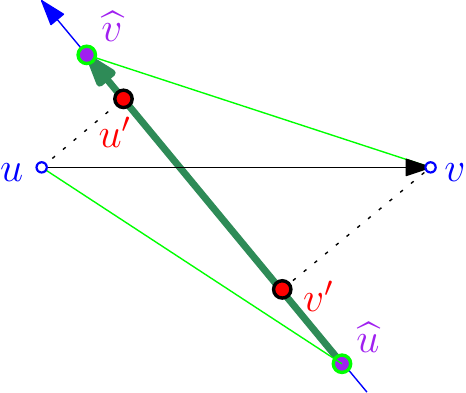}
        ~~~~~~%
        \includegraphics[scale=0.8]{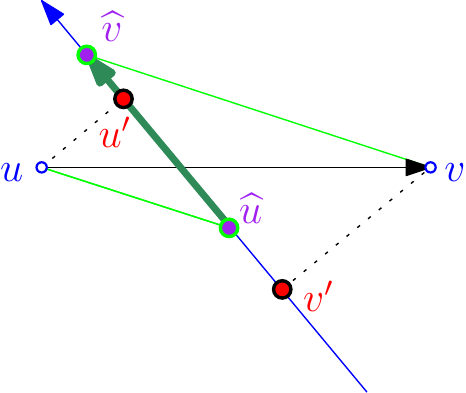}
        \caption{Two cases: $v'$ appears either before or after $\wu$
           along $\edge$, assuming that $u'$ appears after $v'$ on
           $\edge$.}
        \figlab{stupid}
    \end{figure}
    
    Assume that $v'$ appears after $\wu$ along $\edge$ (the case
    depicted in \figref{stupid}). Since $u'$ appears after $v'$ along
    $\edge$, we have that $\distX{v'}{u} \leq \distX{\wu}{u}$, as
    moving away from $u'$ only increases the distance from
    $u$. Therefore,
    \begin{align*}
        \distFr{\pntB}{uv} %
        &\leq \distFr{v'}{uv} = \max\pth{\distX{v'}{u},
           \distX{v'}{v}}%
        \leq%
        \max\pth{\distX{\wu}{u}, \distX{\wv}{v}}%
        =%
        \distFr{\wu \wv}{uv}.
    \end{align*}
    
    Otherwise, if $v'$ appears before $\wu$ along $\edge$, as depicted
    in \figref{stupid} on the right, then
    \begin{align*}
        \distFr{\pntB}{uv} %
        \leq%
        \distFr{\wu}{uv} = \max\pth{\distX{\wu}{u}, \distX{\wu}{v}}%
        \leq%
        \max\pth{\distX{\wu}{u}, \distX{\wv}{v}} %
        =%
        \distFr{\wu \wv}{uv},
    \end{align*}
    since moving away from $v'$ only increases the distance from $v$.
    
    This implies that the minimum $\delta$ for a \tunnel in
    $\xTunnels{\edge_i}{\edge_i}{u}{v}$ to be feasible is at least
    $\distFr{\pntB}{uv} = \rCreate{\edge_i}{\edge_i}{u}{v}$.  And
    $\pntB$ testifies that there is a \tunnel in this family that is
    feasible for this value.
\end{proof}

The following lemma describes the behavior when $\delta$ rises above a
\tunnel price, such that the area in the free space that lies beyond
this \tunnel potentially becomes reachable by using this \tunnel. More
specifically, it implies that the first time (i.e., the minimal value
of $\delta$) that any \tunnel of a family
$\xTunnels{\edge_i}{\edge_j}{u}{v}$ is usable (i.e., its price is less
than $\delta$), any \tunnel in the feasible set
$\xTunnelsLeq{\edge_i}{\edge_j}{u}{v}{\delta}$ associated with this
family will be usable.
\begin{lemma}%
    \lemlab{tunnel:event}%
    Given a value $\delta \geq 0$, we have for any \tunnel
    $\xtunnel{\pntF}{\pntG}$ in the feasible subset of a given \tunnel
    family $\xTunnelsLeq{\edge_i}{\edge_j}{u}{v}{\delta}$, that
    \begin{compactenum}[(i)]
        \item if $\delta \leq
        \priceX{\bCanonical{\edge_i}{\edge_j}{u}{v}}$, then
        $\scPrice{\pntF}{\pntG} =
        \priceX{\bCanonical{\edge_i}{\edge_j}{u}{v}}$,
        
        \item otherwise, $\scPrice{\pntF}{\pntG} \leq \delta$.
    \end{compactenum}
\end{lemma}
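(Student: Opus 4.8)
The plan is to analyze what the price of a feasible tunnel $\xtunnel{\pntF}{\pntG}$ in the family $\xTunnelsLeq{\edge_i}{\edge_j}{u}{v}{\delta}$ can be, using two inputs: (a) the fact that the canonical tunnel minimizes the endpoint-radius quantity $\rMinCreate{\edge_i}{\edge_j}{u}{v}$ (Observation~\ref{observation:min:radius:eq} and Definition~\ref{def:min:radius}), and (b) the monotonicity/approximation machinery of Lemma~\ref{lemma:monotone:shortcut:base} applied with suitable subcurves. Note first that all tunnels in this family match the \emph{same} shortcut $uv$; they differ only in which subcurve $\SubCrv{\cX}{s}{t}$ of $\cX$ (with $\cX(s)\in\edge_i$, $\cX(t)\in\edge_j$, $s\le t$) is matched to $uv$.

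For part (i), suppose $\delta \leq \priceX{\bCanonical{\edge_i}{\edge_j}{u}{v}}$. Let $\xtunnel{\pntF}{\pntG}$ be any feasible tunnel, matching $\SubCrv{\cX}{s}{t}$ to $uv$. Feasibility means $\frVal{\pntF}\le\delta$ and $\frVal{\pntG}\le\delta$, i.e., $\distX{\cX(s)}{u}\le\delta$ and $\distX{\cX(t)}{v}\le\delta$. By Observation~\ref{observation:f:r:segments} applied to the subsegment-vs-shortcut comparison (or more precisely, since $uv$ is a single segment, we compare it to the subcurve $\SubCrv{\cX}{s}{t}$ of $\cX$), the \Frechet distance $\scPrice{\pntF}{\pntG}=\distFr{\SubCrv{\cX}{s}{t}}{uv}$ is at least $\max(\distX{\cX(s)}{u},\distX{\cX(t)}{v})$ since the endpoints must be matched; but crucially it equals this max when the subcurve is a single segment, and in general is governed by the endpoint distances plus internal deviation. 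The key claim is that among all feasible subcurves in this family, the \Frechet distance $\distFr{\SubCrv{\cX}{s}{t}}{uv}$ is minimized exactly by the canonical choice (realizing $\rMinCreate{}$); this is precisely the monotonicity argument already carried out in the proof of Lemma~\ref{lemma:r:create} (the case $\edge=\edge'$), generalized to $\edge_i\ne\edge_j$ by a similar ``moving away from the closest point only increases distance'' argument on each of the two edges independently. Since every feasible tunnel has price $\ge \priceX{\bCanonical{\edge_i}{\edge_j}{u}{v}}$, and the canonical tunnel itself is feasible whenever $\delta\ge\rMinCreate{}$ (which holds here because $\delta\ge\priceX{\bCanonical{}}\ge\rMinCreate{}$), the price cannot be strictly smaller, so it equals $\priceX{\bCanonical{\edge_i}{\edge_j}{u}{v}}$ exactly — but wait, this would say \emph{all} feasible tunnels have the same price, which is stronger than stated. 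I would instead argue that the canonical tunnel is the unique price-minimizer in this regime, hence for any feasible $\xtunnel{\pntF}{\pntG}$ we get $\scPrice{\pntF}{\pntG}\ge\priceX{\bCanonical{}}$, and combined with the hypothesis $\delta\le\priceX{\bCanonical{}}$ and the structural fact that below the canonical price no non-canonical tunnel can be cheaper, conclude equality.

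For part (ii), suppose $\delta > \priceX{\bCanonical{\edge_i}{\edge_j}{u}{v}}$. Here I would apply Lemma~\ref{lemma:monotone:shortcut:base} (or directly Lemma~\ref{lemma:monotone:shortcut}) with $\subcX{1}$ the canonical subcurve $\SubCrv{\cX}{s}{t}$, $\subsegY{1}=uv$ (so $\distFr{\subcX{1}}{\subsegY{1}}=\priceX{\bCanonical{}}<\delta$), and $\subcX{2}=\SubCrv{\cX}{s'}{t'}$ the subcurve of the arbitrary feasible tunnel $\xtunnel{\pntF}{\pntG}$, $\subsegY{2}=uv$. The catch is that $\subcX{2}$ need not be a subcurve of $\subcX{1}$ (the canonical $s,t$ and the arbitrary $s',t'$ can be interleaved), so Lemma~\ref{lemma:monotone:shortcut:base} does not apply verbatim; instead I would use the direct bound $\distFr{\subcX{2}}{uv}=\scPrice{\pntF}{\pntG}\le\max$ of the two endpoint deviations plus the \Frechet distance contributed internally, and observe that feasibility gives each endpoint of $\subcX{2}$ within $\delta$ of the corresponding endpoint of $uv$. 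Then, since $uv$ is a single segment, Observation~\ref{observation:f:r:segments}-style reasoning shows $\distFr{\subcX{2}}{uv}$ is controlled: the convexity argument there extends to the case of a polygonal curve versus a segment, giving $\distFr{\subcX{2}}{uv}\le\max$ over the vertices of $\subcX{2}$ of the distance to the nearest point of $uv$ — but the cleanest route is to note that by definition of feasibility the two endpoints of the tunnel lie in $\FullFDleq{\delta}$, so $\distX{\cX(s')}{u}\le\delta$ and $\distX{\cX(t')}{v}\le\delta$, and since the price in regime (ii) exceeds the canonical price we can invoke Lemma~\ref{lemma:monotone:shortcut} to bound the price of the shorter (feasible) tunnel by $3\max(\delta,\priceX{\bCanonical{}})=3\delta$ — however the statement demands $\le\delta$, not $\le 3\delta$, so I must be more careful: the correct argument is that once $\delta$ exceeds the canonical price, the whole free space of the tunnel (not just its endpoints) is traversable, because the monotone path inside the tunnel's free space only requires the price to be below $\delta$, and the canonical tunnel realizes the minimum price $\le\delta$; hence \emph{some} tunnel is usable, and for that tunnel the price is $\le\delta$. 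The main obstacle is therefore pinning down exactly which tunnel in the feasible set has price $\le\delta$ and showing the statement as written — I expect the intended reading is that the \emph{canonical} tunnel (which lies in the feasible set once $\delta\ge\rMinCreate{}$, a fortiori once $\delta>\priceX{\bCanonical{}}$) has price $\priceX{\bCanonical{}}<\delta$, and that by the monotonicity of prices along the family any other feasible tunnel whose subcurve contains the canonical one also has price $\le\delta$; I would make this precise and note where the $3$-factor of Lemma~\ref{lemma:monotone:shortcut} is or is not needed.
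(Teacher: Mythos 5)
Your proposal does not close either part, and you concede as much at several points; the piece you are missing is the one exact structural fact the proof actually rests on. For $i \neq j$, write the subcurve of an arbitrary tunnel in the family as $\SC{\cX}{\pnt}{\pntA}$ with $\pnt \in \edge_i$ and $\pntA \in \edge_j$. By splitting an optimal \matching{} at the edge endpoints $\pnt_{i+1}$ and $\pnt_{j}$ and applying \obsref{f:r:segments} to the two end pieces (together with a matching lower bound obtained by restitching), one gets the exact identity
\[
\scPrice{\pntF}{\pntG} \;=\; \distFr{\SC{\cX}{\pnt}{\pntA}}{uv}
\;=\; \max\pth{\distX{\pnt}{u},\; \alpha,\; \distX{\pntA}{v}},
\]
where $\alpha$ is the \Frechet distance of the \emph{fixed} middle piece $\SC{\cX}{\pnt_{i+1}}{\pnt_{j}}$ to its best-fitting subsegment of $uv$ --- a quantity that is the same for every tunnel in the family. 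With this identity both parts are immediate: feasibility gives $\distX{\pnt}{u} \leq \delta$ and $\distX{\pntA}{v}\leq \delta$. If $\delta \leq \priceX{\bCanonical{\edge_i}{\edge_j}{u}{v}}$, then since the canonical tunnel's own endpoint terms are also at most $\delta$, the identity forces $\alpha \geq \delta$, and hence \emph{every} feasible tunnel has price exactly $\alpha$, which is the canonical price. Note that part (i) really does assert that all feasible tunnels share the same price; you noticed this consequence and then talked yourself out of the correct conclusion. If instead $\delta$ exceeds the canonical price, then $\alpha < \delta$ and every feasible price is $\max$ of terms each below $\delta$. The case $i=j$ is handled separately via \obsref{f:r:segments} and \lemref{r:create}.

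The reason your route cannot work is that every tool you reach for --- \lemref{monotone:shortcut:base} and \lemref{monotone:shortcut} --- is approximate, losing a factor of $3$, and additionally requires one subcurve to contain the other, which need not hold between two arbitrary members of the family (you correctly flag both problems but do not resolve them). A factor-$3$ bound can never yield the exact equality in (i) nor the exact threshold $\leq \delta$ in (ii). Your fallback for (ii) (``some tunnel is usable, and for that tunnel the price is $\leq\delta$'') proves the wrong quantifier: the lemma is a statement about \emph{every} feasible tunnel, which is precisely what makes it strong enough for the searching argument in \lemref{rand:algo}. The exact decomposition above is what replaces all of this machinery.
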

\begin{proof}
    We first handle the case that $i \neq j$.  Let $\edge_i = \pnt_i
    \pnt_{i+1}$ and $\edge_j = \pnt_j \pnt_{j+1}$.
    
    \parpic[r]{\includegraphics{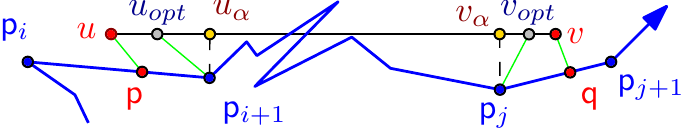}}

    Let $\pnt \in \edge_i$ and $\pntA \in \edge_j$ be some points on
    these edges, that correspond to $\pntF$ and $\pntG$, respectively.
    Observe that since this is a feasible \tunnel in this family, we
    have that
    \begin{align*}
        \max \pth{ \distX{\pnt}{u}, \distX{\pntA}{v} } \leq \delta.
    \end{align*}
    Consider the optimal \Frechet matching of $\SC{\cX}{\pnt}{\pntA}$
    with $u v$, and let $u_\opt$ and $v_\opt$ be the points on $uv$
    that are matched to $\pnt_{i+1}$ and $\pnt_{j}$ by this optimal
    \Frechet matching.  Let $\alpha =
    \distFr{\SC{\cX}{\pnt_{i+1}}{\pnt_j}}{u_\alpha v_\alpha}$, where
    $u_\alpha v_\alpha$ is the subsegment of $uv$ minimizing $\alpha$.
    
    We have, by \obsref{f:r:segments}, that
    \begin{align*}
        \distFr{\SC{\cX}{\pnt}{\pntA}}{uv} %
        &=%
        \max\pth{\, \distFr{\pnt \pnt_{i+1}}{u {} u_\opt},\,%
           \distFr{ \MakeSBig \SC{\cX}{\pnt_{i+1}}{\pnt_j}}{u_{\opt}
              v_{\opt}},\,%
           \distFr{\pnt_{j} \pntA}{v_\opt v}\, }
        \\
        &=%
        \max {\pth{\MakeSBig\!%
              \begin{array}{c}
                  \distX{\pnt}{u},\\
                  \distX{\pnt_{i+1}}{u_\opt},\\
              \distFr{\SC{\cX}{\pnt_{i+1}}{\pnt_j}}{u_{\opt}
                 v_{\opt}},\\
              \distX{\pnt_{j}}{v_\opt},\\
              \distX{\pntA}{v}
              \end{array}
           }}
        \\
        &= \max {\pth{\, \MakeSBig\!%
              \distX{\pnt}{u},\,%
              \distFr{\SC{\cX}{\pnt_{i+1}}{\pnt_j}}{u_{\opt}
                 v_{\opt}},\,%
              \distX{\pntA}{v} \, }}
        \\
        &\geq%
        \max \pth{\,%
           \distX{\pnt}{u}, \,%
           \distFr{\SC{\cX}{\pnt_{i+1}}{\pnt_{j}}} {u_\alpha
              v_\alpha},\,%
           \distX{\pntA}{v}\,%
        }%
        \\
        &=%
        \max\pth{\, \MakeSBig%
           \distFr{\pnt \pnt_{i+1}}{u {} u_\alpha},\,%
           \distFr{\SC{\cX}{\pnt_{i+1}}{\pnt_{j}}} {u_\alpha
              v_\alpha},\,%
           \distFr{\pnt_{j} \pntA}{v_1 v}\, }
        \\
        &\geq%
        \distFr{\SC{\cX}{\pnt}{\pntA}}{uv}.
    \end{align*}%
 
    For $\alpha = \distFr{\SC{\cX}{\pnt_{i+1}}{\pnt_{j}}} {u_\alpha
       v_\alpha}$, this implies $\distFr{\SC{\cX}{\pnt}{\pntA}}{uv} =
    \max ( \distX{\pnt}{u}, \alpha, \distX{\pntA}{v})$, where $\alpha
    \leq \max( \alpha, \delta)$ is equal for all \tunnels in the
    family.  Now, if $\delta \leq$ $
    \priceX{\bCanonical{\edge_i}{\edge_j}{u}{v}}$ then we have
    $\priceX{\bCanonical{\edge_i}{\edge_j}{u}{v}}=\alpha \geq \delta$
    and
    \begin{align*}
        \scPrice{\pntF}{\pntG} = \distFr{\SC{\cX}{\pnt}{\pntA}}{uv} =
        \max(\distX{\pnt}{u}, \alpha, \distX{\pntA}{v}) \leq
        \max(\alpha,\delta) =\alpha.
    \end{align*}%
    This proves (i).  Otherwise, we have
    $\priceX{\bCanonical{\edge_i}{\edge_j}{u}{v}} < \delta$. Which
    implies that $\alpha < \delta$, but then $\scPrice{\pntF}{\pntG}
    \leq \delta$, implying (ii).
    
    If $i = j$ then the \Frechet distance is between the shortcut
    segment and a subsegment of $\edge_i$. But this distance is the
    maximum distance between the corresponding endpoints, by
    \obsref{f:r:segments}. As the distance between endpoints of
    shortcuts and subcurves corresponding to \tunnels of
    $\xTunnelsLeq{\edge_i}{\edge_j}{u}{v}{\delta}$ is at most
    $\delta$, and by \lemref{r:create} the claim follows.
\end{proof}


\lemref{monotonicity} below implies that the set of creation radii of
all \tunnels is approximated by the set of vertex-vertex and
vertex-edge event radii.  A similar lemma was shown in
\cite{dhw-afdrc-12}, to prove this property for the monotonicity event
values. Therefore, the algorithm eliminates these types of events in
the first stage, in addition to eliminating the vertex-vertex and
vertex-edge events.

\begin{lemma}%
    \lemlab{monotonicity}%
    Consider an edge $\edge = \pnt \pntA$ of a curve $\cXOrig$, and
    two vertices $u$ and $v$ of a curve $\cYOrig$. We have that $x/2
    \leq \rCreate{\edge}{\edge}{u}{v} \leq 2x$, where $x$ is in the
    set $\brc{ \distSet{u}{\edge}, \distSet{v}{\edge}, \distX{u}{v}}$.
\end{lemma}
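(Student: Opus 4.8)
The plan is to reduce everything to the trichotomy of \obsref{min:radius:eq}. By \lemref{r:create} we have $\rCreate{\edge}{\edge}{u}{v}=\rMinCreate{\edge}{\edge}{u}{v}$, and by \obsref{min:radius:eq} (applied with $\edge=\edge'$, so that all three cases are possible) this value equals one of: (i) $\distSet{u}{\edge}$, (ii) $\distSet{v}{\edge}$, or (iii) the common value $r=\distX{w}{u}=\distX{w}{v}$, where $w$ is the point at which the perpendicular bisector of $u$ and $v$ meets $\edge$ (a monotonicity event). In cases (i) and (ii) there is nothing to do: we take $x=\distSet{u}{\edge}$ resp. $x=\distSet{v}{\edge}$ and get equality. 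So the entire content of the lemma is to pin down $r$ in case (iii) to within a factor $2$ of one of the three quantities.

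For case (iii), I would first record three cheap lower bounds on $r$. Since $w\in\edge$, we have $r=\distX{w}{u}\ge \distSet{u}{\edge}$ and likewise $r\ge\distSet{v}{\edge}$; and by the triangle inequality $\distX{u}{v}\le \distX{u}{w}+\distX{w}{v}=2r$, so $r\ge\distX{u}{v}/2$. These already give the ``$x/2\le r$'' side for every candidate $x$.

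The one genuine estimate is an upper bound on $r$, and here I would use a single well-chosen test point. Let $m$ be the midpoint of the segment $uv$ and let $m'$ be its nearest point on $\edge$. Because $r=\rMinCreate{\edge}{\edge}{u}{v}$ is a minimum over ordered pairs of points on $\edge$, evaluating it at the pair $(m',m')$ gives $r\le\max\pth{\distX{m'}{u},\distX{m'}{v}}$. Each of these is at most $\distX{m'}{m}+\distX{u}{v}/2=\distSet{m}{\edge}+\distX{u}{v}/2$, and since the distance-to-$\edge$ function is $1$-Lipschitz, $\distSet{m}{\edge}\le\distSet{u}{\edge}+\distX{u}{m}=\distSet{u}{\edge}+\distX{u}{v}/2$. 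Hence $r\le\distSet{u}{\edge}+\distX{u}{v}$. Now a two-way case split closes it: if $\distSet{u}{\edge}\ge\distX{u}{v}$ then $r\le 2\distSet{u}{\edge}$, so $x=\distSet{u}{\edge}$ works together with the lower bound $r\ge\distSet{u}{\edge}$; otherwise $r\le 2\distX{u}{v}$, so $x=\distX{u}{v}$ works together with $r\ge\distX{u}{v}/2$.

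The only real subtlety — and what I would flag as the main obstacle — is that no single one of the three quantities works uniformly: $r$ can be arbitrarily large compared to $\distX{u}{v}$ alone, and also not comparable to $\distSet{u}{\edge}$ alone in the other regime, so the witness $x$ must be allowed to depend on the configuration. The midpoint-projection estimate $r\le\distSet{u}{\edge}+\distX{u}{v}$ is precisely the bridge that lets one trade between the two regimes. A minor point to verify along the way is that in case (iii) the constrained minimum defining $\rMinCreate{\edge}{\edge}{u}{v}$ is indeed attained on the diagonal $s=t$ (a single point of $\edge$), which is what \obsref{min:radius:eq} is asserting, and which is all the upper-bound argument needs.
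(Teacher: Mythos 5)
Your proof is correct and follows essentially the same route as the paper's: identical lower bounds ($\rCreate{\edge}{\edge}{u}{v} \geq \distX{u}{v}/2$ and $\rCreate{\edge}{\edge}{u}{v} \geq \distSet{u}{\edge}$), an upper bound of the form $\distSet{u}{\edge} + \distX{u}{v}$ obtained by evaluating the constrained minimum at a single test point of $\edge$, and the same two-way case split on whether $\distSet{u}{\edge}$ or $\distX{u}{v}$ dominates. The only cosmetic differences are the choice of test point (you project the midpoint of $uv$, the paper uses the closest point on $\edge$ to the nearer of $u$ and $v$) and your explicit reduction to the trichotomy of \obsref{min:radius:eq}, which the paper leaves implicit.
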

\begin{proof}
    First, observe that $\rCreate{\edge}{\edge}{u}{v} \geq
    \distX{u}{v}/2$, as it is the maximum distance of some point on
    $\edge$ from both $u$ and $v$. In particular, if
    $\rCreate{\edge}{\edge}{u}{v} \leq 2 \distX{u}{v}$ then we are
    done.
    
    As such, it must be that $\rCreate{\edge}{\edge}{u}{v} > 2
    \distX{u}{v}$. Assume that $u$ is closer to $\edge$ than $v$, and
    let $u'$ be the closest point on $\edge$ to $u$. By the triangle
    inequality, the distance of $v$ from $u'$ is in the range
    $\Interval = \pbrc{ \distX{u}{u'}, \distX{u}{u'} + \distX{u}{v}
    }$.  Observe that $\rCreate{\edge}{\edge}{u}{v} \geq
    \distX{u}{u'}$ and $\rCreate{\edge}{\edge}{u}{v} \leq \max
    \pth{\distX{u}{u'}, \distX{v}{u'}}$.  Thus,
    $\rCreate{\edge}{\edge}{u}{v} \in \Interval$. Note that if
    $\distX{u}{u'} \leq \distX{u}{v}$ then we are done, as this
    implies that $\rCreate{\edge}{\edge}{u}{v}$ is in the range
    $\pbrc[]{ \distX{u}{v}, 2 \distX{u}{v}}$. Otherwise,
    $\rCreate{\edge}{\edge}{u}{v}$ is in the range $\pbrc[]{
       \distX{u}{u'}, 2 \distX{u}{u'}}$. In either case, the claim
    follows.
    
    The case that $v$ is closer to $\edge$ than $u$ follows by
    symmetry.
\end{proof}

\subsection{Analysis of the main algorithm}
\seclab{algo:main:analysis}

The following lemma can be obtained using similar arguments as in the
analysis of the main algorithm in \cite{dhw-afdrc-12}. We provide a
simplified proof for the case here, where we are only interested in a
constant factor approximation.

\begin{lemma}%
    \lemlab{first:stage}%
    Given two $c$-packed polygonal curves $\cXOrig$ and $\cYOrig$ in
    $\Re^d$ with total complexity $n$, the first stage of the
    algorithm (see \secref{first:stage}) outputs one of the following:
    \begin{compactenum}[(A)]
        \item a $O(1)$-approximation to the shortcut \Frechet distance
        between $\cXOrig$ and $\cYOrig$;
        \item an interval $\widehat{\Interval}$, and curves $\cX$ and
        $\cY$ with the following properties:
        \begin{compactenum}[(i)]
            \item $\distoSFr{\cX}{\cY}$ is contained in
            $\widehat{\Interval}$ and $\distoSFr{\cX}{\cY}/3 \leq
            \distoSFr{\cXOrig}{\cYOrig} \leq 3\distoSFr{\cX}{\cY}$,
            
            \item $\widehat{\Interval}$ contains no vertex-edge,
            vertex-vertex, or monotonicity event values and no \tunnel
            creation radii (as defined in \secref{shortcut:price}) of
            $\cX$ and $\cY$.
        \end{compactenum}
    \end{compactenum}
    The running time is $O\pth{ c^2 n \log^3 n}$.
\end{lemma}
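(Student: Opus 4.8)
The plan is to adapt the first-stage analysis of \cite{dhw-afdrc-12}: follow the two-stage binary search of \secref{first:stage} step by step, treating \DeciderFr (run with the fixed value $\eps = 1/3$, so that each call returns a correct assertion of the form ``the distance is $\le 4\delta$'' or ``the distance is $> \delta$'' via \lemref{s:y:m:f:s}) as an approximately-exact decision procedure, and then verify the two asserted properties of the output. First I would handle the \WSPD{}-based search: $U$ is a set of $O(n)$ values that two-approximates every vertex-vertex distance of $\cXOrig$ and $\cYOrig$, so a binary search over the sorted $U$, each probe calling \DeciderFr, localizes $\distoSFr{\cXOrig}{\cYOrig}$ (up to the decider's and the \WSPD{}'s slack) into an atomic interval $[\alpha,\beta]$ of $U$. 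If $\beta \le 100\alpha$ we already have a constant-factor interval and are in case~(A); otherwise the additional calls on $[\alpha,10\alpha]$ and $[\beta/10,\beta]$ either again yield case~(A) or certify $\distoSFr{\cXOrig}{\cYOrig} > 10\alpha$ and $\distoSFr{\cXOrig}{\cYOrig} \in [3\alpha, \beta/3] =: \Interval'$.

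Next I would pass to the simplifications $\cX = \simpX{\cXOrig,\sRadius}$ and $\cY = \simpX{\cYOrig,\sRadius}$ with $\sRadius = 3\alpha$. Since the margin test guarantees $\distoSFr{\cXOrig}{\cYOrig} > 10\alpha = (10/3)\sRadius$, the bound $\lvert\distoSFr{\cXOrig}{\cYOrig} - \distoSFr{\cX}{\cY}\rvert \le 2\sRadius$ from \lemref{t:r:i:n:e:q:s:u:b} immediately gives the two-sided factor-$3$ relation of property~(B)(i). Using \dsref{d:s:magic:A} I would extract the set $U'$ of all vertex-edge event values of $\cX,\cY$ below $\beta' = \beta/3$; by \lemref{complexity:l:e:q} there are only $O(cn/\eps)$ of them. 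A binary search over $U' \cap \Interval'$ with \deciderFr pins $\distoSFr{\cX}{\cY}$ into an atomic interval $\Interval'' = [\alpha'',\beta'']$ of $U' \cap \Interval'$, and searching the margins of $\Interval''$ either gives case~(A) or outputs $\widehat{\Interval} = [10\alpha'', \beta''/10]$ with $\distoSFr{\cX}{\cY} \in \widehat{\Interval}$, which together with the previous paragraph establishes (B)(i) in full.

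For property~(B)(ii) I would argue that $\widehat\Interval$ avoids all four event types of $\cX,\cY$. The vertices of $\cX,\cY$ form a subset of those of $\cXOrig,\cYOrig$, so any vertex-vertex distance lying in $\widehat\Interval \subseteq \Interval' = [3\alpha,\beta/3]$ would be two-approximated by some $\gamma \in U$ forced into the open interval $(\alpha,\beta)$ --- impossible, since $[\alpha,\beta]$ is atomic in $U$. Vertex-edge events of $\cX,\cY$ below $\beta'$ are exactly the elements of $U'$, hence none lies in the interior of the atomic interval $\Interval'' \supseteq \widehat\Interval$, while larger ones are irrelevant since $\widehat\Interval$ has right endpoint $\le \beta'' \le \beta'$. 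Finally --- and this is the genuinely new ingredient over \cite{dhw-afdrc-12} --- by \lemref{r:create} together with \obsref{min:radius:eq} every tunnel creation radius of $\cX,\cY$ equals either a vertex-edge distance or (only when the two edges coincide) a monotonicity value, and by \lemref{monotonicity} such a monotonicity value is within a factor $2$ of a vertex-edge or vertex-vertex distance of $\cX,\cY$; since $\widehat\Interval = [10\alpha'',\beta''/10]$ was shrunk by a factor $10 > 2$ on each side from $\Interval''$ (and $\Interval'' \subseteq \Interval'$), any creation radius or monotonicity value in $\widehat\Interval$ would push a vertex-edge or vertex-vertex distance into the interior of $\Interval''$ or $\Interval'$, contradicting atomicity.

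For the running time, I expect the bottleneck to be the $O(\log n)$ calls to \DeciderFr on the original $c$-packed curves in the first search, each costing $O(c^2 n \log^2 n)$ by \lemref{s:y:m:f:s} with $\eps$ a constant, for a total of $O(c^2 n \log^3 n)$; the \WSPD{} construction ($O(n\log n)$), the simplification, the extraction via \dsref{d:s:magic:A} ($O(n \log n + c^2 n/\eps)$), and the $O(\log n)$ calls of \deciderFr on the simplified curves (each near-linear, up to polylog factors, in the $O(cn/\eps)$-size free space, plus the per-tunnel queries of \dsref{d:s:magic:B}) are all of lower order. The main obstacle is not any single hard step but the constant bookkeeping: one must choose the blow-up factors ($3$, $10$, $100$) so that the $(1+\eps)3$ slack of the decider, the factor-$2$ slacks from the \WSPD{} and from \lemref{monotonicity}, and the additive $2\sRadius$ simplification error all compose into exactly the factor-$3$ guarantee of (B)(i) and into the clean exclusion of all event values in (B)(ii).
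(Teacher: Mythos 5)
Your proposal is correct and follows essentially the same route as the paper's proof: the same two-stage search, the same use of \lemref{t:r:i:n:e:q:s:u:b} with $\sRadius=3\alpha$ and $\distoSFr{\cXOrig}{\cYOrig}\geq 10\alpha$ for the factor-$3$ bound, and the same reduction of tunnel creation radii to vertex-edge and monotonicity events via \lemref{r:create}, \obsref{min:radius:eq} and \lemref{monotonicity}, with the factor-$10$ margin shrinking absorbing the factor-$2$ slack. Your atomicity arguments for excluding event values are slightly more explicit than the paper's, but the substance is identical, as is the running-time accounting.
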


\begin{proof}
    We first prove the correctness of the algorithm as stated in the
    claim.  The set $U$ approximates the vertex-vertex distances of
    the vertices of $\cXOrig$ and $\cYOrig$ up to a factor of
    two. Therefore, the interval $\Interval=[\alpha,\beta]$, which we
    obtain from the first binary search, contains no vertex-vertex
    distance of $\cXOrig$ that is more than a factor of two away from
    its boundary.  This implies that the simplification
    $\cX=\simpX{\cXOrig,\sRadius}$ results in the same curve for any
    $\mu \in [3\alpha,\beta/3]$.  An analogous statement holds for
    $\cYOrig$. Unless, a constant factor approximation is found either
    in the interval $[\alpha, 10\alpha]$ or the interval $[\beta/10,
    \beta]$, the algorithm continues the search using the procedure
    \deciderFr and the curves simplified with $\mu=3\alpha$.

    It is now sufficient to search for a constant factor approximation
    to $\distoSFr{\cX}{\cY}$ in the interval
    $\Interval'=[3\alpha,\beta/3]$, since this will approximate the
    desired \Frechet distance by a constant factor.  Indeed, by the
    result of the initial searches, we have that $3\mu \leq 10 \alpha
    \leq \distoSFr{\cXOrig}{\cYOrig}$.  \lemref{t:r:i:n:e:q:s:u:b}
    imply that $\distoSFr{\cX}{\cY} \leq \distoSFr{\cXOrig}{\cYOrig} +
    2\mu \leq 3 \distoSFr{\cXOrig}{\cYOrig}.$ On the other hand, the
    same lemma implies that $\distoSFr{\cX}{\cY} \geq
    \distoSFr{\cXOrig}{\cYOrig} - 2\mu \geq
    \distoSFr{\cXOrig}{\cYOrig}/3.$ This implies, that
    $\distoSFr{\cX}{\cY} \in \Interval'=[3\alpha,\beta/3]$, since
    $\distoSFr{\cXOrig}{\cYOrig} \in [10\alpha,\beta/10]$.  Note that
    this also proves the correctness of $(i)$, since the returned
    interval is contained in $\Interval'$.
    
    Observe that the set of vertex-vertex distances of $\cX$ and $\cY$
    is contained in the set of vertex-vertex distances of $\cXOrig$
    and $\cYOrig$. Clearly, $\Interval'$ cannot contain any
    vertex-vertex distances of $\cX$ and $\cY$.  The algorithm
    therefore extracts the remaining vertex-edge events $U'$ from the
    free space diagram and performs a binary search on them.  We
    obtain the atomic interval $\Interval''=[\alpha'',\beta'']$, which
    contains no vertex-edge events of $\cX$ and
    $\cY$. 
    Note that by \Eqrefpage{c:tunnel:endpoints} and \lemref{r:create},
    the monotonicity event values, as described in
    \secref{background}, coincide with the values of $\delta$ where a
    tunnel within a column of the parametric space becomes feasible,
    that is, with the quantity $\rCreate{\edge}{\edge}{u}{v}$.  By
    \lemref{monotonicity}, these event values would have to lie within
    a factor two of the boundaries of the interval $\Interval''$.
    Therefore, we again search the margins of this interval, so that
    either we found the desired approximation, or alternatively, it
    must be in the interval $\Interval'''=[10\alpha'', \beta''/10]$,
    which now contains no vertex-vertex, vertex-edge, monotonicity or
    tunnel creation events of $\cX$ and $\cY$.  Since $\Interval'''$
    is the interval that the algorithm returns, unless it finds a
    constant factor approximation to the desired \Frechet distance,
    the above argumentation implies $(i)$ and $(ii)$.
    
    As for the running time, computing the set $U$ using
    well-separated pairs decomposition can be done in $O(n \log n)$
    time, see \cite{dhw-afdrc-12}.  Computing the set $U'$ takes time
    in $O( n \log n + c^2n )$, by \dsref{d:s:magic:A} with
    $\mu=\beta/3$ and $\delta=\beta$.  The algorithm invokes the
    decision procedure $O(\log n)$ times, and this dominates the
    overall running time, see \lemref{s:y:m:f:s}.
\end{proof}

\begin{lemma}%
    \lemlab{rand:algo}%
    Given two $c$-packed polygonal curves $\cXOrig$ and $\cYOrig$ in
    $\Re^d$ of total complexity $n$, one can compute a constant factor
    approximation to $\distoSFr{\cXOrig}{\cYOrig}$.  The running time
    is $O\pth{ c^2 n\log^3 n}$.
\end{lemma}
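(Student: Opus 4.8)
The plan is to stitch together the two stages of the search described in \secref{first:stage} and \secref{second:stage}, using \DeciderFr (analyzed in \lemref{s:y:m:f:s}) as an approximate — but, for the purpose of driving the search, effectively exact — decision procedure. First I would invoke the first stage of the algorithm and apply \lemref{first:stage}. If it already returns a constant factor approximation to $\distoSFr{\cXOrig}{\cYOrig}$, we are done. Otherwise it hands us an interval $\widehat{\Interval}$ together with simplified curves $\cX$ and $\cY$ such that $\distoSFr{\cX}{\cY} \in \widehat{\Interval}$, $\distoSFr{\cX}{\cY}/3 \le \distoSFr{\cXOrig}{\cYOrig} \le 3\distoSFr{\cX}{\cY}$, and $\widehat{\Interval}$ contains no vertex--vertex, vertex--edge, or monotonicity event value, and no tunnel creation radius of $\cX$ and $\cY$.

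Second, I would argue that inside $\widehat{\Interval}$ the only critical values left for the shortcut \Frechet distance are prices of tunnels, and that by \lemref{tunnel:event} it suffices to search over the canonical prices $\priceX{\bCanonical{\edge}{\edge'}{u}{v}}$ over all feasible vertex--edge pairs with $\edge \ne \edge'$ (the case $\edge = \edge'$ already being absorbed into the creation-radius / monotonicity events via \lemref{r:create} and \lemref{monotonicity}, hence excluded from $\widehat{\Interval}$). By \lemref{complexity:l:e:q} there are only $m = O(cn)$ such pairs (fixing $\eps$ to a constant), whose canonical gates form the point set $\PntSet$. I would then consider the implicit $m\times m$ matrix $M$ whose $(i,j)$ entry is a $(1+\eps)$-approximation of $\priceX{\xtunnel{\pnt_i}{\pnt_j}}$, evaluable in polylogarithmic time per entry via \dsref{d:s:magic:B}; by \lemref{monotone:matrix}, below the diagonal this matrix is monotone along every row and along every column, up to a constant factor. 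Running the Frederickson--Johnson selection procedure \cite{fj-gsrsm-84} on $M$, with \DeciderFr (fixed $\eps = 1/3$, invoked at the two scales $\delta$ and $\delta/\constDec$ so the two answers agree away from critical values) as the oracle, makes $O(\log m) = O(\log n)$ calls to \DeciderFr, evaluates $O(m)$ entries of $M$, and spends $O(m)$ additional time. Since within $\widehat{\Interval}$ the decision answer of the exact decider changes only at these canonical tunnel prices, this pins $\distoSFr{\cX}{\cY}$ down to a constant factor; composing with the factor-$3$ relation to $\distoSFr{\cXOrig}{\cYOrig}$ and the $(1+\eps)\constSC$ slack of \DeciderFr yields the claimed $O(1)$-approximation.

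For the running time: the first stage costs $O(c^2 n \log^3 n)$ by \lemref{first:stage}. In the second stage, each of the $O(\log n)$ invocations of \DeciderFr costs $O(c^2 n \log^2 n)$ by \lemref{s:y:m:f:s} (with $\eps$ a constant), contributing $O(c^2 n \log^3 n)$; building \dsref{d:s:magic:B} once costs $O(n\log^2 n)$, and the $O(m) = O(cn)$ matrix evaluations cost $O(cn\log n\log\log n)$, both dominated. Hence the total is $O(c^2 n \log^3 n)$.

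I expect the main obstacle to be the bookkeeping forced by the fact that $M$ is only \emph{approximately} sorted — monotone up to a constant factor rather than genuinely sorted — so one must verify that feeding such a matrix to the Frederickson--Johnson machinery, together with the approximate decider, still isolates the answer to within a constant factor (in particular, that the oracle comparisons needed by the selection routine can be answered correctly enough using \DeciderFr at the two scales $\delta$ and $\delta/\constDec$); and, relatedly, carefully accumulating the constants from simplification (\lemref{t:r:i:n:e:q:s:u:b}), the $\constSC$-blowup in \lemref{monotone:shortcut}/\lemref{monotone:matrix}, and the $(1+\eps)$ slack of the decider, to confirm the end-to-end factor is a genuine $O(1)$.
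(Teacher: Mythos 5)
Your proposal matches the paper's proof essentially step for step: first stage via \lemref{first:stage}, then the implicit Frederickson--Johnson search over canonical tunnel prices using \dsref{d:s:magic:B} and \lemref{monotone:matrix}, with the same running-time accounting. The only place the paper is more explicit is the justification that the optimal value, if not caught in the first stage, must equal a canonical price of a tunnel spanning two columns — it perturbs the optimal $\delta$ downward and rules out the boundary cases via vertex-edge/monotonicity events before invoking \lemref{tunnel:event}~(i) — but you cite the right lemma for exactly this purpose, so the approach is the same.
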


\begin{proof}
    First, the algorithm performs the preliminary computations as
    described in \secref{first:stage}. By \lemref{first:stage}, we
    either find a constant factor approximation, or we obtain an
    interval $[\alpha,\beta]$ and simplified curves $\cX$ and
    $\cY$. Furthermore, the interval $[\alpha,\beta]$ does not contain
    any vertex-vertex, vertex-edge, monotonicity, or tunnel creation
    events of $\cX$ and $\cY$.  Let $\PntSet$ be the canonical gates
    that are feasible in the \mbox{$\beta$-free} space of $\cX$ and
    $\cY$. We have that $m = \cardin{\PntSet}=O(n)$ and we can compute
    them using \dsref{d:s:magic:A} in $O( n \log n + c^2n)$ time, for
    $\eps=1/3$.  Thus, the running time up to this stage is bounded by
    $O\pth{ c^2 n\log^3 n}$, by \lemref{first:stage}.
    
    Now, we invoke the second stage of the algorithm described in
    \secref{second:stage} on the matrix of implicit tunnel prices
    defined by $\PntSet$ and return the output as our solution.
    
    Consider a monotone path in the parametric space that corresponds
    to the optimal solution. If the price of this path is determined
    by either a vertex-vertex, a vertex-edge or a monotonicity event
    then we have found an approximation to the shortcut \Frechet
    distance already in the first stage of the search algorithm.  If
    it is dominated by a tunnel price and this tunnel has both
    endpoints in the same column of the free space, then by
    \obsref{f:r:segments} it is a creation radius. By
    \lemref{r:create} this is equivalent to the minimum radius of the
    corresponding tunnel family. By \obsref{min:radius:eq} the minimum
    radius corresponds to either a vertex-edge event or a monotonicity
    event. Thus, it lies outside the interval $[\alpha,\beta]$, since
    by \lemref{first:stage} these critical values were eliminated in
    the first stage. Otherwise, this critical tunnel has to be between
    two columns.  Let $\delta$ be the price of this \tunnel (which is
    also the price of the whole solution).
    
    Consider what happens to this path if we slightly decrease
    $\delta$. Since $\delta$ is optimal, then the critical tunnel
    either ceases to be feasible or its price is not affordable
    anymore.
    
    If the critical tunnel is no longer feasible, then one of its
    endpoints is also an endpoint of the free space interval it lies
    on. Consider the modified path in the free space, which uses the
    new endpoint of the free space interval.  If the free space
    interval is empty, then this corresponds to a vertex edge event,
    and this is not possible inside the interval $[\alpha, \beta]$.
    The other possibility is that the path is no longer
    monotone. However, this corresponds to a monotonicity event, which
    again we already handled because of \lemref{first:stage}.
    
    If the tunnel is still feasible, then it must be that the
    endpoints of this \tunnel are contained in the interior of the
    free space interval and not on its boundary.  Now
    \lemref{tunnel:event} (i) implies that the price of this \tunnel
    is equal to the price of the canonical tunnel.  As such, the price
    of the optimal solution is being approximated correctly in this
    case.

    Observe that in the second stage we are searching over all tunnel
    events that lie in the remaining search interval (whether they are
    relevant in our case or not).  Hence, the search would find the
    correct critical value, as it is one of the values considered in
    the search.
    
    The running time of second stage is bounded by:
    \begin{compactenum}[(A)]
        \item $O(n \log n \log \log n)$ time to compute the needed
        entries in the matrix, using \dsref{d:s:magic:B}.
        
        \item $O\pth{ \pth[]{c^2 n \log^2 n} \log n}$ time for the
        $O(\log n)$ calls to \DeciderFr.
        
        \item $O(n)$ for other computations.
    \end{compactenum}
    
    Therefore, the overall running time of the algorithm is $O\pth{
       c^2 n\log^3 n}$.
\end{proof}

\subsection{Result}

The following theorem states the main result for approximating the
shortcut \Frechet distance.

\begin{theorem}%
    \thmlab{main}%
    Given two $c$-packed polygonal curves $\cXOrig$ and $\cYOrig$ in
    $\Re^d$, with total complexity $n$, and a parameter $\eps > 0$,
    the algorithm of \secref{algo:unbounded:general} computes a
    $(3+\eps)$-approximation to the shortcut \Frechet distance between
    $\cX$ and $\cY$ in $\RTUnbounded$ time.  The algorithm also
    outputs the shortcut curve of $\cYOrig$ and the reparametrizations
    that realize the respective shortcut \Frechet distance.
\end{theorem}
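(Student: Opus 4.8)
The plan is to obtain the $(\constSC+\eps)$-approximation by refining the constant-factor approximation of \lemref{rand:algo} with a short search driven by the approximate decision procedure \DeciderFr. First I would run the algorithm of \lemref{rand:algo} on $\cXOrig$ and $\cYOrig$; in $O(c^2 n\log^3 n)$ time this yields a value that constant-factor approximates $\distoSFr{\cXOrig}{\cYOrig}$, i.e., an interval $[\alpha,\beta]$ with $\beta/\alpha = O(1)$ and $\alpha \le \distoSFr{\cXOrig}{\cYOrig} \le \beta$. This pins the answer down to within a multiplicative constant, so all that remains is to break below the factor $\constSC$, and for that I would invoke \DeciderFr at a suitably fine error parameter.

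Fix $\eps' = \Theta(\eps)$ small enough that $\constSC(1+\eps')^2 \le \constSC + \eps$, and consider the geometric sequence $\delta_0 = \alpha, \delta_1 = \alpha(1+\eps'), \ldots, \delta_N$, where $\delta_N$ is the first term that is at least $\beta$; this is a set of $O(\log(\beta/\alpha)/\eps') = O(1/\eps')$ candidate values. By \lemref{s:y:m:f:s}, a call $\DeciderFr(\cXOrig,\cYOrig,\eps',\delta_j)$ returns a correct assertion, either ``$\distoSFr{\cXOrig}{\cYOrig} \le (1+\eps')\constSC\delta_j$'' or ``$\distoSFr{\cXOrig}{\cYOrig} > \delta_j$''. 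Since the $\delta$-reachable free space (tunnels included) only grows with $\delta$, the event ``the decider reports the first alternative'' is monotone in $\delta_j$, so I would binary search over $\delta_0,\ldots,\delta_N$, using $O(\log(1/\eps))$ decider calls, for the smallest index $i$ for which the first alternative is reported. Such an $i$ exists, since for $\delta_N \ge \beta \ge \distoSFr{\cXOrig}{\cYOrig}$ the first alternative is true and hence reported.

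Having found $i$, output $V := (1+\eps')\constSC\,\delta_i$. If $i = 0$ then $\alpha \le \distoSFr{\cXOrig}{\cYOrig} \le V = (1+\eps')\constSC\,\alpha \le (\constSC+\eps)\,\distoSFr{\cXOrig}{\cYOrig}$. Otherwise the call at $\delta_{i-1} = \delta_i/(1+\eps')$ reported the second alternative, so $\delta_i/(1+\eps') < \distoSFr{\cXOrig}{\cYOrig} \le V$, whence $V \le \constSC(1+\eps')^2\,\distoSFr{\cXOrig}{\cYOrig} \le (\constSC+\eps)\,\distoSFr{\cXOrig}{\cYOrig}$; in both cases $V$ is the desired $(\constSC+\eps)$-approximation. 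For the auxiliary output, I would run in the decisive call at $\delta_i$ the variant of \DeciderFr of \obsref{reparametrizations}: tracing a path backwards from $(1,1)$ to $(0,0)$ in the computed reachability subgraph produces a shortcut curve $\cZ$ of $\cYOrig$ together with reparametrizations of width at most $(1+\eps')\constSC\delta_i = V$, which are reported alongside $V$.

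For the running time, the first step costs $O(c^2 n\log^3 n)$ by \lemref{rand:algo}, and the second step performs $O(\log(1/\eps))$ calls to \DeciderFr with error parameter $\eps' = \Theta(\eps)$, each costing $O\pth{c^2 n\log^2 n\cdot\eps^{-2d}\log(1/\eps)}$ by \lemref{s:y:m:f:s}; all remaining work (building the grid, the binary-search bookkeeping, path extraction) is of lower order, so the total is $\RTUnbounded$, as claimed. I do not expect a deep obstacle here: the heavy machinery is already contained in \lemref{rand:algo}, \lemref{s:y:m:f:s}, and the event analysis of \secref{shortcut:price}. The one point requiring care is the bookkeeping of the approximation factor -- tracking how the factor $\constSC$ from the decider, the $(1+\eps')$ slack inside its guarantee, the $(1+\eps')$ grid resolution, and the simplification error already absorbed into \DeciderFr (via \lemref{t:r:i:n:e:q:s:u:b}) compose, and verifying that $\eps' = \Theta(\eps)$ makes them compose to exactly $\constSC+\eps$ and no worse -- together with the one-line justification that the decider's answer is monotone in $\delta$, which is what legitimises the binary search.
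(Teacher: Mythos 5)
Your proposal is correct and follows essentially the same route as the paper's proof: obtain a constant-factor approximation via \lemref{rand:algo}, refine it to a $(3+\eps)$-approximation by a short search over a constant-ratio interval using \DeciderFr with error parameter $\Theta(\eps)$, and extract the witness shortcut curve and reparametrizations via \obsref{reparametrizations}; your version is in fact more explicit than the paper about how the factor $3$, the $(1+\eps')$ slack in the decider's guarantee, and the grid resolution compose. Two tiny caveats: the \emph{output} of the approximate decider need not be monotone in $\delta$ (only the true reachable free space is), but this is harmless since the binary search only needs to maintain a ``no''/``yes'' bracket and end with an adjacent pair $\delta_{i-1},\delta_i$ --- exactly what your final inequality uses --- and your $O(\log(1/\eps))$ decider calls exceed the stated bound by at most a $\log(1/\eps)$ factor (the paper gets away with $O(1)$ calls at coarser resolution).
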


\begin{proof}
    The result follows from \lemref{rand:algo}.  This yield an
    interval $\Interval$ that contains the value of the optimal
    solution. We can turn any constant factor approximation into a
    $(3+\eps)$-approximation, using \DeciderFr with $\eps'=\eps/3$ by
    invoking it over a constant number of subintervals of the form
    $[\alpha,\beta]$, where $\beta=(3+\eps)\alpha$. These intervals
    are required to cover $\Interval$, and as such, \DeciderFr would
    return the desired approximation for one of them (the running time
    of each call to \DeciderFr is stated in \lemref{s:y:m:f:s}).
    
    It is easy to modify the algorithm, such that it also outputs the
    shortcut curve and the reparametrizations realizing the
    approximate \Frechet distance, see \obsref{reparametrizations}.
\end{proof}

\begin{remark}%
    \remlab{k:shortcut}%
    One can extend the algorithm of \thmref{main} so that it
    approximates the \Frechet distance where only $k$ shortcuts are
    allowed. The basic algorithm is similar, except that we keep track
    for the points of $\Reached$ how many shortcuts were used in
    computing them. The resulting algorithm has running time
    $\RTBoundedK$ (for $\eps$ a constant).  This version of the
    algorithm is described in the first author's thesis, see
    \cite{d-raapg-13}.
\end{remark}


\section{Data structures for \Frechet-distance queries}
\seclab{single:segment:query} 

Given a polygonal curve $\cZ$ in $\Re^d$, we build a data structure
that supports queries for the \Frechet distance of subcurves of $\cZ$
to query segments $\pnt\pntA$. We describe the data structure in three
stages.  After establishing some basic facts in \secref{query:helper},
we first describe a data structure that achieves a constant factor
approximation in \secref{queries:stage:one}.  We proceed by describing
a data structure that answers queries for the \Frechet distance of the
entire curve to a query segment up to an approximation factor of
$(1+\eps)$ in \secref{queries:stage:two}.  Finally, we describe how to
combine these two results to obtain the final data structure for
segment queries in \secref{queries:stage:three}.

\subsection{Useful lemmas for curves and segments}
\seclab{query:helper}

\begin{defn}\deflab{spine}
    For a curve $\cZ$, the segment connecting its endpoints is its
    \emphi{spine}, denoted by $\spineX{\cZ}$.
\end{defn}

The following is a sequence of technical lemmas that we need later
on. These lemmas testify that:
\begin{compactenum}[\quad (A)]
    \item The spine of a curve is, up to a factor of two, the closest
    segment to this curve with respect to the \Frechet distance, see
    \lemref{f:r:basic}.
    
    \item The \Frechet distance between a curve and its spine is
    monotone, up to a factor of two, with respect to subcurves, see
    \lemref{monotone:subcurve}.
    
    \item Shortcutting a curve cannot increase the \Frechet distance
    of the curve to a line segment, see \lemref{shortcut:segment} or
    \cite{bbw-cfdsp-08}.
\end{compactenum}

\begin{lemma}%
    \lemlab{f:r:basic}%
    Let $\pnt \pntA$ be a segment and $\cZ$ be a curve.  Then,
    \begin{inparaenum}[(i)]
        \item \itemlab{shortcut} $\distFr{\pnt \pntA}{\cZ} \geq$
        $\distFr{\pnt \pntA}{\spineX{\cZ}}$, and
        \item $\distFr{\pnt \pntA}{\cZ} \geq
        \distFr{\spineX{\cZ}}{\cZ}/2$.
    \end{inparaenum}
\end{lemma}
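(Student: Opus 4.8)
The plan is to derive both parts from the elementary description of the Fréchet distance between two segments (\obsref{f:r:segments}) together with the triangle inequality for the Fréchet distance, which is already used elsewhere in the paper (e.g.\ in the proof of \lemref{monotone:shortcut:base}).

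For part \itemref{shortcut}, I would argue straight from \defref{width:f}. Write $\cZ(0)$ and $\cZ(1)$ for the endpoints of $\cZ$, so that $\spineX{\cZ}$ is the segment $\cZ(0)\cZ(1)$. Any orientation-preserving reparametrization $f$ realizing (approximately) $\distFr{\pnt \pntA}{\cZ}$ forces $\pnt$ to be matched to $\cZ(0)$ and $\pntA$ to $\cZ(1)$; hence its width is at least $\max\pth{\distX{\pnt}{\cZ(0)},\, \distX{\pntA}{\cZ(1)}}$. By \obsref{f:r:segments} this maximum is exactly $\distFr{\pnt \pntA}{\spineX{\cZ}}$, which gives $\distFr{\pnt \pntA}{\cZ} \geq \distFr{\pnt \pntA}{\spineX{\cZ}}$.

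For part (ii), I would invoke the triangle inequality for the Fréchet distance: $\distFr{\spineX{\cZ}}{\cZ} \leq \distFr{\spineX{\cZ}}{\pnt \pntA} + \distFr{\pnt \pntA}{\cZ}$. By part \itemref{shortcut}, the first term on the right is at most $\distFr{\pnt \pntA}{\cZ}$, so the right-hand side is at most $2\,\distFr{\pnt \pntA}{\cZ}$; dividing by two yields the claim.

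There is essentially no obstacle here; the only points requiring care are (a) that the endpoint-matching used in part (i) is precisely the orientation-preserving condition in \defref{width:f}, and (b) that the Fréchet distance satisfies the triangle inequality. As an alternative, one could prove (ii) directly, without routing through (i), by bounding $\distFr{\spineX{\cZ}}{\pnt \pntA}$ via \obsref{f:r:segments} using $\distX{\pnt}{\cZ(0)} \leq \distFr{\pnt \pntA}{\cZ}$ and $\distX{\pntA}{\cZ(1)} \leq \distFr{\pnt \pntA}{\cZ}$; but using part (i) is the cleanest route.
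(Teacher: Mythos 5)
Your proposal is correct and matches the paper's own proof essentially verbatim: part (i) follows from the forced matching of endpoints together with \obsref{f:r:segments}, and part (ii) follows from the triangle inequality combined with part (i). No further comment is needed.
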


\begin{proof}
    Let $\pntB$ and $\pntC$ be the endpoints of $\cZ$; that is
    $\spineX{\cZ} = \pntB \pntC$.
    
    (i) Since in any \matching{} of $\pnt \pntA$ with $\cZ$ it must be
    that $\pnt$ is matched to $\pntB$, and $\pntA$ is matched to
    $\pntC$, it follows that $\distFr{\pnt \pntA}{\cZ} \geq \max (
    \distX{\pnt}{u}, \distX{\pntA}{v}) = \distFr{\pnt \pntA}{\pntB
       \pntC} = \distFr{\pnt \pntA}{\spineX{\cZ}}$, by
    \obsref{f:r:segments}.
    
    (ii) By (i) and the triangle inequality, we have that
    \[\distFr{\spineX{\cZ}}{\cZ} \leq \distFr{\spineX{\cZ}}{\pnt
       \pntA} + \distFr{\pnt \pntA}{\cZ} \leq 2 \distFr{\pnt
       \pntA}{\cZ},\] which implies the claim.
\end{proof}

\begin{lemma}%
    \lemlab{monotone:subcurve}%
    Given two curves $\cZ$ and $\cubcZ$, such that $\cubcZ$ is a
    subcurve of $\cZ$.  Then, we have that
    $\distFr{\spineX{\cubcZ}}{\cubcZ} \leq 2
    \distFr{\spineX{\cZ}}{\cZ}$.
\end{lemma}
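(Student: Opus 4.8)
The plan is to obtain the factor of $2$ from a single triangle-inequality step, after transporting an (almost) optimal matching of $\cZ$ with its spine down to the subcurve. Write $\delta = \distFr{\spineX{\cZ}}{\cZ}$ and $\cubcZ = \SubCrv{\cZ}{a}{b}$. First I would fix an orientation-preserving reparametrization $f$ with $\widthX{f}{\spineX{\cZ},\cZ} \le \delta + \eps$; such an $f$ exists for every $\eps>0$ since the \Frechet distance is an infimum, and at the very end we let $\eps\to 0$. Here I instantiate \defref{width:f} with $\cX = \spineX{\cZ}$ and $\cY = \cZ$, so that $f$ reparametrizes the spine as a function of the parameter of $\cZ$.

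Next I would restrict $f$ to the parameter interval $[a,b]$. Since $f$ is orientation-preserving it is monotone, so $f([a,b])$ is a subinterval $[f(a),f(b)]$, and hence the portion of the line segment $\spineX{\cZ}$ that $f$ matches to $\cubcZ$ is itself a single sub-segment; call it $\seg$. The restriction of $f$ to $[a,b]$ is then a valid matching of $\seg$ with $\cubcZ$ whose width is at most $\widthX{f}{\spineX{\cZ},\cZ}$, so $\distFr{\seg}{\cubcZ} \le \delta + \eps$.

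Now I would apply \lemref{f:r:basic}\itemref{shortcut} to the segment $\seg$ and the curve $\cubcZ$, which gives $\distFr{\seg}{\spineX{\cubcZ}} \le \distFr{\seg}{\cubcZ} \le \delta + \eps$. Finally, the triangle inequality for the ordinary \Frechet distance yields
\[
\distFr{\spineX{\cubcZ}}{\cubcZ} \;\le\; \distFr{\spineX{\cubcZ}}{\seg} + \distFr{\seg}{\cubcZ} \;\le\; 2(\delta + \eps),
\]
and letting $\eps \to 0$ gives $\distFr{\spineX{\cubcZ}}{\cubcZ} \le 2\distFr{\spineX{\cZ}}{\cZ}$, as claimed.

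I do not expect a genuine obstacle. The one point meriting a sentence of care is the claim that restricting the matching to a sub-parameter-interval matches $\cubcZ$ to an honest sub-segment of $\spineX{\cZ}$ (rather than to some disconnected subset of the supporting line) — this is exactly where monotonicity of $f$ is used. Everything else reduces to the definition of width as a pointwise supremum, \lemref{f:r:basic}, and the triangle inequality.
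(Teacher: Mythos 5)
Your proof is correct and follows essentially the same route as the paper's: restrict a (near-)optimal matching of $\cZ$ with its spine to the subcurve to get a sub-segment $\seg$ with $\distFr{\seg}{\cubcZ}\leq\distFr{\spineX{\cZ}}{\cZ}$, then apply \lemref{f:r:basic}~(i) and the triangle inequality. The only difference is your explicit $\eps$-argument for the infimum, where the paper simply speaks of the matching realizing the distance; this is a harmless refinement.
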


\begin{proof}
    Consider the \matching{} that realizes the \Frechet distance
    between $\cZ$ and $\spineX{\cZ}$. It has to match the endpoints of
    $\cubcZ$ to points $\pntA$ and $\pntB$ on $\spineX{\cZ}$.  We have
    that $\distFr{\cubcZ}{{\pntA\pntB}} \leq
    \distFr{\cZ}{\spineX{\cZ}}$.  By \lemref{f:r:basic} (i), we have
    $\distFr{\spineX{\cubcZ}}{{\pntA\pntB}} \leq
    \distFr{\cubcZ}{{\pntA\pntB}} \leq \distFr{\cZ}{\spineX{\cZ}}$.
    Now, by the triangle inequality, we have that
    \begin{align*}
        \distFr{\cubcZ}{\spineX{\cubcZ}} \leq
        \distFr{\cubcZ}{{\pntA\pntB}} +
        \distFr{{\pntA\pntB}}{\spineX{\cubcZ}} \leq 2
        \distFr{\cZ}{\spineX{\cZ}}.
    \end{align*}
    \aftermathA
\end{proof}

\begin{lemma}
    \lemlab{shortcut:segment}%
    Let $\cZ = u_1 u_2 \ldots u_n$ be a polygonal curve, $\pnt \pntA$
    be a segment, and let $i < j$ be any two indices. Then, for $\cZ'
    = \SC{\cZ}{u_1}{u_i} \concatOp u_i u_j \concatOp
    \SC{\cZ}{u_j}{u_n}$, we have $\distFr{\cZ'}{ \pnt \pntA} \leq
    \distFr{\cZ}{ \pnt \pntA}$.
\end{lemma}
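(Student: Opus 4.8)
The plan is to take an optimal \Frechet matching between $\cZ$ and $\pnt\pntA$, cut it at the points matched to $u_i$ and $u_j$, and splice in the trivial matching between the two collinear segments that replace the cut-out portion. Write $\delta = \distFr{\cZ}{\pnt\pntA}$. Since both curves are polygonal the infimum in \defref{width:f} is attained, so fix an orientation-preserving reparametrization realizing width $\delta$ (equivalently, a monotone endpoint-preserving matching of $\cZ$ to $\pnt\pntA$ of width $\delta$; if one prefers not to invoke attainment, run the argument with width $\delta+\eps'$ and let $\eps'\to 0$). Let $\pntB$ be a point of $\pnt\pntA$ matched to $u_i$ and $\pntBa$ a point matched to $u_j$. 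By monotonicity of the matching together with $i<j$, the point $\pntB$ does not lie after $\pntBa$ along $\pnt\pntA$ oriented from $\pnt$ to $\pntA$; moreover $\distX{u_i}{\pntB}\le\delta$ and $\distX{u_j}{\pntBa}\le\delta$.

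Next I would decompose. The fixed matching restricts to a matching of the prefix $\SC{\cZ}{u_1}{u_i}$ with the subsegment $\SC{\pnt\pntA}{\pnt}{\pntB}$, and of the suffix $\SC{\cZ}{u_j}{u_n}$ with $\SC{\pnt\pntA}{\pntBa}{\pntA}$, each of width at most $\delta$ (a restriction of a matching only decreases its width). For the middle piece, the shortcut edge $u_i u_j$ and the collinear segment $\SC{\pnt\pntA}{\pntB}{\pntBa}$ are both line segments, so by \obsref{f:r:segments} their \Frechet distance is $\max(\distX{u_i}{\pntB}, \distX{u_j}{\pntBa}) \le \delta$; the same bound holds in the degenerate case $\pntB = \pntBa$, where \obsref{f:r:segments} applied to the point $\pntB\pntB$ gives $\max(\distX{u_i}{\pntB}, \distX{u_j}{\pntB})\le\delta$.

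Finally I would glue the three matchings back together. The three pieces $\SC{\cZ}{u_1}{u_i}$, $u_iu_j$, $\SC{\cZ}{u_j}{u_n}$ are consecutive subcurves of $\cZ'$, and $\SC{\pnt\pntA}{\pnt}{\pntB}$, $\SC{\pnt\pntA}{\pntB}{\pntBa}$, $\SC{\pnt\pntA}{\pntBa}{\pntA}$ are consecutive subsegments of $\pnt\pntA$ appearing in the same order, so concatenating the three reparametrizations at the shared breakpoints yields a single orientation-preserving reparametrization between $\cZ'$ and $\pnt\pntA$, whose width is the maximum of the three and hence at most $\delta$. This gives $\distFr{\cZ'}{\pnt\pntA} \le \delta = \distFr{\cZ}{\pnt\pntA}$. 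The only delicate point is the monotonicity bookkeeping in the first paragraph --- choosing $\pntB$ and $\pntBa$ (e.g.\ the last point matched to $u_i$ and the first matched to $u_j$, in case a vertex is matched to a whole subsegment) so that $\pntB$ precedes $\pntBa$ and the three reparametrizations concatenate into a legal monotone one; after that the statement follows immediately from \obsref{f:r:segments}.
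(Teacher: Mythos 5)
Your proof is correct and is essentially the paper's argument: the paper likewise cuts the optimal matching of $\cZ$ with $\pnt\pntA$ at the points matched to $u_i$ and $u_j$, bounds the middle piece by replacing the subcurve with its spine (via \lemref{f:r:basic}\,(i), which is itself just \obsref{f:r:segments}, the observation you invoke directly), and glues the three pieces back into a matching of $\cZ'$ with $\pnt\pntA$. Your extra care about attainment of the infimum and about degenerate/non-injective matchings is fine but not needed beyond what the paper already assumes.
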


\begin{proof}
    Consider the \matching{} realizing $\distFr{\cZ}{ \pnt \pntA}$,
    and break it into three portions:
    \begin{compactitem}
        \item the portion matching $\SC{\cZ}{u_1}{u_i}$ with a
        ``prefix'' $\pnt\pnt' \subseteq \pnt\pntA$,
        \item the portion matching $\SC{\cZ}{u_i}{u_j}$ with a
        subsegment $\pnt'\pntA' \subseteq \pnt\pntA$, and
        \item the portion matching $\SC{\cZ}{u_j}{u_n}$ with a
        ``suffix'' $\pntA'\pntA \subseteq \pnt\pntA$.
    \end{compactitem}
    Now, by \lemref{f:r:basic} \itemref{shortcut}, we have that
    \begin{align*}
        \distFr{\cZ}{ \pnt \pntA}%
        &=%
        \max \pth{\MakeBig%
           \,%
           \distFr{\SC{\cZ}{u_1}{u_i}}{\pnt \pnt'}, %
           \, \,%
           \distFr{\SC{\cZ}{u_i}{u_j}}{\pnt' \pntA'},%
           \, \,%
           \distFr{\SC{\cZ}{u_j}{u_n}}{\pntA' \pntA}%
           \,%
        }%
        \\&
        \geq%
        \max \pth{\MakeBig%
           \, \distFr{\SC{\cZ}{u_1}{u_i}}{\pnt \pnt'},%
           \, \, %
           \distFr{u_i u_j }{\pnt' \pntA'},%
           \, \, %
           \distFr{\SC{\cZ}{u_j}{u_n}}{\pntA' \pntA}%
           \, }%
        \\&%
        \geq %
        \distFr{\cZ'}{\pnt \pntA}.
    \end{align*}
    \aftermathA
\end{proof}

\subsection{Stage 1: Achieving a constant-factor approximation}
\seclab{f:r:query:constant}%
\seclab{queries:stage:one}

In this section we describe a data structure that preprocesses a curve
$\cZ$ to answer queries for the \Frechet distance of a subcurve of
$\cZ$ to a query segment up to a constant approximation factor. This
data structure will be the basis for later extensions.

A query is specified by points $u,v, \pnt$ and $\pntA$. Here $u$ and
$v$ are points on $\cZ$ (and we are also given the edges of $\cZ$
containing these two points), and the points $\pnt$ and $\pntA$ define
the query segment. Our goal is to approximate $\distFr{ \pnt
   \pntA}{\SC{\cZ}{u}{v}}$.

\subsubsection{The data structure}

\paragraph{Preprocessing.}
Build a balanced binary tree $\Tree$ on the \emph{edges} of $\cZ$.
Every internal node $\node$ of $\Tree$ corresponds to a subcurve of
$\cZ$, denoted by $\cNode{\node}$. Let $\segNode{\node}$ denote the
spine of $\cNode{\node}$ (\defref{spine}).  For every node, we
precompute its \Frechet distance of the curve $\cNode{\node}$ to the
segment $\segNode{\node}$. Let $\frNode{\node}$ denote this distance.

\paragraph{Answering a query.}
For the time being, assume that $u$ and $v$ are vertices of $\cZ$.  In
this case, one can compute, in $O( \log n)$ time, $k=O(\log n)$ nodes
$\node_1, \ldots, \node_k$ of $\Tree$, such that $\SC{\cZ}{u}{v} =
\cNode{\node_1} \concatOp \cNode{\node_2} \concatOp \cdots \concatOp
\cNode{\node_k}$. We compute the polygonal curve $\cY =
\segNode{\node_1} \concatOp \cdots \concatOp \segNode{\node_k}$, and
compute its \Frechet distance from the segment $\pnt \pntA$. We denote
this distance by $d = \distFr{\pnt \pntA}{\cY}$. We return
\begin{align*}
    \Delta = d + \max_{i=1}^k \frNode{\node_i}
\end{align*}
as the approximate distance between {$\pnt \pntA$} and the subcurve
$\SC{\cZ}{u}{v}$.

\subsubsection{Analysis}

\begin{lemma}
    \lemlab{easy:shortcut}%
    Given a polygonal curve $\cZ$ with $n$ edges, one can preprocess
    it in $O(n \log^2 n)$ time, such that for any pair $u,v$ of
    vertices of $\cZ$ and a segment $\pnt \pntA$, one can compute, in
    $O( \log n \log \log n)$ time, a $3$-approximation to
    $\distFr{\pnt\pntA} {\SC{\cZ}{u}{v}}$.
\end{lemma}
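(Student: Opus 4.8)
The plan is to verify the preprocessing bound and the query bound, and then to show that the value $\Delta = d + \max_{i}\frNode{\node_i}$ returned by the query procedure satisfies $\opt \le \Delta \le 3\,\opt$, where $\opt := \distFr{\pnt\pntA}{\SC{\cZ}{u}{v}}$.

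\textbf{Preprocessing and query time.} The tree $\Tree$ has $O(n)$ nodes, and building it takes $O(n)$ time. For a node $\node$ whose subcurve $\cNode{\node}$ has $m$ edges, I would compute $\frNode{\node} = \distFr{\cNode{\node}}{\segNode{\node}}$ --- the \Frechet distance of an $m$-edge polygonal curve to a single segment --- in $O(m\log m)$ time using the algorithm of Alt and Godau~\cite{ag-cfdbt-95}. Summing over all nodes grouped by height (a node at height $h$ has $O(2^h)$ edges and there are $O(n/2^h)$ of them) gives $\sum_{h} O(n\,h) = O(n\log^2 n)$. For a query with vertices $u,v$, the canonical decomposition $\SC{\cZ}{u}{v} = \cNode{\node_1}\concatOp\cdots\concatOp\cNode{\node_k}$ with $k = O(\log n)$ is found in $O(\log n)$ time (as recalled above); then $\cY = \segNode{\node_1}\concatOp\cdots\concatOp\segNode{\node_k}$ has $O(\log n)$ edges, so $d = \distFr{\pnt\pntA}{\cY}$ costs $O(k\log k) = O(\log n\log\log n)$ by~\cite{ag-cfdbt-95}, and the maximum of the $k$ stored values $\frNode{\node_i}$ costs $O(\log n)$; the total is $O(\log n\log\log n)$.

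\textbf{Lower bound $\opt \le \Delta$.} I would take an optimal \matching{} of $\pnt\pntA$ with $\cY$ (of width $d$). Its restrictions to the $k$ edges of $\cY$ cut $\pnt\pntA$ into consecutive subsegments $\sigma_1,\ldots,\sigma_k$ (sharing endpoints) with $\distFr{\segNode{\node_i}}{\sigma_i}\le d$ for each $i$. By the triangle inequality, $\distFr{\cNode{\node_i}}{\sigma_i}\le \distFr{\cNode{\node_i}}{\segNode{\node_i}}+\distFr{\segNode{\node_i}}{\sigma_i}\le\frNode{\node_i}+d\le\Delta$. Concatenating these $k$ \matching{}s --- valid since the endpoints of $\cNode{\node_i}$ and of $\sigma_i$ coincide with those of the next piece, so monotonicity is preserved --- produces a \matching{} of $\SC{\cZ}{u}{v}$ with $\pnt\pntA$ of width at most $\max_i\distFr{\cNode{\node_i}}{\sigma_i}\le\Delta$, whence $\opt\le\Delta$.

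\textbf{Upper bound $\Delta \le 3\,\opt$.} I would bound the two summands of $\Delta$ separately. For $d$: the curve $\cY$ is obtained from $\SC{\cZ}{u}{v}$ by replacing each $\cNode{\node_i}$ by the chord $\segNode{\node_i}$ joining its (vertex) endpoints, i.e.\ by a sequence of $k$ shortcuts between vertices of $\cZ$; iterating \lemref{shortcut:segment} once per shortcut (with an empty prefix, resp.\ suffix, at the first and last steps) gives $d = \distFr{\pnt\pntA}{\cY}\le\distFr{\pnt\pntA}{\SC{\cZ}{u}{v}} = \opt$. For $\max_i\frNode{\node_i}$: fixing an optimal \matching{} of $\pnt\pntA$ with $\SC{\cZ}{u}{v}$, the subcurve $\cNode{\node_i}$ is matched to some subsegment $\tau_i\subseteq\pnt\pntA$, so $\distFr{\cNode{\node_i}}{\tau_i}\le\opt$; \lemref{f:r:basic}\itemref{shortcut} applied to $\cNode{\node_i}$ and $\tau_i$ gives $\distFr{\tau_i}{\segNode{\node_i}} = \distFr{\tau_i}{\spineX{\cNode{\node_i}}}\le\distFr{\tau_i}{\cNode{\node_i}}\le\opt$, and the triangle inequality then yields $\frNode{\node_i} = \distFr{\cNode{\node_i}}{\segNode{\node_i}}\le 2\,\opt$ for every $i$. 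Hence $\Delta = d + \max_i\frNode{\node_i}\le\opt + 2\,\opt = 3\,\opt$. The part requiring the most care is the bookkeeping of the restricted and composed \matching{}s (so that the gluing in the lower bound is genuinely monotone), together with the observation that \lemref{shortcut:segment}, though stated for a single shortcut, applies verbatim to any polygonal curve and hence may be iterated.
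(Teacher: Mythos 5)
Your proof is correct and follows essentially the same route as the paper's: the $O(n\log^2 n)$ preprocessing by per-level summation, the lower bound by chaining the matching of $\pnt\pntA$ with $\cY$ to the per-node matchings, the bound $d \le \distFr{\pnt\pntA}{\SC{\cZ}{u}{v}}$ by iterating \lemref{shortcut:segment}, and the bound $\frNode{\node_i} \le 2\,\distFr{\pnt\pntA}{\SC{\cZ}{u}{v}}$ via the spine property (you re-derive \lemref{f:r:basic}(ii) from part (i) plus the triangle inequality, which is exactly how the paper proves it). No gaps.
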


\begin{proof}
    The construction of the data structure and how to answer a query
    is described above. For the preprocessing time, observe that
    computing the \Frechet distance of a segment to a polygonal curve
    with $k$ segments takes $O(k \log k)$ time
    \cite{ag-cfdbt-95}. Hence, the distance computations in each level
    of the tree $\Tree$ take $O(n \log n)$ time, and $O(n \log^2 n)$
    time overall.
    
    As for the query time, computing $\cY$ takes $O( \log n)$ time,
    and computing its \Frechet distance from $\pnt \pntA$ takes $O(
    \log n \log \log n)$ time \cite{ag-cfdbt-95}.
    
    Finally, observe that the returned distance $\Delta$ is a
    realizable \Frechet distance, as we can take the \matching{}
    between $\pnt \pntA$ and $\cY$, and chain it with the \matching{}
    of every edge of $\cY$ with its corresponding subcurve of
    $\cZ$. Clearly, the resulting \matching{} has width at most
    $\Delta$.
    
    Let $t$ be the index realizing $\max_{i=1}^k
    \frNode{\node_i}$. Then, by repeated application of
    \lemref{shortcut:segment}, we have that $d = \distFr{\pnt
       \pntA}{\cY} \le \distFr{\pnt \pntA}{\cZ}$. Thus,
    \begin{align*}
        \Delta &= d + \max_{i=1}^k \frNode{\node_i}%
        =%
        \distFr{\pnt \pntA}{\cY} + \frNode{\node_t}%
        \leq%
        \distFr{\pnt \pntA}{\cZ} +
        \distFr{\segNode{v_t}}{\cNode{v_t}}%
        \\
        &\leq%
        \distFr{\pnt \pntA}{\cZ} + 2 \min_{\pnt'\pntA' \subseteq \pnt
           \pntA} \distFr{\pnt'\pntA'}{\cNode{v_t}}%
        \leq%
        3\distFr{\pnt \pntA}{\cZ}.
    \end{align*}
    To see the last step, consider the \matching{} realizing
    $\distFr{\pnt \pntA}{\cZ}$, and consider the subsegment $\pnt'
    \pntA'$ of $\pnt\pntA$ that is being matched to $\cNode{v_t}
    \subseteq \cZ$. Clearly, $\distFr{\pnt'\pntA'}{\cNode{v_t}} \leq
    \distFr{\pnt \pntA}{\cZ}$.
\end{proof}

\begin{theorem}%
    \thmlab{data:structure:constant}%
    Given a polygonal curve $\cZ$ with $n$ edges, one can preprocess
    it in $O(n \log^2 n)$ time and using $O(n)$ space, such that,
    given a query specified by
    \begin{compactenum}[\quad(i)]
        \item a pair of points $u$ and $v$ on the curve $\cZ$,
        \item the edges containing these two points, and
        \item a pair of points $\pnt$ and $\pntA$,
    \end{compactenum}
    one can compute, in $O( \log n \log \log n)$ time, a
    $3$-approximation to $\distFr{\pnt \pntA} {\SC{\cZ}{u}{v}}$.
\end{theorem}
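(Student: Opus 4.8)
The plan is to reuse verbatim the data structure of \lemref{easy:shortcut}: a balanced binary tree $\Tree$ over the \emph{edges} of $\cZ$ in which every node $\node$ stores the spine $\segNode{\node}$ of its subcurve $\cNode{\node}$ together with the precomputed value $\frNode{\node} = \distFr{\segNode{\node}}{\cNode{\node}}$. As already argued there, this costs $O(n\log^2 n)$ preprocessing time and $O(n)$ space (the tree has $O(n)$ nodes, each storing a single number). The only new point is that the query points $u$ and $v$ are now arbitrary points of $\cZ$ rather than vertices.

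First I would dispose of the degenerate configurations. If $u$ and $v$ lie on the same edge of $\cZ$, then $\SC{\cZ}{u}{v}$ is a single segment and $\distFr{\pnt\pntA}{\SC{\cZ}{u}{v}}$ is computed \emph{exactly} in $O(1)$ time by \obsref{f:r:segments}. Otherwise, using the edges containing $u$ and $v$ that are supplied with the query, let $w_a$ be the endpoint of the edge of $u$ that follows $u$ along $\cZ$, and $w_b$ the endpoint of the edge of $v$ that precedes $v$; both are vertices of $\cZ$, and $w_a$ never lies strictly after $w_b$. If $w_a = w_b$ then $\SC{\cZ}{u}{v} = u w_a \concatOp w_b v$ is a two-segment polyline and we are again in an easy case; in general
\[
    \SC{\cZ}{u}{v} \;=\; u w_a \concatOp \SC{\cZ}{w_a}{w_b} \concatOp w_b v .
\]
Since $w_a$ and $w_b$ are vertices, the query procedure of \lemref{easy:shortcut} produces in $O(\log n)$ time the canonical nodes $\node_1,\dots,\node_k$ of $\Tree$ with $k = O(\log n)$ and $\SC{\cZ}{w_a}{w_b} = \cNode{\node_1}\concatOp\cdots\concatOp\cNode{\node_k}$. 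I would then form the polygonal curve
\[
    \cY \;=\; u w_a \concatOp \segNode{\node_1} \concatOp \cdots \concatOp \segNode{\node_k} \concatOp w_b v ,
\]
which has $O(\log n)$ edges, compute $d = \distFr{\pnt\pntA}{\cY}$ in $O(\log n \log\log n)$ time by the Alt--Godau algorithm, and return $\Delta = d + \max_{i=1}^k \frNode{\node_i}$.

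The analysis is identical to that of \lemref{easy:shortcut}, the only observation being that the two end segments $u w_a$ and $w_b v$ coincide with their own spines and hence contribute $0$ to the $\max$ term. For the upper bound, $\cY$ is obtained from $\SC{\cZ}{u}{v}$ by replacing each $\cNode{\node_i}$ with its spine, i.e.\ by a sequence of shortcuts, so repeated use of \lemref{shortcut:segment} gives $d \le \distFr{\pnt\pntA}{\SC{\cZ}{u}{v}}$; and if $\node_t$ realizes the maximum, applying \lemref{f:r:basic}(ii) to $\cNode{\node_t}$ and the subsegment $\pnt'\pntA'\subseteq\pnt\pntA$ assigned to $\cNode{\node_t}$ by an optimal matching yields $\frNode{\node_t} \le 2\distFr{\pnt'\pntA'}{\cNode{\node_t}} \le 2\distFr{\pnt\pntA}{\SC{\cZ}{u}{v}}$, so $\Delta \le 3\distFr{\pnt\pntA}{\SC{\cZ}{u}{v}}$. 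The lower bound $\Delta \ge \distFr{\pnt\pntA}{\SC{\cZ}{u}{v}}$ holds because $\Delta$ is a realizable width: chain the optimal matching of $\pnt\pntA$ with $\cY$ with the optimal matchings of each $\segNode{\node_i}$ with $\cNode{\node_i}$ (and the identity on the two end segments). The preprocessing and query times match the claimed bounds. I expect no genuine mathematical obstacle here — \lemref{easy:shortcut} does all the real work; the only step requiring care is the bookkeeping of the partial first/last edges and the several degenerate positions of $w_a$ and $w_b$.
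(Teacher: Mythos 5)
Your proposal is correct and follows essentially the same route as the paper: decompose $\SC{\cZ}{u}{v}$ into the two partial edge segments $uw_a$ and $w_bv$ plus the vertex-to-vertex subcurve $\SC{\cZ}{w_a}{w_b}$, reuse the tree of \lemref{easy:shortcut}, concatenate the two end segments onto $\cY$, and observe that they coincide with their own spines so the analysis goes through unchanged. The only difference is that you spell out the degenerate cases (same edge, $w_a = w_b$) that the paper leaves implicit, which is harmless.
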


\begin{proof}
    This follows by a relatively minor modification of the above
    algorithm and analysis. Indeed, given $u$ and $v$ (and the edges
    containing them), the data structure computes the two vertices
    $u', v'$ that are endpoints of these edges that lie between $u$
    and $v$ on the curve. The data structure then concatenates the
    segments $u{} u'$ and $v'{} v$ to the approximation $\cY$ (here
    $\cY$ is computed for the vertices $u'$ and $v'$).  The remaining
    details are as described above.
\end{proof}

\subsection{Stage 2: A segment query to the entire curve}
\seclab{queries:stage:two}

In this section we describe a data structure that preprocesses a curve
to answer queries for the \Frechet distance of the entire curve to a
query segment up to an approximation factor of $(1+\eps)$. We will use
this data structure as a component in our later extensions.

\subsubsection{The data structure}

We need the following relatively easy construction of an exponential
grid. \figref{exp:sprinkle} illustrates the idea.
The details can be found in \cite{d-raapg-13}. 
\begin{lemma}[\cite{d-raapg-13}]\lemlab{exp:grid}
    Given a point $u \in \Re^d$, a parameter $0<\eps\leq 1$ and an interval
    $[\alpha,\beta] \subseteq \Re$ one can compute in $O\pth{\eps^{-d}
    \log\pth{\beta/\alpha}}$ time and space an exponential grid of points $G(u)$, 
    such that for any point $\pnt \in \Re^d$ with $\distX{\pnt}{u} \in
    [\alpha,\beta]$,
    one can compute in constant time a grid point $\pnt' \in G(u)$ with 
    $\distX{\pnt}{\pnt'} \leq (\eps/2) \distX{\pnt}{u}$.
\end{lemma}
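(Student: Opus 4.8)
The plan is to build $G(u)$ as a union of $O(\log(\beta/\alpha))$ uniformly spaced grids, one per geometric scale, each covering the appropriate shell around $u$ at a resolution proportional to $\eps$ times that scale's radius. First I set $r_i = 2^i\alpha$ for $i = 0,1,\ldots,m$ with $m = \ceil{\log_2(\beta/\alpha)}$, so the annuli $\{\pnt : r_i \le \distX{\pnt}{u} < r_{i+1}\}$ cover the shell $\alpha \le \distX{\pnt}{u}\le\beta$. For each $i$ let $s_i = \eps r_i/\sqrt d$ and let $G_i(u)$ be the set of grid points of the axis-parallel grid with spacing $s_i$, anchored so that $u$ is a grid point, that lie inside $\ball{u}{3 r_i}$; then $G(u) = \bigcup_{i=0}^m G_i(u)$.

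For the size and construction time: the ball $\ball{u}{3r_i}$ meets $O\pth{(3 r_i/s_i + 1)^d} = O\pth{(\sqrt d/\eps)^d} = O(\eps^{-d})$ cells of the $s_i$-grid (the dependence on the fixed dimension $d$ is absorbed into the $O$-notation, using $\eps\le 1$), so each $G_i(u)$ has $O(\eps^{-d})$ points and is enumerated in that much time; summing over the $m+1 = O(\log(\beta/\alpha))$ scales gives the claimed $O\pth{\eps^{-d}\log(\beta/\alpha)}$ bound for both time and space.

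To answer a query for $\pnt$ with $\distX{\pnt}{u}\in[\alpha,\beta]$: compute $r = \distX{\pnt}{u}$ and the scale index $i = \floor{\log_2(r/\alpha)}$ in $O(1)$ time, so that $r_i \le r < r_{i+1}$; then round each coordinate of $\pnt - u$ to the nearest multiple of $s_i$ and add $u$ back to get a point $\pnt'$ of the $s_i$-grid. Since any point lies within $L_\infty$-distance $s_i/2$ of such a grid point, we get $\distX{\pnt}{\pnt'} \le \tfrac{\sqrt d}{2} s_i = \tfrac{\eps}{2} r_i \le \tfrac{\eps}{2}\distX{\pnt}{u}$, the desired bound. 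It remains to verify $\pnt' \in G_i(u)$: indeed $\distX{\pnt'}{u} \le \distX{\pnt}{u} + \distX{\pnt}{\pnt'} \le r_{i+1} + \tfrac{\eps}{2} r_i \le \bigl(2+\tfrac12\bigr) r_i < 3 r_i$, so $\pnt'$ lies in $\ball{u}{3 r_i}$ and hence was stored when $G_i(u)$ was enumerated. Thus the query takes constant time.

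The construction and its analysis are routine; the only points that require care — and the closest thing to an \emph{obstacle} — are (a) choosing the spacing $s_i = \eps r_i/\sqrt d$ so that the worst-case snapping error $\tfrac{\sqrt d}{2} s_i$ is at most $\tfrac{\eps}{2} r_i \le \tfrac{\eps}{2}\distX{\pnt}{u}$; (b) letting each scale's grid cover a ball slightly larger than its annulus (radius $3 r_i$ suffices) so that the snapped point is guaranteed to be among the stored grid points; and (c) the constant-time query, which uses the standard assumption that $\floor{\log_2(r/\alpha)}$ is a unit-cost operation (otherwise one pays an extra $O(\log\log(\beta/\alpha))$ factor to locate the scale by binary search over the $m+1$ scales). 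None of these is a substantive difficulty.
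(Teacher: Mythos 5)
Your construction is correct and is the standard exponential-grid argument the lemma refers to: the paper itself defers the details to \cite{d-raapg-13}, but the intended construction is exactly what you describe, namely $O(\log(\beta/\alpha))$ geometric scales, each carrying a uniform grid of spacing $\Theta(\eps r_i)$ with $O(\eps^{-d})$ points covering a ball slightly larger than its annulus, with snapping answered by coordinate rounding. The error bookkeeping ($\tfrac{\sqrt d}{2}s_i \le \tfrac{\eps}{2}r_i \le \tfrac{\eps}{2}\distX{\pnt}{u}$) and the containment check $\distX{\pnt'}{u} < 3r_i$ are both right, so there is nothing to add.
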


\paragraph{Preprocessing.}
We are given a polygonal curve $\cZ$ in $\Re^d$ with $n$ segments, and
we would like to preprocess it for $(1+\eps)$-approximate \Frechet
distance queries against a query segment. To this end, let $L =
\distFr{uv}{\cZ}$, where $uv$ is the spine of $\cZ$.  We construct an
exponential grid $G(u)$ of points around $u$ with the range 
$[\alpha,\beta]=[\gridMin,L/\eps]$ as described in \lemref{exp:grid}
 and illustrated in \figref{exp:sprinkle}.  
We construct the same grid $G(v)$ around the vertex~$v$.

\begin{figure}\centering
\includegraphics{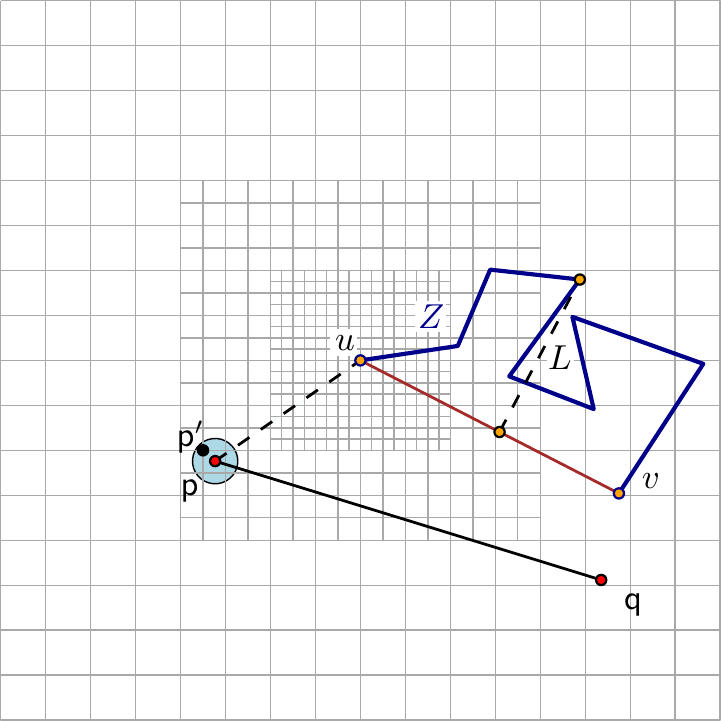}
\caption{{We build an exponential grid around each endpoint of the curve, such that 
for any point $\pnt$, which has distance to the endpoint in the range
$[\gridMin, L/\eps]$, there exists a grid point $\pnt'$ which is relatively close by.}}
\figlab{exp:sprinkle}
\end{figure}

Now, for every pair of points $(\pnt',\pntA') \in G(u) \times G(v)$ we
compute the \Frechet distance $D[\pnt', \pntA'] = \distFr{\pnt'
\pntA'}{\cZ}$ and store it. Thus, we take 
$O\pth{ \GridCompl^2 n \log n }$ time to build a data structure 
that requires $O\pth{ \GridCompl^2 }$ space, where $\GridCompl=\NgridCompl$.

\paragraph{Answering a query.}
Given a query segment $\pnt \pntA$, we compute the distance
\begin{align*}
    r = \max\pth{ \!\MakeSBig \distX{\pnt}{u}, \distX{\pntA}{v}}.
\end{align*}
If $r \leq \gridMin$, then we return $L-r$ as the approximation
to the distance $\distFr{\pnt\pntA}{\cZ}$.  
If $r \geq L/\eps$ then we return $r$ as the approximation. 
Otherwise, let $\pnt'$ (resp., $\pntA'$) be the nearest neighbor to
$\pnt$ in $G(u)$ (resp., $G(v)$). We return the distance
\[
\Delta = D[\pnt', \pntA'] - \max\pth{\distX{\pnt}{\pnt'},
       \distX{\pntA}{\pntA'}}
      \]
as the approximation.

\subsubsection{Analysis}

\begin{lemma}
    \lemlab{epsilon:query:w:curve}%
    Given a polygonal curve $\cZ$ with $n$ vertices in $\Re^d$, one
    can build a data structure, in $O\pth{\GridCompl^2 n \log n )}$ 
    time, that uses $O\pth{ \GridCompl^2 }$ space, such that given a query segment $\pnt \pntA$
    one can $(1+\eps)$-approximate $\distFr{\pnt \pntA}{\cZ}$ in
    $O(1)$ time, where $\GridCompl=\NgridCompl$.
\end{lemma}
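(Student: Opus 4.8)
The plan is to verify that the three cases of the query procedure each return a $(1+\eps)$-approximation to $\distFr{\pnt\pntA}{\cZ}$, and then tally the preprocessing/space/query costs. Throughout I write $uv = \spineX{\cZ}$ and $L = \distFr{uv}{\cZ}$, and $r = \max(\distX{\pnt}{u},\distX{\pntA}{v}) = \distFr{\pnt\pntA}{uv}$ by \obsref{f:r:segments}. The single global fact I lean on is the triangle inequality for the \Frechet distance applied to the three curves $\pnt\pntA$, $uv$, and $\cZ$, giving $|\distFr{\pnt\pntA}{\cZ} - L| \leq \distFr{\pnt\pntA}{uv} = r$, i.e.
\begin{align*}
    L - r \;\leq\; \distFr{\pnt\pntA}{\cZ} \;\leq\; L + r .
\end{align*}
Also, by \lemref{f:r:basic}(i), $\distFr{\pnt\pntA}{\cZ} \geq \distFr{\pnt\pntA}{uv} = r$, so $\distFr{\pnt\pntA}{\cZ} \geq \max(L-r,\, r)$ always.

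\textbf{Case $r \leq \gridMin$.} Here the procedure returns $L-r$. From the displayed bounds, $L-r \leq \distFr{\pnt\pntA}{\cZ} \leq L+r$. Since $\distFr{\pnt\pntA}{\cZ} \geq L - r$ we need $L+r \leq (1+\eps)(L-r)$, equivalently $2r \leq \eps(L-r)$. Using $r \leq \eps L/4$ and hence $L - r \geq (1 - \eps/4)L \geq (3/4)L$, we get $\eps(L-r) \geq (3\eps/4)L \geq (3/4)(4r) = 3r \geq 2r$; so the returned value $L-r$ is a lower bound within a $(1+\eps)$ factor. (One has to also note $L - r \geq 0$, which holds since $r \leq \eps L/4 < L$; if $L = 0$ the curve is degenerate and the bound is trivial.)

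\textbf{Case $r \geq L/\eps$.} The procedure returns $r$. Since $\distFr{\pnt\pntA}{\cZ} \geq r$ this is a lower bound, and from $\distFr{\pnt\pntA}{\cZ} \leq L + r \leq \eps r + r = (1+\eps) r$ it is within the required factor.

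\textbf{Case $\gridMin < r < L/\eps$.} Here $r$ is realized by one of the two terms, say $\distX{\pnt}{u}$ (the other is symmetric), which then lies in $[\gridMin, L/\eps]$; note the other endpoint distance $\distX{\pntA}{v}$ also lies in $[0, L/\eps]$, but if it is below $\gridMin$ the analogous grid point is within $(\eps/2)\gridMin \le (\eps/2) r$ of $\pntA$ anyway (this is a detail to handle — one can include $u$ and $v$ themselves as grid points of $G(u),G(v)$ to cover the ``too-close'' subcase). So \lemref{exp:grid} gives grid points $\pnt' \in G(u)$, $\pntA' \in G(v)$ with $\distX{\pnt}{\pnt'} \leq (\eps/2)\distX{\pnt}{u} \leq (\eps/2)r$ and $\distX{\pntA}{\pntA'} \leq (\eps/2)r$. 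Let $s = \max(\distX{\pnt}{\pnt'},\distX{\pntA}{\pntA'}) \le (\eps/2) r$. By \obsref{f:r:segments}, $\distFr{\pnt\pntA}{\pnt'\pntA'} = s$, so by the triangle inequality
\begin{align*}
    D[\pnt',\pntA'] - s \;\leq\; \distFr{\pnt\pntA}{\cZ} \;\leq\; D[\pnt',\pntA'] + s ,
\end{align*}
hence the returned value $\Delta = D[\pnt',\pntA'] - s$ satisfies $\Delta \leq \distFr{\pnt\pntA}{\cZ} \leq \Delta + 2s$. It remains to convert the additive error $2s \le \eps r$ into a relative one: since $\distFr{\pnt\pntA}{\cZ} \geq r$, we get $\distFr{\pnt\pntA}{\cZ} \leq \Delta + \eps r \leq \Delta + \eps\, \distFr{\pnt\pntA}{\cZ}$, so $\Delta \ge (1-\eps)\distFr{\pnt\pntA}{\cZ}$; adjusting $\eps$ by a constant factor at the start (replacing $\eps$ by $\eps/3$, say) turns this into the stated $(1+\eps)$-approximation, and also absorbs the slack from the ``too-close endpoint'' subcase.

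\textbf{Resources.} Preprocessing: computing $L$ via Alt--Godau takes $O(n\log n)$; building $G(u),G(v)$ takes $O(\GridCompl)$ by \lemref{exp:grid} with $\beta/\alpha = (L/\eps)/(\eps L/4) = 4/\eps^2$, so $\GridCompl = O(\eps^{-d}\log(1/\eps))$ as claimed; and for each of the $O(\GridCompl^2)$ pairs $(\pnt',\pntA')$ we run one \Frechet computation in $O(n\log n)$, giving $O(\GridCompl^2 n\log n)$ total. Space is $O(\GridCompl^2)$ for the stored matrix $D$ (the curve itself is not needed after preprocessing, though one may keep it in $O(n)$ — the lemma statement charges only $O(\GridCompl^2)$ for the query structure). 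A query does $O(1)$ distance computations, one nearest-grid-point lookup in $O(1)$ by \lemref{exp:grid}, and one table lookup, for $O(1)$ time.

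\textbf{Main obstacle.} The routine parts are the triangle-inequality sandwiches; the one genuinely fussy point is the boundary handling — making sure the exponential grid's valid range $[\gridMin, L/\eps]$ dovetails exactly with the three case thresholds, and in particular that when only one of $\distX{\pnt}{u}, \distX{\pntA}{v}$ is in range (the other being smaller than $\gridMin$) the procedure still returns a good answer. Including the endpoints $u \in G(u)$, $v\in G(v)$ as explicit grid points, and starting from $\eps/3$, cleanly resolves this; I expect that to be the only place needing care.
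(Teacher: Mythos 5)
Your proposal is correct and follows essentially the same route as the paper's proof: the same three-case analysis keyed to $r=\max(\distX{\pnt}{u},\distX{\pntA}{v})$, the same triangle-inequality sandwiches $L-r\leq\distFr{\pnt\pntA}{\cZ}\leq L+r$ and $D[\pnt',\pntA']\pm s$, and the same resource accounting. You are in fact somewhat more careful than the paper in two spots it glosses over -- the explicit verification that $2r\leq\eps(L-r)$ in the first case, and the $(1-\eps)$-versus-$(1+\eps)$ rescaling plus the boundary subcase where one endpoint distance falls below $\gridMin$ -- so no changes are needed.
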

\begin{proof}
    The data structure is described above. 
    Given $\pnt \pntA$ we compute the distance of the
    endpoints of this segment from the endpoints of $\cZ$. If they are too
    close, or if one of them is too far away, then we are done since in this
    case the \Frechet distance is dominated either by these distances or by the
    precomputed value $L$.  Otherwise, we find the two cells in the exponential
    grid that contain $\pnt$ and $\pntA$ (that is, the indices of the grid
    points that are close to them) as described above. 
    Using the indices of the grid points, we can directly look-up the
    approximation of the \Frechet distance in constant time.
    
    Now, we argue about the quality of the approximation using the notation
    which is also used above. 
    There are three cases: either
    \begin{inparaenum}[(i)]
    \item $r \leq \gridMin$, or
    \item $L \leq \eps r $, or
    \item $  \gridMin \leq r \leq  L/\eps$.
    \end{inparaenum}
    Let $\Delta$ be the returned value. We claim that in all three cases, it
    holds that
    \begin{align}\Eqlab{sprinkle:claim}
    \Delta \leq \distFr{\pnt\pntA}{\cZ} \leq (1+ \eps) \Delta.
    \end{align}
    First note that by the triangle inequality, 
    \begin{align}\Eqlab{yatq}
    L  - r \leq \distFr{\pnt\pntA}{\cZ} \leq L + r.
    \end{align}
    Now, in case (i) above, $L$ dominates the distance value and we return $\Delta=L-r$.
    Thus, \Eqref{sprinkle:claim} follows from \Eqref{yatq}.

    In case (ii), $r$ dominates the distance value and we return
    $\Delta=r$. Since $r$ is at most $\distFr{\pnt\pntA}{\cZ}$,  again
     \Eqref{sprinkle:claim} follows from \Eqref{yatq}.

    In case (iii), the precomputed \Frechet distance of $\pnt'\pntA'$ to
    $\cZ$ dominates the distance. Recall that we return 
    $\Delta = \distFr{\pnt'\pntA'}{\cZ} - \distFr{\pnt'\pntA'}{\pnt\pntA}$  in
    this case.
    Again, by the triangle inequality, it holds
    that
    \begin{align}\Eqlab{t:eq:2}
         \distFr{\pnt'\pntA'}{\cZ} - \distFr{\pnt'\pntA'}{\pnt\pntA} 
    \leq \distFr{\pnt\pntA}{\cZ} 
    \leq \distFr{\pnt'\pntA'}{\cZ} + \distFr{\pnt'\pntA'}{\pnt\pntA}.
    \end{align}
    Since $r$ is at least $\gridMin$ and by \obsref{f:r:segments}, \lemref{exp:grid} implies that 
    \[\distFr{\pnt'\pntA'}{\pnt\pntA}\leq
    \max\pth{\distX{\pnt}{\pnt'},\distX{\pntA}{\pntA'}} \leq (\eps/2) r,\]
    thus, since also $r$ is at most $\distFr{\pnt\pntA}{\cZ}$ it follows by \Eqref{t:eq:2} that 
    \[
    \Delta 
    \leq \distFr{\pnt\pntA}{\cZ} 
    \leq  \Delta + 2 \distFr{\pnt'\pntA'}{\pnt\pntA} 
    \leq \Delta + \eps r \leq (1+\eps)\Delta.
    \]
    This implies the claim.
\end{proof}

\subsection{Stage 3: A segment query to a subcurve}
\seclab{queries:prelim}%
\seclab{queries:stage:three}%
\seclab{f:r:query}%

In this section we describe a data structure that preprocesses a curve
$\cZ$ to answer queries for the \Frechet distance of a subcurve of
$\cZ$ to a query segment up to an approximation factor of
$(1+\eps)$. For this we combine the data structures developed in the
previous sections.

As in \secref{f:r:query:constant}, a query is defined by two points
$u$ and $v$ on $\cZ$ and a segment with endpoints $\pnt$ and
$\pntA$. The goal is now a $(1+\eps)$-approximation to $\distFr{\pnt
   \pntA}{\SC{\cZ}{u}{v}}$.

\subsubsection{The data structure}

\paragraph{Preprocessing.}
Let $\cZ$ be a given polygonal curve with $n$ vertices. We build the
data structure of \thmref{data:structure:constant}. Next, for each
node of the resulting tree $\Tree$, we build for its subcurve the data
structure of \lemref{epsilon:query:w:curve} using $\eps'=\eps/3$.

\begin{figure}\centering
    \includegraphics{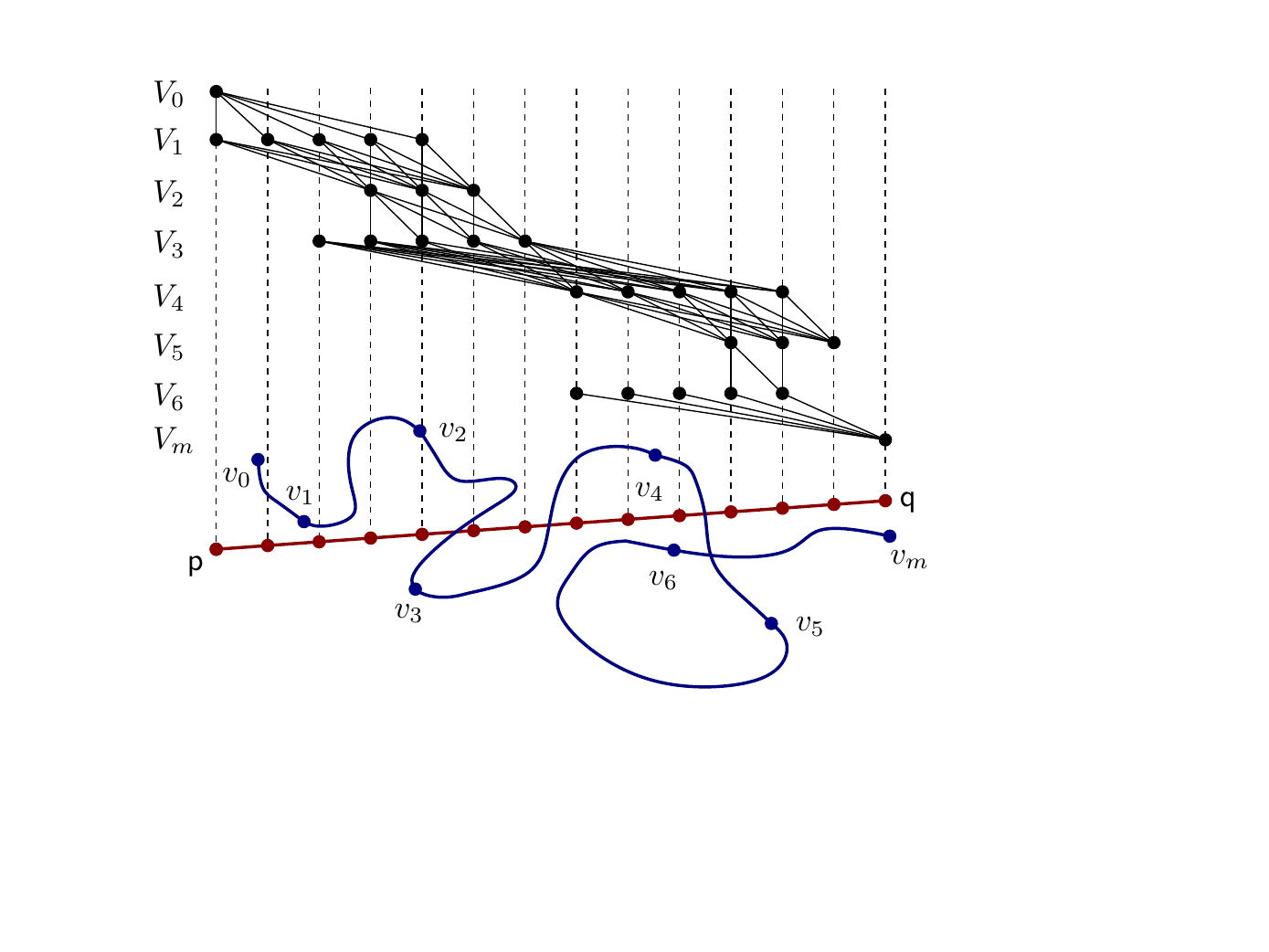}
    \caption{Schematic illustration of the graph $\Graph$ on the
       vertex set $\bigcup V_i$.}
    \figlab{vi:graph}
\end{figure}

\paragraph{Answering a query.}
Using the data structure of \thmref{data:structure:constant} we first
compute a $3$-approximation $r$ to $\distFr{\pnt \pntA}
{\SC{\cZ}{u}{v}}$; that is, $\distFr{\pnt \pntA} {\SC{\cZ}{u}{v}} \leq
r \leq 3\distFr{\pnt \pntA} {\SC{\cZ}{u}{v}}$.  This query also
results in a decomposition of $\SC{\cZ}{u}{v}$ into $m = O( \log n)$
subcurves. Let $u = v_0, v_1, \ldots, v_{m-1}, v_m =v$ be the vertices
of these subcurves, where $v_0v_1$ and $v_{m-1}v_m$ are subsegments of
$\cZ$.

We want to find points on $\pnt \pntA$ that can be matched to
$v_1,\dots,v_{m-1}$ under a $(1+\eps)$-approximate \Frechet matching.
To this end, we uniformly partition the segment $\pnt \pntA$ into
segments of length at most {$\eps r / \constA$, where $\constA$ is a
   sufficiently large constant which we define later.}  Let $\Pi$ be
the set of vertices of this implicit partition.  For each vertex
$v_i$, for $i=1, \ldots, m-1$, we compute its nearest point on $\pnt
\pntA$, and let $V_i \subseteq \Pi$ be the set of all vertices in
$\Pi$ that are in distance at most $2r$ from $v_i$.  The set $V_i$ is
the set of candidate points to match $v_i$ in the \matching{} that
realizes the \Frechet distance.

Now, we build a graph $\Graph$ where $\bigcup_i V_i$ is the multiset
of vertices. Two points $x \in V_i$ and $y \in V_{i+1}$ are connected
by a direct edge in this graph if and only if $y$ is after $x$ in the
oriented segment $\pnt \pntA$. See \figref{vi:graph} for a schematic
illustration. The price of such an edge $\dEdge{x}{y}$ is a
$(1+\eps/4)$-approximation to the \Frechet distance between
$\SC{\cZ}{v_i}{v_{i+1}}$ and $xy$.  The portion
$\SC{\cZ}{v_i}{v_{i+1}}$ of the curve corresponds to a node in
$\Tree$, and this node has an associated data structure that can
answer such queries in constant time (see
\lemref{epsilon:query:w:curve}).  For any point $x \in V_1$, we
directly compute the \Frechet distance $v_0 v_1$ with $\pnt
x$. Similarly, we compute, for each $y \in V_{m-1}$, the \Frechet
distance of the segment $v_{m-1} v_m$ to the segment $y \pntA$. We add
the corresponding edges to $\Graph$ together with the vertices $\pnt$
and $\pntA$.

Using a variant of Dijkstra's algorithm for bottleneck shortest paths,
we now compute a path in this graph which minimizes the maximum cost
of any single edge visited by the path, connecting $\pnt$ with
$\pntA$. The cost of this path is returned as the approximation to the
\Frechet distance between $\SC{\cZ}{u}{v}$ and $\pnt \pntA$.
Intuitively, this path corresponds to the cheapest matching of
$\SC{\cZ}{u}{v}$ (broken into subcurves by the vertices $v_0,\ldots,
v_m$) with $V_0 \times V_1 \times \cdots V_{m-1} \times V_m$, where
$V_0 = \brc{\pnt}$, $V_m = \brc{\pntA}$, and every subcurve
$\SC{\cZ}{v_i}{v_{i+1}}$ is matched with two points in the
corresponding sets $V_i$ and $V_{i+1}$.

\subsubsection{Analysis}

\paragraph{Query time.}
Computing the set of vertices $v_0,v_1, \ldots, v_m$ takes $O( m) = O(
\log n)$ time.  The graph $\Graph$ has $N=O(m/\eps)$ vertices and they
can be computed in $O(m /\eps )$ time. In particular, the number of
vertices in $V_i$ is bounded by $O(1/\eps)$, since they are spread
apart on a line segment by $\eps r/\constA$ and contained inside a
ball of radius $2r$. Thus, the graph has $O\pth{\pth{1/\eps}^{-2}}$
edges connecting $V_i$ with $V_{i+1}$ and $M=O\pth{m/\eps^2}$ edges in
total. The cost of each edge can be computed in constant time, see
\lemref{epsilon:query:w:curve}.  Computing the cheapest path between
$\pnt$ and $\pntA$ in $\Graph$ can be done in $O(N\log N + M)=O\pth{
   (m/\eps) \log (m/\eps) + m/\eps^2}$ time, using Dijkstra's
algorithm for bottleneck shortest paths. Overall, the query time is
\[O\pth{ m + (m/\eps)\log (m/\eps) + m/\eps^2} = O\pth{ \eps^{-2} \log
   n \log\log n}.\]

\begin{figure}\centering
    \includegraphics{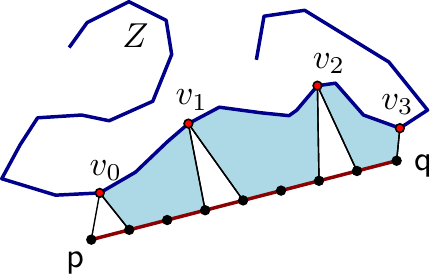}
    \caption{Illustration of the error introduced by snapping. }
    \figlab{snapping-error}
\end{figure}

\paragraph{Quality of approximation.}

Consider the \matching{} that realizes the \Frechet distance between
the query segment $\pnt \pntA$ and the subcurve $\SC{\cZ}{u}{v}$, and
break it at the vertices of $v_0, \ldots, v_m$. Now, snap the matching
such that the endpoints of $\SC{\cZ}{v_i}{v_{i+1}}$ are mapped to
their closest vertices in $V_i$ and $V_{i+1}$, respectively, for all
$i$. This introduces an error of at most $\eps r / \constA \leq
(\eps/3)\distFr{\SC{\cZ}{u}{v}}{\pnt\pntA}$, if we choose $\constA
\geq 9$, see \figref{snapping-error} for an illustration.  We get
another factor of $(1+\eps')=(1+\eps/3)$ error since we are
approximating the price of these portions using
\lemref{epsilon:query:w:curve}. Therefore, the approximation has price
at most
\[
(1+\eps/3)(1+\eps/3)\cdot\distFr{\SC{\cZ}{u}{v}}{\pnt\pntA} \leq
(1+\eps)\cdot\distFr{\SC{\cZ}{u}{v}}{\pnt\pntA}.\]

\paragraph{Preprocessing time and space.}
Building the data structure described in
\thmref{data:structure:constant} takes $O(n \log^2 n)$ time. For each
node $v$ of this tree, building the data structure of
\lemref{epsilon:query:w:curve} takes $O\pth{ \GridCompl^2 \lNode{v}
   \log \lNode{v} }$ time per node, where $\lNode{v}$ is the number of
vertices of the curve stored in the subtree of $v$. As such, overall,
the preprocessing time is $O\pth{\GridCompl^2 n \log^2 n }$. For each
node, this data structure requires $O\pth{ \GridCompl^2 }$ space and
thus the overall space usage is $O\pth{ \GridCompl^2 n }$, where
$\GridCompl = \NgridCompl$.


Putting the above together, we get the following result.

\begin{theorem}%
    \thmlab{segment:queries:f:r}%
    Given a polygonal curve $\cZ$ with $n$ vertices in $\Re^d$, one
    can build a data structure, in $O\pth{\GridCompl^2 n \log^2 n }$
    time, that uses $O\pth{n \GridCompl^2}$ space, such that for a
    query segment $\pnt \pntA$, and any two points $u$ and $v$ on the
    curve (and the segments of the curve that contain them), one can
    $(1+\eps)$-approximate the distance
    $\distFr{\SC{\cZ}{u}{v}}{\pnt\pntA}$ in $O\pth{ \eps^{-2} \log n
       \log\log n}$ time, and $\GridCompl=\NgridCompl$.
\end{theorem}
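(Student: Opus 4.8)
The plan is to assemble the three components developed in this section. In preprocessing I build the constant-factor structure of \thmref{data:structure:constant} on $\cZ$ and then, for every node $\node$ of its balanced binary tree $\Tree$, build on the subcurve $\cNode{\node}$ the whole-curve $(1+\eps)$-query structure of \lemref{epsilon:query:w:curve} with parameter $\eps'=\eps/3$. A node holding $\lNode{\node}$ vertices costs $O\pth{\GridCompl^2\lNode{\node}\log\lNode{\node}}$ time and $O\pth{\GridCompl^2}$ space; summing $\lNode{\node}$ to $O(n)$ over each of the $O(\log n)$ levels of $\Tree$ gives total preprocessing time $O\pth{\GridCompl^2 n\log^2 n}$ and space $O\pth{n\GridCompl^2}$, the $O(n\log^2 n)$ cost of \thmref{data:structure:constant} being absorbed.

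To answer a query $(u,v,\pnt,\pntA)$, I first invoke \thmref{data:structure:constant} to get a value $r$ with $\distFr{\pnt\pntA}{\SC{\cZ}{u}{v}}\le r\le 3\distFr{\pnt\pntA}{\SC{\cZ}{u}{v}}$ together with the induced decomposition $\SC{\cZ}{u}{v}=\cNode{\node_1}\concatOp\cdots\concatOp\cNode{\node_m}$ into $m=O(\log n)$ pieces with breakpoints $u=v_0,v_1,\ldots,v_m=v$. I then lay down the uniform partition $\Pi$ of $\pnt\pntA$ into pieces of length at most $\eps r/\constA$, and for each $i$ form $V_i$, the points of $\Pi$ within distance $2r$ of $v_i$, with $V_0=\brc{\pnt}$ and $V_m=\brc{\pntA}$. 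I build the layered DAG $\Graph$ on $\bigcup_i V_i$, placing an edge $\dEdge{x}{y}$ for $x\in V_i$, $y\in V_{i+1}$ whenever $y$ is not before $x$ along $\pnt\pntA$, and pricing it by the $(1+\eps/3)$-approximation to $\distFr{\SC{\cZ}{v_i}{v_{i+1}}}{xy}$ read off the structure stored at $\node_{i+1}$ (the two boundary segments $v_0v_1$ and $v_{m-1}v_m$ priced directly); the answer $\Delta$ is the bottleneck-shortest-path cost from $\pnt$ to $\pntA$, computed by Dijkstra.

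For correctness I show $\distFr{\pnt\pntA}{\SC{\cZ}{u}{v}}\le\Delta\le(1+\eps)\distFr{\pnt\pntA}{\SC{\cZ}{u}{v}}$. The lower bound is immediate: any $\pnt$-to-$\pntA$ path in $\Graph$ encodes a genuine monotone matching obtained by concatenating per-piece matchings, so its cost is at least the true \Frechet distance, and the approximate edge prices only inflate it. For the upper bound, take an optimal matching of $\pnt\pntA$ with $\SC{\cZ}{u}{v}$, break it at the $v_i$, and snap the point of $\pnt\pntA$ matched to each $v_i$ to its nearest vertex of $\Pi$; since that point lies within $\distFr{\pnt\pntA}{\SC{\cZ}{u}{v}}\le r$ of $v_i$, its snapped image lies in $V_i$, and nearest-point snapping on a line is monotone, so the snapped path is a legal $\pnt$-to-$\pntA$ path in $\Graph$. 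Snapping moves each matched segment endpoint by at most $\eps r/\constA$, so by \obsref{f:r:segments} and the triangle inequality each piece's width grows by at most $\eps r/\constA$; pricing those via the $(1+\eps/3)$-structures and using $r\le 3\distFr{\pnt\pntA}{\SC{\cZ}{u}{v}}$ with $\constA\ge 9$ bounds the snapped path's cost by $(1+\eps/3)^2\distFr{\pnt\pntA}{\SC{\cZ}{u}{v}}\le(1+\eps)\distFr{\pnt\pntA}{\SC{\cZ}{u}{v}}$, and $\Delta$ is the minimum over all paths. The running time is bookkeeping: computing $r$ and the decomposition takes $O(\log n\log\log n)$; each $V_i$ has $O(1/\eps)$ points (spaced $\Theta(\eps r/\constA)$ apart inside a radius-$2r$ ball), so $\Graph$ has $O(\eps^{-1}\log n)$ vertices and $O(\eps^{-2}\log n)$ edges, each priced in $O(1)$, and bottleneck Dijkstra then runs in $O\pth{\eps^{-2}\log n\log\log n}$.

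The step I expect to be most delicate is the upper-bound snapping argument: one must check that snapping the images of the $v_i$ independently still gives a well-defined monotone reparametrization of $\pnt\pntA$, that each moved image genuinely lands in the precomputed $V_i$ (so the path actually exists in $\Graph$), and that the additive snapping error together with the $(1+\eps/3)$ multiplicative error of \lemref{epsilon:query:w:curve} compose into a single $(1+\eps)$ factor — which is exactly where the slack from running the sub-structures at $\eps/3$ rather than $\eps$ is consumed. Everything else is routine.
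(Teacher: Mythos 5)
Your proposal is correct and follows essentially the same route as the paper: the same preprocessing (the constant-factor tree of \thmref{data:structure:constant} augmented per node with the structure of \lemref{epsilon:query:w:curve} at $\eps/3$), the same layered graph over a uniform $\eps r/\constA$ partition of the query segment with candidate sets $V_i$, bottleneck Dijkstra, and the same snapping argument with $\constA \geq 9$ composing the two $(1+\eps/3)$ losses into $(1+\eps)$.
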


We emphasize that the result of \thmref{segment:queries:f:r} assumed
nothing on the input curve $\cZ$. In particular, the curve $\cZ$ is
not necessarily $c$-packed.

\section{Universal vertex permutation and %
   its applications}
\seclab{f:r:simpl}

We would like to extend the data structure described in
\secref{queries:stage:one} to support queries with curves of more than
one segment. For this, we first introduce a new method to represent a
polygonal curve in a way such that we can extract a simplification
with a small number of segments quickly. We describe this method in
\secref{u:simpl} and we describe the extension of the data structure
in \secref{k:seg:query}.

\subsection{Universal vertex permutation}
\seclab{u:simpl}

We use the data structure described in \secref{queries:prelim} to
preprocess $\cZ$, such that, given a number of vertices $k \in \Na$,
we can quickly return a simplification of $\cZ$ which has
\begin{compactenum}[(i)]
    \item $\SNumVertices{k}$ vertices of the original curve and
    \item minimal \Frechet distance to $\cZ$, up to a constant factor,
    compared to any simplification of $\cZ$ with only $k$ vertices.
\end{compactenum}
The idea is to compute a permutation of the vertices, such that the
curve formed by the first $k$ vertices in this permutation is a good
approximation to the optimal simplification of a curve using (roughly)
$k$ vertices.

\begin{defn}
    Let $\cZ$ be a polygonal curve with vertices
    $\VertexSet{\cZ}$. Let $\VtxSet \subseteq \VertexSet{\cZ}$ be a
    subset of the vertices that contains the endpoints of $\cZ$.  We
    call the polygonal curve obtained by connecting the vertices in
    $\VtxSet$ in their order along $\cZ$ a \emphi{spine curve} of
    $\cZ$ and we denote it with $\SpCrv{\cZ}{\VtxSet}$. Additionally
    we may call $\SpCrv{\cZ}{\VtxSet}$ a \emphi{k-spine curve} of
    $\cZ$ if it has $k$ vertices.
\end{defn}

\begin{defn}
    Given a polygonal curve $\cZ$ and a permutation $\Phi=
    \permut{\vtxB_1,\dots,\vtxB_n}$ of the vertices of $\cZ$, where
    $\vtxB_{1}$ and $\vtxB_2$ are the endpoints of $\cZ$, let
    $\VtxSet_i$ be the subset $\{\vtxB_j ~|~ 1 \leq j \leq i \}$ of
    the vertices for any $2\leq i\leq n$.  We call $\Phi$ a
    \emphi{universal {vertex} permutation} if it holds that
    \begin{compactenum}[\quad(i)]
        \item $\constA \distFr{\SpCrv{\cZ}{\VtxSet_i}}{\cZ} \geq
        \distFr{\SpCrv{\cZ}{\VtxSet_{i+1}}}{\cZ}$, for any $2 \leq i <
        n$, and
        \item $\distFr{\SpCrv{\cZ}{\VtxSet_{i}}}{\cZ} \leq \constB
        \distFr{\cY}{\cZ}$, for any polygonal curve $\cY$ with
        $\ceil{i/\constC}$ vertices,
    \end{compactenum}
    where $\constA$, $\constB$ and $\constC$ are constants larger than
    one which do not depend on $n$.
\end{defn}

\subsubsection{Construction of the permutation}

We compute a universal vertex permutation of $\cZ$.  The idea of the
algorithm is to estimate for each vertex the error introduced by
removing it, and repeatedly remove the vertex with the lowest error in
a greedy fashion.

Specifically, for each vertex $\vtx$ that is not an endpoint of $\cZ$,
let $\prevX{\vtx}$ be its predecessor on $\cZ$ and let $\nextX{\vtx}$
be its \postdecessor on $\cZ$. Let $\weightCX{\vtx}$ be a
\constN-approximation of
$\distFr{\SC{\cZ}{\prevX{\vtx}}{\nextX{\vtx}}}
{\segX{\prevX{\vtx}}{\nextX{\vtx}}}$.  Insert the vertex $\vtx$ with
weight $\weightCX{\vtx}$ into a min-heap $\heap$.  Repeat this for all
the internal vertices of $\cZ$.

At each step, the algorithm extracts the vertex $\vtx$ from the heap
$\heap$ having minimum weight. Let $\vtxA =
\prevX{\vtx}\pth{\SpCrv{\cZ}{\heap}}$ and
$\vtxC=\nextX{\vtx}\pth{\SpCrv{\cZ}{\heap}}$ be the predecessor and
\postdecessor of $\vtx$ in the curve $\SpCrv{\cZ}{\heap}$,
respectively, where $\heap$ denotes the set of vertices currently in
the heap with the addition of the two endpoints of $\cZ$.

The algorithm removes $\vtx$ from $\heap$ and updates the weight of
$\vtxA$ and $\vtxC$ in $\heap$ (if the vertex being updated is an
endpoint of $\cZ$ its weight is $+ \infty$ and its weight is not being
updated). Updating the weight of a vertex $\vtxA$ is done by computing
its predecessor and \postdecessor vertices in the current curve
$\SpCrv{\cZ}{\heap}$ (i.e., $\prevX{\vtxA} =
\prevX{\vtxA}\pth{\SpCrv{\cZ}{\heap}}$ and $\nextX{\vtxA} =
\nextX{\vtxA}\pth{\SpCrv{\cZ}{\heap}}$) and approximating the \Frechet
distance of the subcurve of (the \emph{original} curve) $\cZ$ between
these two vertices and the segment $\segX{\prevX{\vtxA}}{
   \nextX{\vtxA}}$. Formally, the updated weight of $\vtxA$ is
$\weightCX{\vtxA}$, which is a \constN-approximation to
\begin{align*}
    \distFr{\SC{\cZ}{\prevX{\vtxA}}{\nextX{\vtxA}}}
    {\segX{\prevX{\vtxA}}{\nextX{\vtxA}}}.
\end{align*}
The updated weight of $\vtxC$ is computed in a similar fashion.

The algorithm stops when $\heap$ is empty. Reversing the order of the
handled vertices, results in a permutation
$\permut{\vtx_1,\dots,\vtxB_n}$, where $\vtx_1$ and $\vtxB_2$ are the
two endpoints of $\cZ$.%



\paragraph{Implementation details.}

Using \thmref{segment:queries:f:r}, the initialization takes $O\pth{ n
   \log^2 n}$ time overall, using $\eps=\constEps$.  In addition, the
algorithm keeps the current set of vertices of $\heap$ in a doubly
linked list in the order in which the vertices appear along the
original curve $\cZ$. In each iteration, the algorithm performs one
extract-min from the min-heap $\heap$, and calls the data structure of
\thmref{segment:queries:f:r} twice to update the weight of the two
neighbors of the extracted vertex. As such, overall, the running time
of this algorithm is $O\pth{ n \log^2 n }$.

\paragraph{Extracting a spine curve quickly.}
Given a parameter $K$, we would like to be able to quickly compute the
spine curve $\SpCrv{\cZ}{V_K}$, where $V_K = \brc{ \vtx_1, \ldots,
   \vtx_K}$. To this end, we compute for $i=1, \ldots \floor{\log_2
   n}$, the spine curve $\SpCrv{\cZ}{V_{2^i}}$ by removing the unused
vertices from $\SpCrv{\cZ}{V_{2^{i+1}}}$.  Naturally, we also store
the original curve $\cZ$. Clearly, one can store these $O( \log n)$
curves in $O(n)$ space, and compute them in linear time. Now, given
$K$, one can find the first curve in this collection that has more
vertices than $K$, copy it, and remove from it all the unused
vertices. Clearly, this query can be answered in $O(K)$ time.

\subsubsection{Analysis}

\begin{lemma}%
    \lemlab{spine:curve:f:r}%
    Let $\permut{\vtx_1, \ldots, \vtx_n}$ be the permutation computed
    above. Consider a value $k$, and let $V_k = \brc{\vtxA_1, \ldots,
       \vtxA_k}$ be an ordering of the vertices of ${\vtx_1, \ldots,
       \vtx_k}$ by their order along $\cZ$. Then, it holds that
    $\distFr{\cZ}{\SpCrv{\cZ}{V_k}} \leq \max_{1 \leq i \leq k-1}
    \distFr{\SC{\cZ}{\vtxA_i}{\vtxA_{i+1}}}
    {\segX{\vtxA_i}{\vtxA_{i+1}}}$.
\end{lemma}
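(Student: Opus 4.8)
The plan is to produce an explicit orientation-preserving reparametrization between $\cZ$ and the spine curve $\SpCrv{\cZ}{V_k}$ by stitching together the optimal matchings of the corresponding pieces, in the same spirit as the chaining argument at the end of the proof of \lemref{easy:shortcut}. First I would note that the universal vertex permutation begins with the two endpoints of $\cZ$ (by construction $\vtx_1$ and $\vtx_2$ are those endpoints), so both endpoints of $\cZ$ lie in $V_k$; hence, when we order $V_k$ along $\cZ$, the vertex $\vtxA_1$ is the first vertex of $\cZ$ and $\vtxA_k$ is the last. Consequently $\cZ$ decomposes as $\SC{\cZ}{\vtxA_1}{\vtxA_2} \concatOp \SC{\cZ}{\vtxA_2}{\vtxA_3} \concatOp \cdots \concatOp \SC{\cZ}{\vtxA_{k-1}}{\vtxA_k}$, while $\SpCrv{\cZ}{V_k}$ is, by definition, the concatenation $\segX{\vtxA_1}{\vtxA_2} \concatOp \segX{\vtxA_2}{\vtxA_3} \concatOp \cdots \concatOp \segX{\vtxA_{k-1}}{\vtxA_k}$ over the same index range.

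Next, for each $i \in \{1, \ldots, k-1\}$ I would choose an orientation-preserving reparametrization $f_i$ realizing $\distFr{\SC{\cZ}{\vtxA_i}{\vtxA_{i+1}}}{\segX{\vtxA_i}{\vtxA_{i+1}}}$ (or, to sidestep attainment issues, coming within an additive $\eps$ of it). Any matching of $\SC{\cZ}{\vtxA_i}{\vtxA_{i+1}}$ with $\segX{\vtxA_i}{\vtxA_{i+1}}$ must send the start point $\vtxA_i$ of the subcurve to the start point $\vtxA_i$ of the segment, and the end point $\vtxA_{i+1}$ to $\vtxA_{i+1}$; so the $f_i$ agree at the break points and concatenate into a single orientation-preserving reparametrization $f = f_1 \concatOp \cdots \concatOp f_{k-1}$ between $\cZ$ and $\SpCrv{\cZ}{V_k}$. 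Its width is $\widthX{f}{\cZ, \SpCrv{\cZ}{V_k}} = \max_{1 \le i \le k-1} \widthX{f_i}{\SC{\cZ}{\vtxA_i}{\vtxA_{i+1}}, \segX{\vtxA_i}{\vtxA_{i+1}}}$ (the maximum of the elevation over the whole reparametrization is the maximum over the pieces, and this is invariant under rescaling the domains), and since $\distFr{\cZ}{\SpCrv{\cZ}{V_k}} \le \widthX{f}{\cZ, \SpCrv{\cZ}{V_k}}$, taking the infimum over each $f_i$ (equivalently, letting $\eps \to 0$) yields the claimed inequality.

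I do not expect a genuine obstacle here: this is a routine concatenation of \Frechet matchings, parallel to the last paragraph of the proof of \lemref{easy:shortcut}. The only steps requiring a word of care are (a) verifying that $\vtxA_1$ and $\vtxA_k$ are the endpoints of $\cZ$, so that the two decompositions above are taken over the same partition and the concatenation is legitimate, and (b) handling the infimum in the definition of the \Frechet distance cleanly — either by invoking the fact that for polygonal curves an optimal reparametrization exists, or by working throughout with $\eps$-optimal $f_i$ and letting $\eps \to 0$ at the very end.
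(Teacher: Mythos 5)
Your proposal is correct and follows essentially the same route as the paper: the paper's proof simply combines the matchings realizing $\distFr{\SC{\cZ}{\vtxA_i}{\vtxA_{i+1}}}{\segX{\vtxA_i}{\vtxA_{i+1}}}$ into a single matching of $\cZ$ with $\SpCrv{\cZ}{V_k}$ whose width is the maximum of the pieces. Your extra care about the endpoints lying in $V_k$ and about the infimum in the definition of the \Frechet distance is sound but goes beyond what the paper spells out.
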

\begin{proof}
    This is immediate as one can combine for $i=1, \ldots, k-1$, the
    \matching{}s realizing $\distFr{\SC{\cZ}{\vtxA_i}{\vtxA_{i+1}}}
    {\segX{\vtx_i}{\vtxA_{i+1}}}$ to obtain \matching{}s of
    $\SpCrv{\cZ}{V_k}$ and $\cZ$, and such that the \Frechet distance
    is the maximum used in any of these \matching{}s.
\end{proof}

Let $\vtx_1,\dots,\vtx_n$ be the permutation of the vertices of $\cZ$
as computed in the preprocessing stage, and let $\weight{i}$ denote
weight of vertex $\vtx_i$ at the time of its extraction.  We have the
following three lemmas to prove that the computed permutation is
universal.

\begin{lemma}%
    \lemlab{universal:0}%
    For any $1 \leq i \leq n$, it holds that $\max_{i \leq j \leq n}
    \weight{j} \leq 4 \weight{i}$.
\end{lemma}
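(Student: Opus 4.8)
The plan is to read the claim off the greedy structure of the extraction process together with the near-monotonicity of the spine distance under subcurves (\lemref{monotone:subcurve}). The intuition: when a vertex survives an extraction, its two heap-neighbors are only ever redirected \emph{outward} along $\cZ$, so the subcurve that witnesses a vertex's current weight only grows over time; and enlarging a subcurve can decrease its distance to its spine by at most a factor of two. Hence a vertex cannot become much cheaper between two moments in time, in particular between some early extraction step and its own (later) extraction step, and this is exactly what the lemma asserts.

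First I would fix notation. Since only internal vertices are ever extracted, relabel so that $\vtx_3,\dots,\vtx_n$ are extracted from $\heap$ in the reverse order $\vtx_n,\vtx_{n-1},\dots,\vtx_3$, and set $\weight{1}=\weight{2}=+\infty$, so that the claim is trivial when $\vtx_i$ is an endpoint and we may assume $\vtx_i$ is internal. For an internal vertex $w$ that is currently in $\heap$, write $\sigma_w = \SC{\cZ}{\prevX{w}}{\nextX{w}}$ for the subcurve cut out by its current neighbors in $\SpCrv{\cZ}{\heap}$; note that $\segX{\prevX{w}}{\nextX{w}} = \spineX{\sigma_w}$, so by construction $w$'s current weight is a $\const$-approximation of $\distFr{\sigma_w}{\spineX{\sigma_w}}$. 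The first step is the easy structural observation that removing a vertex redirects its two neighbors' $\prevX{\cdot}/\nextX{\cdot}$ pointers strictly outward along $\cZ$; consequently, for a fixed surviving vertex $w$, these pointers move only outward over the run of the algorithm, so $\sigma_w$ at an earlier time is a subcurve of $\sigma_w$ at any later time.

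Now fix $i$ and $j$ with $i < j \le n$, so that $\vtx_j$ is extracted strictly before $\vtx_i$. Consider the moment just before $\vtx_j$ is extracted. At that moment $\vtx_i$ is still in $\heap$, so by the extract-min invariant $\weight{j}$ is at most $\vtx_i$'s current weight, which in turn is at most $\const \cdot \distFr{\sigma'}{\spineX{\sigma'}}$, where $\sigma'$ is $\vtx_i$'s neighborhood subcurve at that moment. By the monotonicity step, $\sigma'$ is a subcurve of $\sigma_i$, the neighborhood subcurve of $\vtx_i$ at the (later) moment it is itself extracted; then \lemref{monotone:subcurve} gives $\distFr{\sigma'}{\spineX{\sigma'}} \le 2\,\distFr{\sigma_i}{\spineX{\sigma_i}} \le 2\const\,\weight{i}$, using that $\weight{i}$ is a $\const$-approximation of $\distFr{\sigma_i}{\spineX{\sigma_i}}$. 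Chaining these inequalities yields $\weight{j} \le 2\const^2\,\weight{i} \le 4\,\weight{i}$ since $\const = 11/10$. Together with the trivial case $j = i$, this gives $\max_{i \le j \le n}\weight{j} \le 4\,\weight{i}$.

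The one point that needs care is precisely the outward-monotonicity of the neighbor pointers, and the realization that it is all one needs: the possibly many intermediate weight updates of $\vtx_i$ that occur between the extraction of $\vtx_j$ and that of $\vtx_i$ are irrelevant, because the two subcurves $\sigma'$ and $\sigma_i$ being compared are nested at the two \emph{endpoints} of the relevant time interval, so a single application of \lemref{monotone:subcurve} suffices and the factor-two losses do not accumulate. Everything else is routine: unwinding the definition of an approximate weight and invoking the extract-min property of the heap.
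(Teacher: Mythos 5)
Your proof is correct and follows essentially the same route as the paper's: the extract-min invariant bounds $\weight{j}$ by the stored weight of $\vtx_i$ at the earlier time, the outward drift of heap-neighbors makes the two relevant neighborhood subcurves nested, and a spine-monotonicity lemma plus the $\constN$-approximation factors gives the constant. The only (harmless) differences are that you invoke \lemref{monotone:subcurve} (factor $2$) where the paper uses the more general \lemref{monotone:shortcut:base} (factor $3$), yielding $2\const^2 \leq 4$ instead of $3\const^2 \leq 4$, and that you compare the two nested subcurves directly rather than via the paper's intermediate index $k$.
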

\begin{proof}
    We show that the weight of a vertex at the time of extraction is
    at most $4$ times smaller than the final weight of any of the
    vertices extracted before this vertex.  Let $\vtx_i$ be a vertex
    and let $\weightIdx{j}{\vtx_i}$ be the weight of this vertex at
    the time of extraction of some other vertex $\vtx_j$, with $j >
    i$. Clearly, $\weight{j} = \weightIdx{j}{\vtx_j} \leq
    \weightIdx{j}{\vtx_i}$, since the algorithm extracted $\vtx_j$
    with the minimum weight at the time.  If $\weight{i} =
    \weightIdx{i}{\vtx_i} \geq \weightIdx{j}{\vtx_i}$ then the claim
    holds.
    
    Otherwise, if $\weight{i} = \weightIdx{i}{\vtx_i} <
    \weightIdx{j}{\vtx_i}$, then there must be a vertex which caused
    the weight of $\vtx_i$ to be updated. Let $k$ be the minimum index
    such that $j \geq k > i$ and
    $\weightIdx{j}{\vtx_i}=\weightIdx{k}{\vtx_i}$.  We have that
    $\weight{i}$ is a $\const$-approximation of the \Frechet distance
    $\distFr{\segX{\vtxA^i}{\vtxC^i}}{\SC{\cZ}{\vtxA^i}{\vtxC^i}}$ for
    two vertices $\vtxA^i$ and $\vtxC^i$.  Similarly, we have that
    $\weightIdx{k}{\vtx_i}$ is a $\const$-approximation of the
    \Frechet distance
    $\distFr{\segX{\vtxA^k}{\vtxC^k}}{\SC{\cZ}{\vtxA^k}{\vtxC^k}}$ for
    two vertices $\vtxA^k$ and $\vtxC^k$.  Observe that since the
    extraction of $\vtx_k$ caused the weight of $\vtx_i$ to be
    updated, it must be that $\SC{\cZ}{\vtxA^k}{\vtxC^k}$ is a
    subcurve of $\SC{\cZ}{\vtxA^i}{\vtxC^i}$.  Hence, by
    \lemref{monotone:shortcut:base}, we have that
    \begin{align*}
        \constR \cdot \weightIdx{k}{\vtx_i}%
        \leq%
        \distFr{\segX{\vtxA^k}{\vtxC^k}}{\SC{\cZ}{\vtxA^k}{\vtxC^k}}
        \leq%
        3 \distFr{\segX{\vtxA^i}{\vtxC^i}}{\SC{\cZ}{\vtxA^i}{\vtxC^i}}
        \leq%
        3 \cdot \const \cdot \weight{i}.
    \end{align*}
    Now it follows that $\weight{j} \leq \weightIdx{j}{\vtx_i} =
    \weightIdx{k}{\vtx_i} \leq 4 \weight{i}$, which proves the claim.
\end{proof}

\begin{lemma}%
    \lemlab{universal:i}%
    For any $3 \leq i \leq n$ it holds that 
    $\distFr{\SpCrv{\cZ}{\VtxSet_i}}{\cZ} \leq 5 \weight{i+1}$.
\end{lemma}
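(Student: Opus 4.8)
The plan is to reduce the claim to a per-edge estimate on the spine curve $\SpCrv{\cZ}{\VtxSet_i}$ and then invoke \lemref{spine:curve:f:r}. First I would dispose of the trivial case $i=n$, where $\SpCrv{\cZ}{\VtxSet_n}=\cZ$ and the left-hand side is $0$, so from now on assume $3\le i<n$. Write $\vtxA_1,\dots,\vtxA_i$ for the vertices $\vtx_1,\dots,\vtx_i$ reordered along $\cZ$, so that the edges of $\SpCrv{\cZ}{\VtxSet_i}$ are exactly the segments $\segX{\vtxA_\ell}{\vtxA_{\ell+1}}$. By \lemref{spine:curve:f:r} it then suffices to show that $\distFr{\SC{\cZ}{\vtxA_\ell}{\vtxA_{\ell+1}}}{\segX{\vtxA_\ell}{\vtxA_{\ell+1}}}\le 5\weight{i+1}$ for every $\ell$.

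Fix $\ell$. If $\vtxA_\ell$ and $\vtxA_{\ell+1}$ are consecutive vertices of $\cZ$ this \Frechet distance is $0$ and there is nothing to do. Otherwise, among the vertices of $\cZ$ lying strictly between $\vtxA_\ell$ and $\vtxA_{\ell+1}$, let $\vtx_m$ be the one that the greedy algorithm removes \emph{last}. The step I expect to require the most care is the bookkeeping showing that $m\ge i+1$ and that, at the moment $\vtx_m$ is extracted, its two neighbours in the current spine curve $\SpCrv{\cZ}{\heap}$ are exactly $\vtxA_\ell$ and $\vtxA_{\ell+1}$. Indeed, since $\segX{\vtxA_\ell}{\vtxA_{\ell+1}}$ is an edge of $\SpCrv{\cZ}{\VtxSet_i}$, every vertex of $\cZ$ between its endpoints has already been removed once the state $\VtxSet_i$ is reached, hence belongs to $\{\vtx_{i+1},\dots,\vtx_n\}$; in particular $m\ge i+1$, and both $\vtxA_\ell$ and $\vtxA_{\ell+1}$ have permutation index at most $i<m$, so these two vertices are still present (or are endpoints) when $\vtx_m$ is extracted while all other between-vertices are already gone. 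Thus $\prevX{\vtx_m}$ and $\nextX{\vtx_m}$ at extraction time are $\vtxA_\ell$ and $\vtxA_{\ell+1}$ in some order; since the weight of a vertex is recomputed whenever one of its neighbours is removed, $\weight{m}$ is by construction a \constN-approximation of $\distFr{\SC{\cZ}{\vtxA_\ell}{\vtxA_{\ell+1}}}{\segX{\vtxA_\ell}{\vtxA_{\ell+1}}}$, so this distance is at most $\const\,\weight{m}$.

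Finally I would combine this with \lemref{universal:0}: applied at index $i+1$ it gives $\weight{m}\le 4\weight{i+1}$ because $m\ge i+1$. Hence $\distFr{\SC{\cZ}{\vtxA_\ell}{\vtxA_{\ell+1}}}{\segX{\vtxA_\ell}{\vtxA_{\ell+1}}}\le 4\cdot\const\cdot\weight{i+1}=\tfrac{44}{10}\weight{i+1}\le 5\weight{i+1}$, a bound that also covers the trivial case. Plugging it into \lemref{spine:curve:f:r} yields $\distFr{\SpCrv{\cZ}{\VtxSet_i}}{\cZ}\le 5\weight{i+1}$, as claimed. Apart from the heap-contents argument in the second paragraph, everything reduces to the definition of the weights, \lemref{universal:0}, and \lemref{spine:curve:f:r}, so no further genuinely new ingredient is needed.
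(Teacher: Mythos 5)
Your proof is correct and follows essentially the same route as the paper's: decompose the error edge-by-edge along the spine curve (the content of \lemref{spine:curve:f:r}), observe that for each non-trivial edge the last-removed in-between vertex has weight approximating that edge's \Frechet distance to the corresponding subcurve, and then bound that weight by $4\weight{i+1}$ via \lemref{universal:0}. The only difference is that you spell out the heap bookkeeping (that the witness vertex's neighbours at extraction time are exactly the edge's endpoints) which the paper leaves implicit.
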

\begin{proof}
    Let $\vtxA_1,\dots,\vtxA_{i}$ be the vertices in $\VtxSet_i$ in
    the order in which they appear on $\SpCrv{\cZ}{\VtxSet_i}$.
    Consider the mapping between $\cZ$ and this spine curve, which
    associates every edge $\segX{\vtxA_j}{\vtxA_{j+1}}$ of
    $\SpCrv{\cZ}{\VtxSet_i}$ with the subcurve
    $\SC{\cZ}{\vtxA_j}{\vtxA_{j+1}}$. Clearly, it holds that
    \begin{align*}
        \distFr{\cZ}{\SpCrv{\cZ}{\VtxSet_i}} \leq %
        \max_{1 \leq j < i} \distFr{\SC{\cZ}{\vtxA_j}{\vtxA_{j+1}}}
        {\segX{\vtxA_j}{\vtxA_{j+1}}} \leq%
        \const \max_{i < j \leq n} \weight{j}.
    \end{align*}
    Indeed, if $\vtxA_{j+1}$ is the \postdecessor of $\vtxA_j$ on
    $\cZ$, then
    $\distFr{\SC{\cZ}{\vtxA_j}{\vtxA_{j+1}}}{\segX{\vtxA_j}{
          \vtxA_{j+1}}} = 0$, otherwise, there must be a vertex which
    appears on $\cZ$ in between $\vtxA_j$ and $\vtxA_{j+1}$, which is
    contained in $\VtxSet_n \setminus \VtxSet_i$ and the weight of
    this vertex is the approximation of this distance at the time of
    extraction.  Now it follows by \lemref{universal:0} that
    $\distFr{\cZ}{\SpCrv{\cZ}{\VtxSet_i}}\leq 5 \weight{i+1}$.
\end{proof}

\begin{lemma}%
    \lemlab{universal:ii}%
    For any $2 \leq k \leq n/2-1$, let $\cYk$ be the curve with the
    smallest \Frechet distance from $\cZ$ with $k$ vertices (note,
    that $\cYk$ is not restricted to have its vertices lying on
    $\cZ$). We have that $\distFr{\cZ}{\cYk} \geq (5/11) \weight{K+1}
    $, where $K = \SNumVertices{k}$.
\end{lemma}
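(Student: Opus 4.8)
The plan is to tie $\weight{K+1}$ to the state of the greedy construction at the moment $\vtx_{K+1}$ is extracted from the heap. At that step the vertices $\vtx_{n},\dots,\vtx_{K+2}$ have already been removed (here we use $2\le k\le n/2-1$, so that $3\le K+1=2k\le n-2$ and these indices are meaningful), hence the current curve is exactly $\SpCrv{\cZ}{\VtxSet_{K+1}}$, which has $K+1=2k$ vertices and therefore $2k-2$ internal vertices, all sitting in the heap. Since the algorithm always extracts the minimum-weight vertex, $\weight{K+1}$ is the smallest current weight among these $2k-2$ internal vertices. So it suffices to exhibit one internal vertex $\vtxA$ of $\SpCrv{\cZ}{\VtxSet_{K+1}}$ whose current weight is at most $2\const\cdot\distFr{\cZ}{\cYk}=\frac{11}{5}\distFr{\cZ}{\cYk}$; rearranging $\weight{K+1}\le\frac{11}{5}\distFr{\cZ}{\cYk}$ then yields the claim.

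Set $\delta=\distFr{\cZ}{\cYk}$, fix a \matching{} of $\cZ$ with $\cYk$ of width $\delta$, let $w_1,\dots,w_k$ be the vertices of $\cYk$, and let $p_1=\cZ(0),p_2,\dots,p_{k-1},p_k=\cZ(1)$ be the points of $\cZ$ matched to them. Restricting the \matching{}, each piece $C_\ell=\SC{\cZ}{p_\ell}{p_{\ell+1}}$ is within \Frechet distance $\delta$ of the segment $w_\ell w_{\ell+1}$, for $1\le\ell\le k-1$. Write the internal vertices of $\SpCrv{\cZ}{\VtxSet_{K+1}}$ as $\vtxA_2,\dots,\vtxA_{2k-1}$ in order along $\cZ$, so that the neighbors of $\vtxA_j$ in this curve are $\vtxA_{j-1}$ and $\vtxA_{j+1}$. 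The key counting step is that some internal vertex $\vtxA_j$ satisfies: no breakpoint $p_\ell$ with $2\le\ell\le k-1$ lies in the interior of the subcurve $\SC{\cZ}{\vtxA_{j-1}}{\vtxA_{j+1}}$. Indeed, each such $p_\ell$ either lies in the relative interior of a single spine edge of $\SpCrv{\cZ}{\VtxSet_{K+1}}$, in which case it is interior to $\SC{\cZ}{\vtxA_{j-1}}{\vtxA_{j+1}}$ for at most two values of $j$, or it coincides with a vertex of the spine curve (or with an endpoint of $\cZ$), in which case it is interior for at most one value of $j$; so the $k-2$ breakpoints spoil at most $2(k-2)$ of the $2k-2$ internal vertices, leaving at least two admissible choices of $\vtxA_j$. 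For such a $\vtxA_j$ the subcurve $\SC{\cZ}{\vtxA_{j-1}}{\vtxA_{j+1}}$ lies inside a single piece $C_\ell$, so the \matching{} sends its two endpoints to a subsegment $a'c'$ of $w_\ell w_{\ell+1}$ with $\distFr{\SC{\cZ}{\vtxA_{j-1}}{\vtxA_{j+1}}}{a'c'}\le\delta$; since $\segX{\vtxA_{j-1}}{\vtxA_{j+1}}$ is the spine of $\SC{\cZ}{\vtxA_{j-1}}{\vtxA_{j+1}}$, \lemref{f:r:basic}(ii) gives
\[
\distFr{\segX{\vtxA_{j-1}}{\vtxA_{j+1}}}{\SC{\cZ}{\vtxA_{j-1}}{\vtxA_{j+1}}}\;\le\;2\,\distFr{a'c'}{\SC{\cZ}{\vtxA_{j-1}}{\vtxA_{j+1}}}\;\le\;2\delta .
\]

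To conclude, recall that the greedy construction maintains the invariant that the weight stored for any heap vertex is a \const-approximation of the \Frechet distance between the subcurve of (the original curve) $\cZ$ spanned by the vertex's two neighbors in the \emph{current} curve and the corresponding chord. At the step in question the current curve is $\SpCrv{\cZ}{\VtxSet_{K+1}}$ and the neighbors of $\vtxA_j$ there are $\vtxA_{j-1}$ and $\vtxA_{j+1}$, so the current weight of $\vtxA_j$ is at most $\const\cdot 2\delta=\frac{11}{5}\delta$. Hence $\weight{K+1}\le\frac{11}{5}\delta$, equivalently $\distFr{\cZ}{\cYk}\ge\frac{5}{11}\weight{K+1}$, which is the stated bound. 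I expect the main obstacle to be the counting step — in particular, correctly bounding the number of internal vertices spoiled by a single breakpoint and handling the degenerate cases where a breakpoint coincides with a spine vertex or an endpoint of $\cZ$; the rest is a direct application of \lemref{f:r:basic} together with the invariants of the heap-based construction.
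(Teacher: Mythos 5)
Your proof is correct and follows essentially the same route as the paper's: a pigeonhole argument over the $k-1$ pieces into which the optimal matching with $\cYk$ cuts $\cZ$ produces a spine vertex whose two-neighbor window lies inside a single piece, after which \lemref{f:r:basic}\,(ii), the $\constN$-approximation of the weights, and the min-heap extraction rule give $\weight{K+1} \leq (11/5)\,\distFr{\cZ}{\cYk}$. The only differences are cosmetic — you count "spoiled" internal vertices of $\SpCrv{\cZ}{\VtxSet_{K+1}}$ instead of finding three consecutive vertices of $\SpCrv{\cZ}{\VtxSet_{K}}$ in one piece, which is a slightly more explicit rendering of the same pigeonhole and aligns cleanly with the heap invariant at the moment $\vtx_{K+1}$ is extracted.
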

\begin{proof}
    Let $f:\cYk \rightarrow Z$ be the mapping realizing the \Frechet
    distance between $\cY_k^*$ and $\cZ$. Let $V_i = \permut{ \vtx_1,
       \ldots, \vtx_i}$, for $i=1, \ldots, n$.
    
    \parpic[r]{\includegraphics[scale=1]{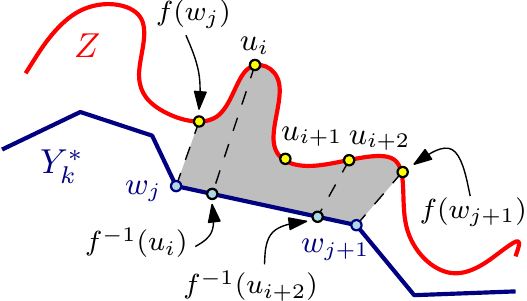}}
    
    Since $\cYk$ has only $k$ vertices, it breaks $\cZ$ into $k-1$
    subcurves. Since, $K \geq 2(k-1) + 1$, there must be three
    consecutive vertices $\vtxA_i, \vtxA_{i+1}, \vtxA_{i+2}$ on
    $\SpCrv{\cZ}{\VtxSet_{K}}$ and two vertices $\vtxC_j,\vtxC_{j+1}$
    of $\cYk$, such that the vertices $\vtxA_i, \vtxA_{i+1},
    \vtxA_{i+2}$ appear on the subcurve $\cZ' =
    \SC{\cZ}{f(\vtxC_j)}{f(\vtxC_{j+1})}$, see the figure on the
    right.
    
    Now, $\segX{f^{-1}(\vtxA_i)}{f^{-1}(\vtxA_{i+2})} \subseteq
    \vtxC_j\vtxC_{j+1}$ and by \lemref{f:r:basic}, we have
    \begin{align*}
        \distFr{\cZ}{\cYk}%
        &\geq %
	\distFr{ \MakeBig \SC{\cZ}{\MakeSBig
              f(\vtxC_j)}{f(\vtxC_{j+1})} }{\vtxC_j \vtxC_{j+1}}%
        \geq %
        \distFr{\SC{\cZ}{\vtxA_i}{\vtxA_{i+2}}} {f^{-1}(\vtxA_i)
           f^{-1}(\vtxA_{i+2})}%
        \\ &%
        \geq %
        \distFr{\SC{\cZ}{\vtxA_i}{\vtxA_{i+2}}} {f^{-1}(\vtxA_i)
           f^{-1}(\vtxA_{i+2})}%
        \geq %
        \frac{1}{2}%
        \distFr{\MakeBig\SC{\cZ}{\vtxA_i}{\vtxA_{i+2}}}%
        {\spineX{\SC{\cZ}{\vtxA_i}{\vtxA_{i+2}}}}%
        \\ &%
        \geq %
        \frac{1}{2} \cdot \frac{10}{11} \weightIdx{K+1}{ \vtxA_{i+1}}%
        \geq %
        \frac{5}{11} \weightIdx{K+1}{\vtx_{K+1}}%
        = %
        \frac{5}{11} \weight{K+1},
    \end{align*}
    as the simplification algorithm removed the minimum weight vertex
    at time $K+1$ (i.e., $\vtx_{K+1}$).
\end{proof}

\subsubsection{The result}

\begin{theorem}
    \thmlab{f:r:simpl:result}%
    Given a polygonal curve $\cZ$ with $n$ edges, we can preprocess it
    using $O(n)$ space and $O\pth{ n\log^2 n }$ time, such that, given
    a parameter $k \in \Na$, we can output in $O(k)$ time a
    $(\SNumVertices{k})$-spine curve ${\cZ}'$ of $\cZ$ and a value
    $\delta$, such that \smallskip
    \begin{compactenum}[\quad(i)]
        \item $\delta/11 \leq \distFr{\cYk}{\cZ}$,
        and \smallskip
        \item $\distFr{{\cZ}'}{\cZ} \leq \delta$,
    \end{compactenum}
    \smallskip%
    where $\cYk$ is the polygonal curve with $k$ vertices with minimal
    \Frechet distance from~$\cZ$.  (For $k \geq n/2$ we output $\cZ$
    and $\delta=0$).
\end{theorem}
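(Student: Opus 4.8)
The plan is to obtain the statement essentially for free from the construction of the universal vertex permutation together with the three structural lemmas \lemref{universal:0}, \lemref{universal:i}, and \lemref{universal:ii}. Given $k$, set $K = \SNumVertices{k}$ and output the spine curve $\cZ' = \SpCrv{\cZ}{\VtxSet_K}$ spanned by the first $K$ vertices $\vtx_1,\dots,\vtx_K$ of the computed permutation, together with the value $\delta = 5\,\weight{K+1}$ (the weight of $\vtx_{K+1}$ recorded at the moment it was extracted by the greedy simplification). To make this reportable in $O(k)$ time, I would store during preprocessing, in addition to the $O(\log n)$ nested spine curves $\SpCrv{\cZ}{V_{2^i}}$ and the doubly linked list already used, also the array of extraction weights $\weight{3},\dots,\weight{n}$ and the rank (position in the permutation) of every vertex. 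Then $\cZ'$ is built in $O(K)=O(k)$ time by taking the smallest stored $\SpCrv{\cZ}{V_{2^i}}$ with more than $K$ vertices and deleting the vertices of rank $>K$, exactly as in the ``extracting a spine curve quickly'' paragraph, and $\delta$ is a single array lookup.

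The two required inequalities are then just the two lemmas. Property~(ii) is \lemref{universal:i} applied with $i=K$ (using $K\ge 3$, which holds for $k\ge 2$): it gives $\distFr{\cZ'}{\cZ}=\distFr{\SpCrv{\cZ}{\VtxSet_K}}{\cZ}\le 5\,\weight{K+1}=\delta$. Property~(i) is \lemref{universal:ii} applied with this $k$: since $\cYk$ has $k$ vertices it partitions $\cZ$ into $k-1$ subcurves, hence some single subcurve of the optimal matching contains three consecutive vertices of the $(2k-1)$-vertex curve $\SpCrv{\cZ}{\VtxSet_K}$, and the lemma yields $\distFr{\cYk}{\cZ}\ge (5/11)\,\weight{K+1}=\delta/11$. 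The choice $\delta = 5\,\weight{K+1}$ is precisely what makes the upper bound of \lemref{universal:i} and the lower bound of \lemref{universal:ii} combine into the clean ratio $11$ asserted in the theorem. In the remaining regime $k\ge n/2$, where $\SNumVertices{k}$ is at least the number of vertices of $\cZ$, I would simply return $\cZ'=\cZ$ and $\delta=0$; then $\distFr{\cZ'}{\cZ}=0=\delta$ and $\delta/11=0\le\distFr{\cYk}{\cZ}$ hold trivially, and the finitely many tiny boundary values of $k$ not literally covered by the lemmas are disposed of the same way.

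For the resources, the bounds are the ones already justified: building the data structure of \thmref{segment:queries:f:r} once (with the fixed parameter $\eps=\constEps$, for which its grid-complexity factor is a constant) takes $O(n\log^2 n)$ time and $O(n)$ space; initializing the min-heap issues $n$ queries to it, and each of the $O(n)$ extract-min operations issues two more queries to reweight the two neighbours found in the linked list, for $O(n\log^2 n)$ total; and the auxiliary nested spine curves, weight array and rank array cost only $O(n)$ additional time and space. Hence $O(n\log^2 n)$ preprocessing time, $O(n)$ space, and $O(k)$ per query, as claimed.

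I do not anticipate a genuine obstacle: all the quantitative work lives inside \lemref{universal:0}, \lemref{universal:i} and \lemref{universal:ii}, which are already in place. The delicate points are purely organizational — setting $\delta$ to the value that makes the two lemma bounds match the stated factor $11$, arranging the stored arrays so that $\delta$ and $\cZ'$ come out in the advertised $O(k)$ time, and carving off the degenerate ranges of $k$ where the lemmas are not directly applicable.
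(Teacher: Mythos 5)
Your proposal is correct and follows exactly the paper's own argument: return $\cZ' = \SpCrv{\cZ}{\VtxSet_K}$ with $\delta = 5\,\weight{K+1}$ for $K = \SNumVertices{k}$, obtain (ii) from \lemref{universal:i} and (i) from \lemref{universal:ii}, and charge the preprocessing to the $O(n)$ heap operations each issuing $O(1)$ queries to the data structure of \thmref{segment:queries:f:r} with $\eps = \constEps$. The extra bookkeeping you describe (storing the extraction weights and ranks, and the degenerate range $k \ge n/2$) is consistent with what the paper leaves implicit.
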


\begin{proof}
    The algorithm computing the universal vertex permutation and its
    associated data structure is described above, for $K
    =\SNumVertices{k}$.  Specifically, it returns the spine curve
    $\cZ' = \SpCrv{\cZ}{V_K}$ as the required approximation, with the
    value $\delta=5\weight{K+1}$. Computing $\cZ'$ takes $O(k)$ time.
    By \lemref{universal:i} and \lemref{universal:ii}, we have that
    $\cZ'$ and $\delta$ satisfy the claim.
    
    Building the data structure takes $O\pth{n \log^2 n }$ time, and
    it uses $O(n)$ space using $\eps=\constEps$. Each query to this
    data structure takes $O\pth{\log n \log\log n}$ time. We perform a
    constant number of these queries to the data structure per
    extraction from the heap, thus getting the claimed preprocessing
    time.
\end{proof}

\subsection{Extending the data structure for %
   \Frechet-distance queries}
\seclab{k:seg:query}

We use the universal vertex permutation described in the previous
section to extend our data structure of \secref{queries:stage:one} to
support queries with more than one segment.

\subsubsection{The data structure}

The input is a polygonal curve $\cZ \in \Re^d$ with $n$ vertices.

\paragraph{Preprocessing.}
Similar to the algorithm of \secref{f:r:query:constant}, build a
balanced binary tree $\Tree$ on $\cZ$. For every internal node $\node$
of $\Tree$ construct the data structure of \thmref{f:r:simpl:result}
for $\cNode{\node}$, denoted by $\DSX{\node}$, and store it at
$\node$.

\paragraph{Answering a query.}
Given any two vertices $u$ and $v$ of $\cZ$, and a query polygonal
curve $\cQ$ with $k$ segments, the task is to approximate
$\distFr{\cQ}{\SC{\cZ}{u}{v}}$. We initially proceed as in
\secref{f:r:query:constant}, computing in $O( \log n)$ time, $m=O(\log
n)$ nodes $\node_1, \ldots, \node_m$ of $\Tree$, such that
$\SC{\cZ}{u}{v} = \cNode{\node_1} \concatOp \cNode{\node_2} \concatOp
\cdots \concatOp \cNode{\node_k}$.  Now, extract a simplified curve
with $K$ vertices from $\DSX{\node_i}$, denoted by
$\kNode{K}{\node_i}$, for $i=1,\ldots, m$, where $K =
\SNumVertices{k}$.  For $i=1,\ldots, m$, let $\delta_i$ denote the
simplification error (as returned by $\DSX{\node_i}$), where
$\distFr{\kNode{K}{\node_i}}{ \cNode{\node_i}} \leq \delta_i$ and
$\delta_i/11$ is a lower bound to the \Frechet distance of \emph{any}
curve with at most $k$ vertices from $\cNode{\node_i}$, for
$i=1,\ldots, m$ (see \thmref{f:r:simpl:result}).

Next, compute the polygonal curve $\cS = \kNode{K}{\node_1} \concatOp
\cdots \concatOp \kNode{K}{\node_m}$, and its \Frechet distance from
$\cQ$; that is, $d = \distFr{\cS}{\cQ}$.  We return
\begin{align}
    \Delta = d + \max_{1\leq i}^{m}\delta_i,%
    \Eqlab{delta}
\end{align}
as the approximate distance between $\cQ$ and $\SC{\cZ}{u}{v}$.

\subsubsection{Analysis}

\paragraph{Query time.}
Extracting the $m=O(\log n)$ relevant nodes takes $O(\log n)$ time.
Querying these $m$ data structures for the simplification of the
respective subcurves, takes $O(k m)$ overall, by
\thmref{f:r:simpl:result}.  Computing the \Frechet distance between
the resulting simplification $\cS$ of $\SC{\cZ}{u}{v}$, which has $O(m
k)$ edges, and $\cQ$ takes $O(k^2m \log(k^2m))$ time
\cite{ag-cfdbt-95}.  Thus the overall time used for answering a query
is bounded by $O\pth{ m + km + k^2m \log(k^2m)} = O\pth{ k^2m
   \log(km)} = O\pth{ k^2 \log n \log (k \log n ) }$.

\paragraph{Preprocessing time and space.}
Building the initial tree $\Tree$ takes $O(n)$ time and it requires
$O(n)$ space.  Let $\lNode{\node}$ denote the number of vertices of
$\cNode{\node}$.  For each node~$\node$, computing the additional
information and storing it requires $O(\lNode{\node})$ space and
$O\pth{ \lNode{\node} \log^2 \lNode{\node} }$ time.  Recall that
$\Tree$ is a balanced binary tree and for the nodes $\node_1, \ldots,
\node_{t}$ contained in one level of the tree it holds that $\sum_{1
   \leq i}^{t} \lNode{\node_i} = n$. Thus, computing and storing the
additional information takes an additional $O\pth{ n \log^3 n }$ time
and $O(n \log n)$ space by \thmref{f:r:simpl:result}.

\paragraph{Quality of approximation.}
By the following lemma the data structure achieves a constant-factor
approximation.

\begin{lemma}%
    \lemlab{k:seg:query:vertices}%
    Given a polygonal curve $\cZ$ and a query curve $\cQ$ with $k$
    segments, the value $\Delta$ (see \Eqref{delta}) returned by the
    above data structure is a constant-factor approximation to
    $\distFr{\cQ}{\SC{\cZ}{u}{v}}$.
\end{lemma}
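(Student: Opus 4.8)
The plan is to squeeze $\distFr{\cQ}{\SC{\cZ}{u}{v}}$ between two bounds both involving $\Delta = d + \max_i \delta_i$, where $d = \distFr{\cS}{\cQ}$ and $\cS = \kNode{K}{\node_1} \concatOp \cdots \concatOp \kNode{K}{\node_m}$, using essentially only the triangle inequality for the \Frechet distance together with the observation that gluing matchings of corresponding pieces of two concatenated curves yields a matching whose width is the maximum of the widths of the piece matchings. The single nontrivial ingredient will be a lower bound on each simplification error $\delta_i$ in terms of $\distFr{\cQ}{\SC{\cZ}{u}{v}}$, which is exactly where \thmref{f:r:simpl:result} is used.

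For the direction that $\Delta$ is not an underestimate, I would first note $\distFr{\cS}{\SC{\cZ}{u}{v}} \le \max_i \delta_i$: since $\SC{\cZ}{u}{v} = \cNode{\node_1} \concatOp \cdots \concatOp \cNode{\node_m}$ and $\distFr{\kNode{K}{\node_i}}{\cNode{\node_i}} \le \delta_i$ for each $i$ by \thmref{f:r:simpl:result}, concatenating the $m$ matchings realizing these distances gives one matching of width $\max_i \delta_i$. Then by the triangle inequality
\[
\distFr{\cQ}{\SC{\cZ}{u}{v}} \le \distFr{\cQ}{\cS} + \distFr{\cS}{\SC{\cZ}{u}{v}} \le d + \max_i \delta_i = \Delta ,
\]
and indeed $\Delta$ is a genuinely realizable \Frechet distance. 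Symmetrically,
\[
d = \distFr{\cQ}{\cS} \le \distFr{\cQ}{\SC{\cZ}{u}{v}} + \distFr{\SC{\cZ}{u}{v}}{\cS} \le \distFr{\cQ}{\SC{\cZ}{u}{v}} + \max_i \delta_i ,
\]
so that $\Delta = d + \max_i \delta_i \le \distFr{\cQ}{\SC{\cZ}{u}{v}} + 2\max_i \delta_i$.

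It then remains to bound $\max_i \delta_i = O\pth{\distFr{\cQ}{\SC{\cZ}{u}{v}}}$. I would fix the optimal matching between $\cQ$ and $\SC{\cZ}{u}{v}$ and let $\cQ_i$ be the subcurve of $\cQ$ that it matches to $\cNode{\node_i}$, so $\distFr{\cQ_i}{\cNode{\node_i}} \le \distFr{\cQ}{\SC{\cZ}{u}{v}}$ for every $i$. Each $\cQ_i$ is a contiguous piece of $\cQ$ and hence a polygonal curve with at most $k+1$ vertices (its two endpoints may lie in the interior of edges of $\cQ$, which has $k$ segments). Now \thmref{f:r:simpl:result}, with which $\DSX{\node_i}$ was built, guarantees that $\delta_i/11$ is a lower bound on the \Frechet distance from $\cNode{\node_i}$ of \emph{every} polygonal curve with that few vertices; applying it to $\cQ_i$ gives $\delta_i/11 \le \distFr{\cQ_i}{\cNode{\node_i}} \le \distFr{\cQ}{\SC{\cZ}{u}{v}}$, i.e.\ $\delta_i \le 11\,\distFr{\cQ}{\SC{\cZ}{u}{v}}$. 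Combined with the previous paragraph this yields $\Delta \le 23\,\distFr{\cQ}{\SC{\cZ}{u}{v}}$, which together with $\distFr{\cQ}{\SC{\cZ}{u}{v}} \le \Delta$ gives the claimed constant-factor approximation.

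The step I expect to require the most care is the last one: the bookkeeping of how many vertices $\cQ_i$ has, and ensuring this count is consistent with the parameter $K = \SNumVertices{k}$ with which the per-node data structures are queried (so that the lower-bound clause of \thmref{f:r:simpl:result} really applies to $\cQ_i$). Everything else reduces to repeated use of the triangle inequality together with the fact that the \Frechet distance of a concatenation is governed by its worst piece.
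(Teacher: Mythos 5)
Your proposal is correct and follows essentially the same route as the paper: both hinge on decomposing the optimal matching into pieces $\cQ_i = f^{-1}(\cNode{\node_i})$ and invoking the lower-bound clause of \thmref{f:r:simpl:result} to get $\delta_i \leq 11\,\distFr{\cQ}{\SC{\cZ}{u}{v}}$. The only (cosmetic) difference is that you bound $d$ by a single triangle inequality through $\SC{\cZ}{u}{v}$ while the paper bounds it piecewise via $\distFr{\kNode{K}{\node_i}}{\cQ_i} \leq \delta_i + r$, and your care about whether $\cQ_i$ has $k$ or $k+1$ vertices is warranted but affects only the constant.
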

\begin{proof}
    Clearly, $\Delta$ bounds the required distance from above, as one
    can extract a \matching{} of $\cQ$ and $\SC{\cZ}{u}{v}$ realizing
    $\Delta$. As such, we need to prove that $\Delta= O(r)$, where $r
    = \distFr{\cQ}{\SC{\cZ}{u}{v}}$.
    
    So, let $f:\cQ \rightarrow \SC{\cZ}{u}{v}$ be the mapping
    realizing $r = \distFr{\cQ}{\SC{\cZ}{u}{v}}$, and let $\cQ_i =
    f^{-1}( \cNode{\node_i})$, for $i=1, \ldots, m$. Clearly, $r =
    \max_i \distFr{\cQ_i}{\cNode{\node_i}}$.  Since $\cQ_i$ has at
    most $k$ vertices, by \thmref{f:r:simpl:result}, we have
    \begin{align}
        \frac{\delta_i}{11} \leq \distFr{\cQ_i}{\cNode{\node_i}} \leq
        r,
        \qquad%
        \text{and}%
        \qquad%
        \distFr{\kNode{K}{\node_i}}{\cNode{\node_i}}%
        \leq%
        \delta_i,%
        \Eqlab{alpha:i:r}
    \end{align}
    for $i=1,\ldots, m$.  In particular, we have $\delta_i \leq 11r$.
    Now, by the triangle inequality, we have that
    \begin{align*}
        \distFr{\kNode{K}{\node_i}}{\cQ_i} %
        \leq%
        \distFr{\kNode{K}{\node_i}}{\cNode{\node_i}} +
        \distFr{\cNode{\node_i}}{\cQ_i}%
        \leq%
        \delta_i + r%
        \leq%
        12r.
    \end{align*}
    As such, $d = \distFr{\cS}{\cQ} \leq \max_i
    \distFr{\kNode{K}{\node_i}}{\cQ_i} \leq 12r$.  Now, $\Delta = d +
    \max_{i}\delta_i \leq 12r + 11r = 23r$.
\end{proof}

\paragraph{The result.}
Putting the above together, we get the following result. We emphasize
that $k$ is being specified together with the query curve, and the
data structure works for any value of $k$.

\begin{theorem}
    \thmlab{k:seg:query:result}%
    Given a polygonal curve $\cZ$ with $n$ edges, we can preprocess it
    in $O(n \log^3 n )$ time and $O(n \log n)$ space, such that, given
    a query specified by
    \begin{compactenum}[\quad(i)]
        \item a pair of points $u$ and $v$ on the curve $\cZ$,
        \item the edges containing these two points, and
        \item a query curve $\cQ$ with $k$ segments,
    \end{compactenum}
    one can approximate $\distFr{\cQ}{\SC{\cZ}{u}{v}}$ up to a
    constant factor in $O\pth{ k^2 \log n \log (k \log n ) }$ time.
\end{theorem}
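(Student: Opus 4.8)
The plan is to assemble three facts that have, in effect, already been worked out for the data structure of \secref{k:seg:query}: its preprocessing cost, its query cost, and the quality of the approximation it returns. Recall the construction: one builds a balanced binary tree $\Tree$ on the edges of $\cZ$, and at every internal node $\node$ stores the universal‑vertex‑permutation data structure $\DSX{\node}$ of \thmref{f:r:simpl:result} for the subcurve $\cNode{\node}$. A query with endpoints $u,v$ and a $k$‑segment curve $\cQ$ is answered by decomposing $\SC{\cZ}{u}{v}$ into $m=O(\log n)$ canonical subcurves $\cNode{\node_1},\ldots,\cNode{\node_m}$, extracting from each $\DSX{\node_i}$ a $(\SNumVertices{k})$‑spine curve $\kNode{K}{\node_i}$ together with its error bound $\delta_i$, concatenating these into a curve $\cS$, and returning $\Delta = \distFr{\cS}{\cQ} + \max_i \delta_i$ as in \Eqref{delta} (general points $u,v$ being handled by splicing on the two partial end edges, exactly as in \thmref{data:structure:constant}).

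First I would account for preprocessing. Building $\Tree$ takes $O(n)$ time and $O(n)$ space; for a node $\node$ whose subcurve has $\lNode{\node}$ vertices, building $\DSX{\node}$ costs $O\pth{\lNode{\node}\log^2 \lNode{\node}}$ time and $O(\lNode{\node})$ space by \thmref{f:r:simpl:result}. Since $\Tree$ is balanced, the subcurve sizes sum to $n$ on each of its $O(\log n)$ levels, so the totals are $O(n\log^3 n)$ time and $O(n\log n)$ space. For the query time: the canonical decomposition costs $O(\log n)$; each of the $m$ spine‑curve extractions costs $O(k)$ by \thmref{f:r:simpl:result}, for $O(km)$ in total; and $\cS$ has $O(mk)$ edges while $\cQ$ has $k$, so computing $\distFr{\cS}{\cQ}$ with the Alt--Godau algorithm \cite{ag-cfdbt-95} costs $O\pth{k^2 m \log(k^2 m)} = O\pth{k^2 \log n \log(k\log n)}$, which dominates.

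The last step is the approximation guarantee, which is exactly \lemref{k:seg:query:vertices}: the returned $\Delta$ satisfies $r \le \Delta \le 23 r$ for $r = \distFr{\cQ}{\SC{\cZ}{u}{v}}$. The bound $r \le \Delta$ is immediate, since chaining the matching realizing $\distFr{\cS}{\cQ}$ with the matchings of each $\kNode{K}{\node_i}$ to $\cNode{\node_i}$ yields a matching of $\cQ$ and $\SC{\cZ}{u}{v}$ of width at most $\Delta$; the bound $\Delta = O(r)$ follows from the guarantees of \thmref{f:r:simpl:result}, namely that each $\delta_i \le 11r$ (as the restriction of $\cQ$ to $\cNode{\node_i}$ is a curve with at most $k$ vertices) and $\distFr{\kNode{K}{\node_i}}{\cNode{\node_i}} \le \delta_i$, combined with two applications of the triangle inequality. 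I do not expect a genuine obstacle here: the one delicate point is this approximation bound, and its argument is already carried out in \lemref{k:seg:query:vertices}, so the theorem follows by collecting these three facts.
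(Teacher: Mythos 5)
Your proposal is correct and follows the paper's own proof essentially verbatim: it assembles the preprocessing bound, the query-time bound, and the approximation guarantee of \lemref{k:seg:query:vertices} exactly as the paper does, and handles non-vertex endpoints $u,v$ by splicing on the partial end edges as in \thmref{data:structure:constant}. No gaps.
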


\begin{proof}
    The preprocessing is described and analyzed above. The query
    procedure needs to be modified slightly since the $u$ and $v$ are
    not necessarily vertices of $\cZ$. However, this can be done the
    same way as for the initial data structure in
    \thmref{data:structure:constant}.  Let $u',v'$ be the first and
    last vertices of $\cZ$ contained in $\SC{\cZ}{u}{v}$. We now
    extract the $m=O(\log n)$ nodes $\node_1,\dots,\node_m$ of
    $\Tree$, such that
    \[\cX=\segX{u}{u'} \concatOp \cNode{\node_1}
    \concatOp \dots \concatOp \cNode{\node_m} \concatOp \segX{v'}{v} =
    \SC{\cZ}{u}{v}.\] We continue with the procedure as described
    above using this node set.  The analysis of
    \lemref{k:seg:query:vertices} applies with minor modifications.
\end{proof}


\section{Conclusions}
\seclab{conclusions}

In this paper, we presented algorithms for approximating the \Frechet
distance when one is allowed to perform shortcuts on the original
curves. More specifically the presented algorithms approximate the
\asymmetric{} \vrestricted{} shortcut \Frechet distance.
Surprisingly, for $c$-packed curves it is possible to compute a
constant factor approximation in a running time which is near linear
in the complexity of the input curves.

We also presented a way to compute an ordering of the vertices of the
curve, such that any prefix of this ordering serves as a good
approximation to the curve in the \Frechet distance, and it is optimal
up to constant factors.
We used this universal vertex permutation to develop a data
structure that can quickly approximate (up to a constant factor) the
(regular) \Frechet distance between a query curve and the input
curve. Surprisingly, the query time is logarithmic in the complexity
of the original curve (and near quadratic in the complexity of the
query curve).

There are many open questions for further research. The most immediate
questions being how to extend our result to the other definitions of a
shortcut \Frechet distance mentioned in the introduction and how to
improve the approximation factor. The work in this paper is a step
towards solving these more difficult questions.  

As for exact computations, it is easy to see that one can obtain
polynomial-time algorithms by modifying the algorithms presented in this paper
even for general polygonal curves, see also \cite{d-raapg-13}.
Surprisingly, a more recent result shows that if the requirement that
shortcuts have to start and end at input vertices is dropped, the
problem of computing the shortcut \Frechet distance becomes
\NPHard~\cite{bds-jnp-13, d-raapg-13}.

\paragraph*{Acknowledgments.}

The authors thank Mark d{}e Berg, Marc van Kreveld, Benjamin Raichel,
Jessica Sherette, and Carola Wenk for insightful discussions on the
problems studied in this paper and related problems.  The authors also
thank the anonymous referees for their detailed and insightful
comments.

\bibliographystyle{alpha}%
\bibliography{shortcut}%

\end{document}